\documentclass[12pt,a4paper]{amsart}
\pdfoutput=1
\usepackage{graphicx}
\usepackage{stmaryrd}
\usepackage{amssymb}
\usepackage{amsmath}
\usepackage{amsthm}
\usepackage{dsfont}
\usepackage{hyperref}
\usepackage{amsfonts}
\usepackage{amsrefs}
\usepackage{epsfig}
\usepackage{latexsym}
\usepackage{cite}

\usepackage{amsthm}
\usepackage{enumerate}
\usepackage{centernot}
\usepackage{fullpage}
\usepackage{mathrsfs}

\usepackage{color}

\newtheorem{theorem}{Theorem}[section]
\newtheorem{corollary}[theorem]{Corollary}

\newtheorem{lemma}[theorem]{Lemma}

\theoremstyle{definition}

\newtheorem{remark}[theorem]{Remark}

\numberwithin{equation}{section}

\newcommand{\diam}{\text{\textnormal{diam}}}

\newcommand{\supp}{\text{\textnormal{supp}}}


\begin{document}
\title{Conformal Covariance of Connection Probabilities \\ and Fields in 2D Critical Percolation}

\author{Federico Camia}

\address{Division of Science, NYU Abu Dhabi, Saadiyat Island, Abu Dhabi, UAE \& Courant Institute of Mathematical Sciences, New York University,
	251 Mercer st, New York, NY 10012, USA.}
\email{federico.camia@nyu.edu}

\subjclass[2010]{Primary: 60K35, 82B43, 82B27. Secondary: 82B31, 60J67, 81T27}

\begin{abstract}
Fitting percolation into the conformal field theory framework requires showing that connection probabilities have a conformally invariant scaling limit.
For critical site percolation on the triangular lattice, we prove that the probability that $n$ vertices belong to the same open cluster has a well-defined scaling limit for every $n \geq 2$. Moreover, the limiting functions $P_n(x_1,\ldots,x_n)$ transform covariantly under M\"obius transformations of the plane as well as under local conformal maps, i.e., they behave like correlation functions of primary operators in conformal field theory. In particular, they are invariant under translations, rotations and inversions, and $P_n(sx_1,\ldots,sx_n)=s^{-5n/48}P_n(x_1,\ldots,x_n)$ for any $s>0$.
This implies that 
$P_{2}(x_1,x_2)=C_2 \Vert x_1-x_2 \Vert^{-5/24}$ and $P_3(x_1,x_2,x_3) = C_3 \Vert x_1-x_2 \Vert^{-5/48} \Vert x_1-x_3 \Vert^{-5/48} \Vert x_2-x_3 \Vert^{-5/48}$, for some constants $C_2$ and $C_3$. 

We also define a site-diluted spin model whose $n$-point correlation functions $\mathrm{C}_{n}$ can be expressed in terms of percolation connection probabilities and, as a consequence, have a well-defined scaling limit with the same properties as the functions $P_n$.
In particular, $\mathrm{C}_{2}(x_1,x_2)=P_{2}(x_1,x_2)$. We prove that the magnetization field associated with this spin model has a well-defined scaling limit in an appropriate space of distributions. The limiting field transforms covariantly under M\"obius transformations with exponent (scaling dimension) $5/48$. A heuristic analysis of the four-point function of the magnetization field suggests the presence of an additional conformal field of scaling dimension $5/4$, which counts the number of percolation four-arm events and can be identified with the so-called ``four-leg operator'' of conformal field theory.
%
\end{abstract}

\keywords{Critical percolation, connection probabilities, continuum scaling limit, conformal field theory, conformal loop ensemble, conformal measure ensemble, divide and color model.}

\maketitle

\section{Introduction}

\subsection{Background and motivation}
Percolation was introduced by Broadbent and Hammersley to model the spread of a gas or a fluid through a porous medium \cite{bh}. The model consists essentially of a graph or a lattice (e.g., the square, triangular or hexagonal lattice in two dimensions) in which edges or vertices are declared open (occupied/present) or closed (vacant/absent) at random, which generates a random version of the original graph or lattice. Percolation theory consists in the study of the connectivity properties of this random graph or lattice.

Percolation has been extensively studied by both physicists and mathematicians and has a large number of applications (see, e.g., \cite{br-book,grimmett-book,kesten-book,sa-book}). In dimensions higher than one, the model undergoes a geometric phase transition, resulting in an abrupt change of its connectivity properties at a critical density.  The two-dimensional version of the model is particularly well understood (see \cite{br-book,grimmett-book,kesten-book,sa-book}), including at the critical density, where the large scale properties are believed to be described by a conformal field theory (see, e.g., \cite{DFMS,Henkel}).

The question of conformal invariance in critical percolation has played a crucial role in the rapid development of the mathematical theory of scaling limits in two dimensions which has taken place in the last twenty-five years. Indeed, percolation, together with models like the uniform spanning tree and the Ising model, has been a laboratory where crucial ideas and tools have been developed \cite{Sch00,LSW02,LSW04,Smi01}.

The hypothesis of conformal invariance in critical systems goes back to the work of Polyakov and collaborators \cite{Polyakov70,BPZ84a,BPZ84b} and is usually expressed in terms of correlation functions of some observable ``field'' of the system. In the case of percolation, the formulation of this hypothesis was not as straightforward as for other models of statistical mechanics due to the purely geometric nature of percolation and the lack of a natural field analogous to the magnetization field of the Ising and Potts models.

The hypothesis that crossing probabilities should have a conformally invariant scaling limit, as the lattice spacing is sent to zero, is attributed to Michael Aizenman in \cite{LPS-A94}, an influential article, published in 1994, which brought the problem of conformal invariance in percolation to the attention of the mathematics community.
A couple of years earlier, applying non-rigorous ideas from conformal field theory, Cardy \cite{Cardy92} had obtained a conformally-invariant formula for the scaling limit of crossing probabilities. According to Langland, Pouliot and Saint-Aubin \cite{LPS-A94}, Cardy was motivated by Aizenman's hypothesis.

The first proof of conformal invariance was obtained for site percolation on the triangular lattice by Smirnov \cite{Smi01}, who showed that crossing probabilities have a conformally invariant scaling limit given by Cardy's formula \cite{Cardy92}.

Building on Smirnov's result and on the introduction of the Schramm-Loewner Evolution (SLE) by Schramm \cite{Sch00}, Newman and the present author \cite{CN04,CN06} showed that the collection of critical percolation interfaces converges in the scaling limit to a collection of non-simple, non-crossing loops whose distribution is invariant under conformal transformations. In doing so, they provided the first construction of a non-simple, nested conformal loop ensemble (CLE), and the first proof that such a CLE can be obtained from the scaling limit of a critical model of statistical mechanics.  The concept of conformal loop ensemble was later formulated in full generality by Sheffield \cite{She09}, and has been extensively studied, largely because conformal loop ensembles are conjectured to describe the scaling limit of critical interfaces in various two-dimensional models of statistical mechanics. (There are too many articles on CLE to list them all---see \cite{SW12} as an important example.) The scaling limit of percolation corresponds to CLE$_6$ \cite{CN08}, an ensemble of loops locally distributed like SLE$_6$, the Schramm-Loewner Evolution with parameter $\kappa=6$ \cite{CN07}.

In the physics literature, percolation is usually investigated as a (non-unitary) conformal field theory (CFT) with central charge $c=0$. 
It is also a prototypical example of a logarithmic field theory \cite{VJS12,CR13}, so that its study belongs to an area of research very active both in physics and mathematics (see, e.g., \cite{CR13} and \cite{LPW21}). As such, percolation is often studied by analyzing connection probabilities (sometimes called connectivity functions). In lattice site percolation, these are defined as the probabilities $P^a_n(x^a_1,\ldots,x^a_n)$ that $n$ vertices, $x^a_1,\ldots,x^a_n$, of a lattice with lattice spacing $a$ belong to the same cluster (precise definitions are given in Section \ref{sec:proofs} below). According to \cite{JS19}, obtaining closed-form expressions for such objects is considered a ``holy grail'' in the field.

This CFT approach was discussed by Aizenman in a talk presented at the 12th International Congress on Mathematical Physics (ICMP 97, see \cite{Aizenman98A}) and in Section 13 of \cite{Aizenman98B}, where Theorems \ref{thm:scal-lim-connection-probabilities} and \ref{thm:scal-lim-bounded-domains} of the present work are essentially conjectured. It was also discussed by Schramm and Smirnov in \cite{SS11}, where it is pointed out that it would be natural to study the scaling limit of connection probabilities in conjunction with the development of a corresponding CFT. Indeed, the first step towards a mathematical theory of such a percolation CFT is a proof of conformal invariance of connection probabilities together with the identification of a field whose correlation functions can be expressed in terms of those probabilities. In this article we solve both problems by proving the conjectures presented in Section 13 of \cite{Aizenman98B} (see Theorems \ref{thm:scal-lim-connection-probabilities} and \ref{thm:scal-lim-bounded-domains} below) and by identifying an appropriate conformal field (see Theorems \ref{thm:scal-lim-correlation-functions} and \ref{thm:Sobolev-conv-field}, and Corollary~\ref{cor:conf-cov}).

These results have immediate consequences of interest. For example, consider the quantity $P^a_3(x^a_1,x^a_2,x^a_3)/\sqrt{P^a_2(x^a_1,x^a_2)P^a_2(x^a_1,x^a_3)P^a_2(x^a_2,x^a_3)}$, which has attracted significant attention, in the physics and the mathematics literature, and was conjectured to converge, in the scaling limit $a \to 0$, to a universal constant $R$. 
For percolation in the upper half-plane (or any domain conformally equivalent to the upper half-plane), this ratio was considered in \cite{KSZ06,SKZ07}, where it was argued, both theoretically and numerically, that it should tend to a constant as $a \to 0$.
A proof of this result directly in the scaling limit, including an explicit expression for the value of $R$, was obtained in \cite{BI12} using SLE$_6$ calculations. The corresponding result on the triangular lattice, again in the upper half-plane, was obtained in \cite{Conijn15}.

Delfino and Viti \cite{DV11} studied the same ratio on the plane and derived a universal expression for $R$ using conformal field theory methods and connections between percolation and Potts models. This expression can be computed analytically and gives $R \approx 1.022$, which is in agreement with the numerical value found previously in a study of $R$ on the cylinder \cite{SKZ09}. A numerical verification of this value for the plane was obtained in \cite{ZSK11}, while a proof that the ratio tends to a constant as $a \to 0$ follows immediately from our Theorem \ref{thm:scal-lim-connection-probabilities} (see Corollary \ref{cor:ratio}).

To the best of our knowledge, there is no rigorous derivation of the numerical value of $R$ on the plane, but we point out that a proof of the Delfino-Viti formula for $R$ directly in the scaling limit, in the context of conformal loop ensembles, may be forthcoming (see Section 1.4 of \cite{AS21}). Obtaining rigorous results and exact formulas on the plane is often more challenging than on the upper half-plane or in finite domains because SLE techniques are not so readily applicable, due to the absence of a boundary.

In CFT, the behavior expressed by Theorems \ref{thm:scal-lim-connection-probabilities} and \ref{thm:scal-lim-bounded-domains} below is usually associated with the correlation functions of \emph{primary fields} (see, e.g., \cite{DFMS,Henkel}). It is then natural to ask if one can identify a lattice field whose correlation functions converge to the scaling limit of connection probabilities. In the physics literature, reference to a percolation field is often avoided by deriving results for the $q$-state Potts model and then extrapolating them to percolation by taking the limit $q \to 1$. This is the case, for example, in \cite{Cardy92} and \cite{DV11}, as well as in \cite{KSZ06,SKZ07}, which rely on \cite{Cardy92}. Working with the Potts model, which has a well-defined magnetization field, makes it possible to use conformal field theory tools that are not directly available for percolation. However, the limit $q \to 1$ hides a remarkable amount of subtlety (see, e.g., \cite{CR13}) and cannot be justified rigorously. In this respect, the identification and study of a percolation field (see Theorems \ref{thm:scal-lim-correlation-functions} and \ref{thm:Sobolev-conv-field} and Corollary \ref{cor:conf-cov}) could help formulate conformal field theory results for percolation more directly, without relying on the use of the $q$-state Potts model and the extrapolation $q \to 1$. We provide some evidence of this in Section \ref{sec:4-point-function}, where we show that the analysis of the four-point function of the percolation field we introduce in this paper suggests the presence of an additional field of scaling dimension $5/4$, which can be identified with the so-called ``four-leg operator'' (see \cite{VJS12} and references therein).

\subsection{Definitions and main results} \label{sec:main-results}

We consider critical site percolation on $a\mathcal{T}$, the triangular lattice $\mathcal{T}$ scaled by a factor $a>0$. We embed $\mathcal{T}$ in $\mathbb{R}^2$ as in Figure~\ref{fig-tri-hex} and in such a way that one of its vertices coincides with the origin of $\mathbb{R}^2$. We denote this vertex by $0$ and call it the \emph{origin} of $\mathcal{T}$. Each vertex of $a\mathcal{T}$ is identified with the elementary cell of $a\mathcal{H}$ of which it is the center, where $\mathcal{H}$ is the hexagonal lattice dual to $\mathcal{T}$ (see Figure \ref{fig-tri-hex}). Each vertex of $\mathcal{T}$ (or hexagonal cell of $\mathcal{H}$) is declared \emph{open} or \emph{closed} with equal probability, independently of all other vertices. With probability one, all open and closed clusters (maximal connected components of the sets of open and closed vertices, respectively) are finite (for the percolation model considered here, this follows from the methods of \cite{Harris60}), so the boundaries between open and closed clusters can be represented as loops drawn using edges of the dual hexagonal lattice. We call these loops (percolation) \emph{interfaces}.

\begin{figure}[!ht]
	\begin{center}
		\includegraphics[width=5cm]{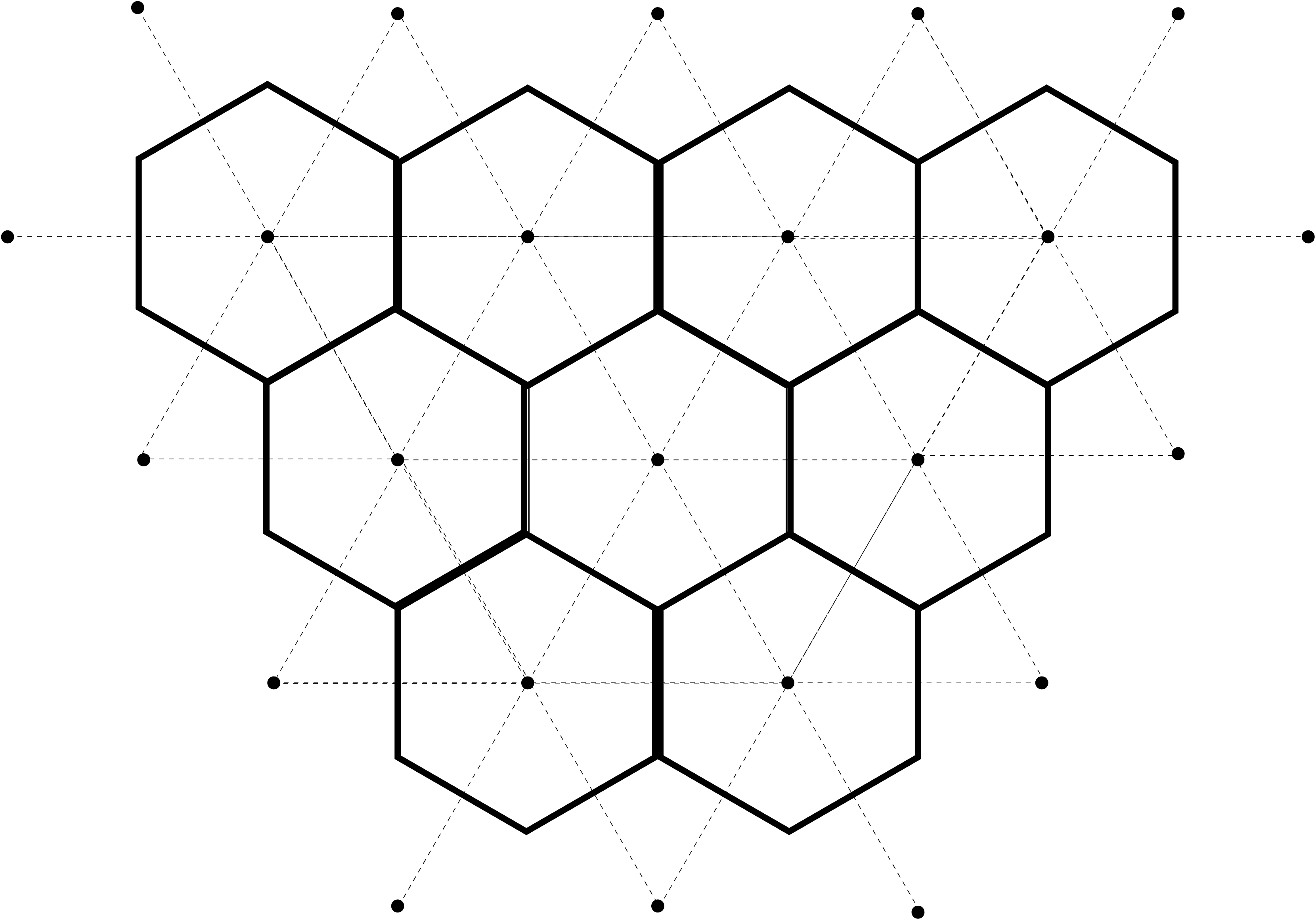}
		\caption{Embedding of the triangular and hexagonal lattices in ${\mathbb R}^2$.}
		\label{fig-tri-hex}
	\end{center}
\end{figure}

A special role will be played by the probability that the open cluster of the origin reaches the circle of radius one, which we will denote by $\pi_a$.

For any collection of vertices $x^a_1,\ldots,x^a_n \in a\mathcal{T}$, we let $P^a_{n}(x^a_1,\ldots,x^a_n)$ denote the probability that $x^a_1,\ldots,x^a_n$ belong to the same open cluster. Note that the probability that $x^a_1,\ldots,x^a_n$ belong to the same cluster, either open or closed, is simply $2P^a_{n}(x^a_1,\ldots,x^a_n)$.

Our first result concerns the scaling limit of such connection probabilities as $a \to 0$. In the following theorem and in the rest of the paper, we identify $\mathbb{R}^2$ with the complex plane~$\mathbb{C}$, so $x,x_i$ will denote both elements of $\mathbb{R}^2$ and complex numbers. When we refer to M\"obius transformations $M(x)=\frac{ax+b}{cx+d}$ (or to more general conformal maps), $x$ is interpreted as a complex number and $M$ is always assumed to be non-singular, that is, we assume that $ad-bc \neq 0$.

\begin{theorem} \label{thm:scal-lim-connection-probabilities}
	For any $n \geq 2$ and any collection of distinct points $x_1,\ldots,x_n$ in $\mathbb{C}$, let $ x^a_1,\ldots,x^a_n \in a\mathcal{T}$ be chosen in such a way that $x^a_i \to x_i$, as $a \to 0$, for each $i=1,\ldots,n$. Then the following limit exists and is nontrivial:
	\begin{align} \label{eq:scal-lim-connection-probabilities}
	& P_{n}(x_1,\ldots,x_n) := \lim_{a \to 0} \pi_a^{-n} P^a_{n}(x^a_1,\ldots,x^a_n).
	\end{align}
	Moreover, if $M$ is a non-singular M\"obius transformation mapping $x_i \mapsto M(x_i) \in \mathbb{C}$ for each $i=1,\ldots,n$,
	\begin{equation} \label{eq:conf-cov-Pn}
	P_{n}(M(x_1),\ldots,M(x_n)) = \Big( \prod_{i=1}^n \vert M'(x_i) \vert^{-5/48} \Big) P_{n}(x_1,\ldots,x_n).
	\end{equation}
\end{theorem}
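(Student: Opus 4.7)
The plan is to deduce the theorem from the convergence of the full scaling limit of critical site percolation to a conformally invariant loop ensemble (CLE$_6$ with its coloring), established in \cite{CN06,CN08}, combined with the Smirnov--Werner value $5/48$ of the one-arm exponent and the quasi-multiplicativity of one-arm events. The normalization $\pi_a^{-n}$ is chosen exactly so that each point $x_i$ contributes an independent local one-arm event of expected size $\pi_a$, on top of which the macroscopic cluster structure then performs the merging.

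For existence of the limit \eqref{eq:scal-lim-connection-probabilities}, I would fix a small $r>0$ (smaller than half the minimum pairwise distance among the $x_i$) and decouple the connection event into a \emph{local} part---each $x^a_i$ is connected by an open path to the circle of radius $r$ about $x^a_i$---and a \emph{macroscopic} part---these $n$ circles lie in a common open cluster in the configuration outside $\bigcup_i B_r(x_i)$. By quasi-multiplicativity of one-arm events (from Kesten's scaling relations and RSW) together with asymptotic independence of the disjoint local events, the local factor equals $\prod_i \pi_a(a,r) \sim C^n\,\pi_a^n\, r^{-5n/48}$ as $a\to 0$. The macroscopic part is measurable with respect to the full scaling limit of \cite{CN06} restricted to the complement of the disks, and, modulo showing that it is a continuity event (which follows from RSW-type estimates ruling out near-touching configurations), it converges to some $p_r(x_1,\ldots,x_n)$. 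A Cauchy-in-$r$ argument, comparing scales $r$ and $r/2$ via quasi-multiplicativity, then yields the convergence of $r^{-5n/48} p_r(x_1,\ldots,x_n)$ as $r\to 0$, producing $P_n$ up to a fixed multiplicative constant.

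Translation and rotation invariance are immediate from the lattice symmetries of $a\mathcal{T}$ together with the full rotation invariance of the scaling limit. Scale covariance follows from the exact identity $P^a_n(sx^a_1,\ldots,sx^a_n)=P^{a/s}_n(x^{a/s}_1,\ldots,x^{a/s}_n)$ combined with $\pi_{a/s}/\pi_a \to s^{-5/48}$. Full M\"obius covariance is the substantive step. The factor $\prod_i|M'(x_i)|^{-5/48}$ in \eqref{eq:conf-cov-Pn} arises because, near each $x_i$, the map $M$ is approximately a rotation--dilation of factor $|M'(x_i)|$, so the local one-arm contributions rescale by the scaling case just proved; the macroscopic part $p_r$ is invariant under $M$ in the $r\to 0$ limit by conformal invariance of the full scaling limit on $\widehat{\mathbb{C}}$. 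The main obstacle is this last invariance: promoting conformal invariance of the scaling limit from bounded domains (essentially Smirnov's theorem plus \cite{CN06}) to the sphere, controlling the behavior at infinity, and converting an invariance of loop laws into an invariance of the limiting connection probabilities. Once \eqref{eq:conf-cov-Pn} holds for translations, rotations, dilations, and a single inversion, the general case follows because the multiplicative cocycle $\prod_i |M'(x_i)|^{-5/48}$ is compatible with composition of M\"obius maps.
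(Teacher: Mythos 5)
Your overall strategy coincides with the paper's: normalize by $\pi_a^{-n}$, split off a one-arm contribution at each $x_i$ governed by \eqref{eq:one-arm-limit}, identify the remaining macroscopic connection event with a continuity event for the full scaling limit of \cite{CN06}, and reduce M\"obius covariance to translations, rotations, dilations and a single inversion via the cocycle property of $\prod_i|M'(x_i)|^{-5/48}$. However, there is a genuine gap at the heart of your existence argument. Your ``decoupling'' of the connection event into a local part and a macroscopic part is not an identity of events, and quasi-multiplicativity only gives a two-sided comparison with multiplicative constants bounded away from $0$ and $\infty$; from such bounds alone a Cauchy-in-$r$ argument does not follow, since $c\le A_r/B_r\le C$ for all $r$ does not imply convergence of $r^{-5n/48}p_r$. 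What is needed is a mechanism forcing the comparison constants to tend to $1$, i.e., an asymptotic independence statement between the configuration inside a small disk around $x_i$ (conditioned on the degenerate arm event $x_i^a\longleftrightarrow\partial B_{\varepsilon}(x_i^a)$) and the configuration outside a slightly larger disk. This is exactly the content of the paper's Lemma~\ref{lemma:coupling}: a coupling built by exploring for the innermost open circuit in an intermediate annulus (a ``stopping set''), which screens the interior from the exterior and succeeds with probability tending to $1$ by RSW. With that lemma the paper obtains the exact identity \eqref{eq:lim-pi-P} for every fixed small $\varepsilon$, so no limit in $\varepsilon$ (and no Cauchy argument) is required. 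Without an explicit substitute for this screening step, your existence proof does not close.

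Two further points are asserted rather than proved. First, the inversion covariance: you correctly flag that conformal invariance of the full-plane scaling limit on the sphere is the obstacle, but this is precisely the input the paper takes from \cite{GMQ21}; moreover, inversion does not map the disks $B_{\varepsilon}(\mathrm{Inv}(x_i))$ to disks centered at $x_i$, so one must sandwich the image between concentric disks of radii $d_i^1,d_i^2$ as in \eqref{eq:inclusions} and use the monotonicity of the conditional probability in the conditioning disk, inequality \eqref{eq:bound}, before letting $\varepsilon\to 0$. Second, your local factor is written as $C^n\pi_a^n r^{-5n/48}$ with an undetermined constant $C$, whereas \eqref{eq:one-arm-limit} gives the exact normalization $\lim_{a\to 0}\pi_a^{-1}P^a(0\longleftrightarrow\partial B_r(0))=r^{-5/48}$ with constant $1$; this matters because the claimed limit $P_n$ must not depend on the auxiliary scale $r$.
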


The theorem immediately implies that there is a constant $C_2$ such that 
\begin{align} \label{eq:2-point-function}
P_2(x_1,x_2) = C_2 \Vert x_1-x_2 \Vert^{-5/24}.
\end{align}
(This result follows also from Proposition 5.3 of \cite{GPS13}.)
Moreover, by standard arguments (see, for example, \cite{DFMS} or the proof of Theorem 4.5 of \cite{CGK16}), Theorem \ref{thm:scal-lim-connection-probabilities} also implies that there is a constant $C_3$ such that
\begin{align} \label{eq:3-point-function}
P_3(x_1,x_2,x_3) = C_3 \Vert x_1-x_2 \Vert^{-5/48} \Vert x_1-x_3 \Vert^{-5/48} \Vert x_2-x_3 \Vert^{-5/48}.
\end{align}
This has the following immediate consequence.
\begin{corollary} \label{cor:ratio}
	The ratio
	\begin{align}
	\frac{P_3(x_1,x_2,x_3)}{\sqrt{P_2(x_1,x_2)P_2(x_1,x_3)P_2(x_2,x_3)}} = \frac{C_3}{C_2^{3/2}}
	\end{align}
	is independent of $x_1,x_2,x_3$.
\end{corollary}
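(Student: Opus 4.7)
The plan is to prove Corollary~\ref{cor:ratio} by direct substitution, since the substantive content is already encoded in the explicit formulas \eqref{eq:2-point-function} and \eqref{eq:3-point-function} that are derived just before the corollary. First I would insert \eqref{eq:3-point-function} for $P_3$ in the numerator and \eqref{eq:2-point-function} for each of the three two-point factors in the denominator. Because $\tfrac{1}{2}\cdot\tfrac{5}{24}=\tfrac{5}{48}$, the square root of $P_2(x_1,x_2)P_2(x_1,x_3)P_2(x_2,x_3)$ equals
\[
C_2^{3/2}\,\Vert x_1-x_2\Vert^{-5/48}\,\Vert x_1-x_3\Vert^{-5/48}\,\Vert x_2-x_3\Vert^{-5/48},
\]
which is precisely the same combination of distance factors that appears in $P_3(x_1,x_2,x_3)$, with prefactor $C_3$ replaced by $C_2^{3/2}$. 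Every distance factor therefore cancels, and the ratio collapses to the constant $C_3/C_2^{3/2}$, manifestly independent of $x_1,x_2,x_3$.

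A slightly more conceptual version of the same argument, which I would mention in passing, is that the ratio is invariant under diagonal M\"obius transformations: by Theorem~\ref{thm:scal-lim-connection-probabilities}, each point $x_i$ contributes a Jacobian factor $\vert M'(x_i)\vert^{-5/48}$ to $P_3(x_1,x_2,x_3)$, and the same total factor to $\sqrt{P_2(x_1,x_2)P_2(x_1,x_3)P_2(x_2,x_3)}$ (summing $\vert M'(x_i)\vert^{-5/48}$ from each of the two pairs that contain $x_i$ and then halving by the square root). Since the M\"obius group acts transitively on ordered triples of distinct points of $\mathbb{C}$, any diagonally invariant function of such triples is constant, and evaluation at any convenient triple via \eqref{eq:2-point-function}--\eqref{eq:3-point-function} then fixes the value of the constant to be $C_3/C_2^{3/2}$.

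There is no genuine obstacle: all of the real work lies upstream in Theorem~\ref{thm:scal-lim-connection-probabilities} and in the standard CFT arguments (cited to \cite{DFMS} and \cite{CGK16}) that reduce the M\"obius-covariant functions $P_2$ and $P_3$ to the closed forms \eqref{eq:2-point-function} and \eqref{eq:3-point-function}. Once those are in hand, the corollary is a one-line verification.
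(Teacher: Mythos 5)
Your proposal is correct and coincides with the paper's own argument: the corollary is stated there as an ``immediate consequence'' of \eqref{eq:2-point-function} and \eqref{eq:3-point-function}, which is exactly the direct substitution and cancellation of distance factors (using $\tfrac12\cdot\tfrac{5}{24}=\tfrac{5}{48}$) that you carry out. The alternative M\"obius-invariance argument you sketch is a valid variant but is not needed once the closed forms are in hand.
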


The proof of Theorem \ref{thm:scal-lim-connection-probabilities}, presented in Section \ref{sec:proofs}, makes essential use of the full scaling limit of critical percolation constructed in \cite{CN06} and of its invariance properties under M\"obius transformations \cite{GMQ21}.

\begin{remark} \label{rem:4-point-function}
	The proof of Theorem \ref{thm:scal-lim-connection-probabilities} can be adapted to deal with other connectivity functions. Consider, e.g., the probability $P^a_4(x^a_1,x^a_2 \vert x^a_3,x^a_4)$ that $x^a_1,x^a_2$ belong to the same open cluster and $x^a_3,x^a_4$ belong to a different open cluster. Probabilities of this type play a crucial role in the percolation CFT \cite{DV11bis,PRS16,JS19}. Arguments analogous to those used in the proof of Theorem \ref{thm:scal-lim-connection-probabilities} show that $\pi_a^{-4} P^a(x^a_1,x^a_2 \vert x^a_3,x^a_4)$ has a conformally covariant scaling limit as $a \to 0$. This can be understood observing that the difference between $P^a_4(x^a_1,x^a_2 \vert x^a_3,x^a_4)$ and $P^a_4(x^a_1,x^a_2, x^a_3,x^a_4)$ consists in the presence of a percolation interface separating $x^a_1,x^a_2$ from $x^a_3,x^a_4$, an event that can be dealt with adapting the methods used in the proof of Theorem \ref{thm:scal-lim-connection-probabilities} (see the proof of Theorem \ref{thm:scal-lim-correlation-functions} in Section \ref{sec:proofs}).
\end{remark}

A result analogous to Theorem \ref{thm:scal-lim-connection-probabilities} is valid for the scaling limit of connection probabilities in bounded domains and in any domain equivalent to the upper half-plane. In order to state this result, we let $P^a_{D,n}(x^a_1,\ldots,x^a_n)$ denote the probability that $x^a_1,\ldots,x^a_n$ are in the same open cluster for a percolation model on $a\mathcal{T}$ such that all vertices outside $a\mathcal{T} \cap D$ are declared closed. Unless otherwise stated, in the rest of the paper, when we discuss conformal maps, a domain $D$ will be an open subset of $\mathbb{C}$.

\begin{theorem} \label{thm:scal-lim-bounded-domains}
	Let $D$ be a domain conformally equivalent to the upper half-plane. For any $n \geq 2$ and any collection of distinct points $x_1,\ldots,x_n \in D$, let $ x^a_1,\ldots,x^a_n \in a\mathcal{T}$ be chosen in such a way that $x^a_i \to x_i$, as $a \to 0$, for each $i=1,\ldots,n$. Then the following limit exists and is nontrivial:
	\begin{align} \label{eq:scal-lim-bounded-domain}
	& P_{D,n}(x_1,\ldots,x_n) := \lim_{a \to 0} \pi_a^{-n} P^a_{D,n}(x^a_1,\ldots,x^a_n).
	\end{align}
	Moreover, if $\phi:D \to D'$ is a conformal map from $D$ to $D'$, then
	\begin{equation} \label{eq:conf-cov-bounded-domains}
	P_{D',n}(\phi(x_1),\ldots,\phi(x_n)) = \Big( \prod_{i=1}^n \vert \phi'(x_i) \vert^{-5/48} \Big) P_{D,n}(x_1,\ldots,x_n).
	\end{equation}
\end{theorem}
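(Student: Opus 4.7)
The plan is to mirror the strategy used for Theorem \ref{thm:scal-lim-connection-probabilities}, replacing the full-plane full scaling limit of \cite{CN06} by its analog $\mathcal{F}_D$ in $D$ (with closed boundary condition), and replacing M\"obius invariance by conformal invariance between simply connected domains. Since $D$ is conformally equivalent to the upper half-plane, it is simply connected, and the full scaling limit $\mathcal{F}_D$ of the collection of interfaces of critical percolation on $a\mathcal{T}\cap D$ converges to a conformally invariant limit: for any conformal map $\phi:D\to D'$, the pushforward $\phi_{\ast}\mathcal{F}_D$ has the same distribution as $\mathcal{F}_{D'}$.

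First, I would express the event $\{x^a_1,\ldots,x^a_n \text{ lie in the same open cluster of } a\mathcal{T}\cap D\}$ as the conjunction, for a small fixed $\epsilon>0$ with the balls $B(x_i,\epsilon)\subset D$ pairwise disjoint, of the local one-arm events $A^a_\epsilon(x^a_i) = \{x^a_i \leftrightarrow \partial B(x_i,\epsilon)\}$ and a macroscopic connection event depending only on the interface configuration outside $\bigcup_i B(x_i,\epsilon/2)$ inside $D$. By quasi-multiplicativity and RSW bounds uniformly in $a$, each local event has probability comparable to $\pi_a$ times a universal constant depending only on $\epsilon$ (and not on $D$, since $x_i$ is in the interior of $D$ and $\epsilon$ is much smaller than the distance to $\partial D$), while the macroscopic event is almost surely continuous with respect to $\mathcal{F}_D$. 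Passing to the limit $a\to 0$ at fixed $\epsilon$, and then sending $\epsilon\to 0$, yields the existence and nontriviality of $P_{D,n}(x_1,\ldots,x_n)$, exactly as in the plane case.

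For the conformal covariance \eqref{eq:conf-cov-bounded-domains}, I would compare the expressions of $P_{D,n}(x_1,\ldots,x_n)$ and $P_{D',n}(\phi(x_1),\ldots,\phi(x_n))$ obtained above. Under the identification $\phi_{\ast}\mathcal{F}_D \stackrel{d}{=} \mathcal{F}_{D'}$, the macroscopic connection events on the two sides are matched by $\phi$. The Jacobian factors $|\phi'(x_i)|^{-5/48}$ arise from the local scale: on $B(x_i,\epsilon)$ the map $\phi$ acts, to leading order, as multiplication by $\phi'(x_i)$ composed with a translation, so the one-arm event $A^a_\epsilon(\phi(x^a_i))$ in $D'$ corresponds to a one-arm event in $D$ at effective scale $\epsilon/|\phi'(x_i)|$. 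By the $5/48$ one-arm exponent of \cite{LSW02} together with quasi-multiplicativity, the ratio of the two normalized one-arm probabilities converges to $|\phi'(x_i)|^{-5/48}$ as $\epsilon\to 0$, producing the claimed product. The main obstacle is the uniform quantitative control needed to decouple the local one-arm events from the macroscopic event, and to show that the implicit constant in the one-arm asymptotics depends only on $\epsilon$ and on the local Jacobian at $x_i$, independently of the global geometry of $D$; this is precisely the quantitative one-arm input already required for Theorem \ref{thm:scal-lim-connection-probabilities}, so no essentially new estimate is needed beyond adapting those arguments to the domain setting.
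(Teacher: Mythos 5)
Your proposal follows essentially the same route as the paper: both replace the full-plane scaling limit by the conformally invariant full scaling limit in $D$, factor the connection probability into local one-arm events at scale $\varepsilon$ times a macroscopic conditional probability that is continuous for the limiting loop ensemble, and extract the factors $|\phi'(x_i)|^{-5/48}$ from the fact that $\phi$ maps $B_\varepsilon(x_i)$ to an approximate disk of radius $\varepsilon|\phi'(x_i)|$ combined with the $5/48$ one-arm exponent. The only point you leave implicit---controlling the distortion of $\phi^{-1}(B_\varepsilon(\phi(x_i)))$ away from a true disk---is handled in the paper by sandwiching it between disks $B_{r_i}(x_i)\subset\phi^{-1}(B_\varepsilon(\phi(x_i)))\subset B_{R_i}(x_i)$ with $r_i/\varepsilon,R_i/\varepsilon\to 1/|\phi'(x_i)|$ and using a monotonicity inequality for the conditional connection probabilities, exactly the quantitative input you anticipate.
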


The proof of this theorem is similar to that of Theorem \ref{thm:scal-lim-connection-probabilities} and is presented in Section~\ref{sec:proofs}.

As mentioned earlier, in CFT the behavior in Theorems \ref{thm:scal-lim-connection-probabilities} and \ref{thm:scal-lim-bounded-domains} is associated with the correlation functions of \emph{primary fields} (see, e.g., \cite{DFMS,Henkel}). In the case of percolation, letting $\mathbf{1}_{\{\cdot\}}$ denote the indicator function, one can naively consider the natural (centered) lattice field $\omega(x^a) = \mathbf{1}_{\{x^a \text{ is open}\}} - 1/2$, whose ``integral'' captures the spatial fluctuations of the density of open vertices; or some version $\omega_{\varepsilon}(x^a)$ where the event that $x^a$ is open is replaced by the event that its open cluster reaches the circle of radius $\varepsilon$ centered at $x^a$, for some $\varepsilon>0$. However, the random variables $\omega_{\varepsilon}(x^a_1)$ and $\omega_{\varepsilon}(x^a_2)$ are clearly independent when the distance between $x^a_1$ and $x^a_2$ is greater than $2\varepsilon$, so the correlation functions of this lattice field are identically zero at large distances and have a trivial scaling limit. In what follows we introduce a spin model whose correlation functions capture some of the percolation connectivity properties.

Let $\{\mathcal{C}^a_i\}_i$ denote the collection of open clusters on $a\mathcal{T}$ and assign to each cluster $\mathcal{C}^a_i$ a random sign $\sigma_i=\pm 1$, where $\{\sigma_i\}_i$ is a collection of symmetric, $(\pm 1)$-valued, i.i.d.\ random variables. For each $x^a \in a\mathcal{T}$, we let 
\begin{align} \label{def:spin-field}
S_{x^a}=\begin{cases}
\sigma_i \;\; \text{ if } x^a \in \mathcal{C}^a_i \\
0 \;\;\; \text{ if $x^a$ is closed }
\end{cases}
\end{align}
Models of this type are called ``\emph{divide and color}'' and were studied, for example, in \cite{Haggstrom01,BCM09,Balint-thesis,Balint10,BBT13,ST19}.

In our context, it is natural to introduce the lattice field $\mathbf{S} := \{ S_{x^a} \}_{x^a \in a\mathcal{T}}$ because, as we will show below, its nonvanishing correlation functions can be expressed in terms of the connection probabilities $P^a_{2n}$ in such a way that they have a well-defined scaling limit with the same covariance properties as those of $P_{2n}$. To see this, for a collection $x^a_1,\ldots,x^a_{2n}$ of distinct vertices of $a\mathcal{T}$, let $\mathcal{Q}(x_1,\ldots,x_{2n})$ denote the set of all partitions $(Q_1,\ldots,Q_k)$ of $x^a_1,\ldots,x^a_{2n}$ such that each element $Q_j$ contains an even number of vertices. Moreover, for a partition $(Q_1,\ldots,Q_k) \in \mathcal{Q}(x_1,\ldots,x_{2n})$, let $G(Q_1,\ldots,Q_k)$ denote the event that all vertices in the same element of the partition belong to the same open cluster and no two vertices in different elements of the partition belong to the same cluster.

Then, if we let $\langle \cdot \rangle^a$ denote the expectation with respect to the distribution of open clusters, $P^a$, and of the $\sigma_i$'s, for any collection $x^a_1,\ldots,x^a_{n}$ of distinct vertices of $a\mathcal{T}$, we have
\begin{align} \label{eq:n-point-function}
\langle S_{x_1^a} \ldots S_{x_{n}^a} \rangle^a = \begin{cases}
\sum_{(Q_1,\ldots,Q_k)\in\mathcal{Q}} P^a(G(Q_1,\ldots,Q_k)) \; \text{ if $n$ is even} \\
0 \;\;\;\;\;\;\;\;\;\;\;\;\;\;\;\;\;\;\;\;\;\;\;\;\;\;\;\;\;\;\;\;\;\;\;\;\;\;\;\;\;\;\;\;\;\;\;\;\; \text{ if $n$ is odd}
\end{cases}
\end{align}
This can be seen by taking the expectation in two steps, first conditioning on the percolation configuration, which determines the clusters $\{ \mathcal{C}^a_i \}_i$, and averaging over the signs $\{ \sigma_i \}_i$, and then summing over all possible percolation configurations. The independence of the $\sigma_i$'s implies that, if an odd number of the vertices $x_1,\ldots,x_n$ is contained in an open cluster $\mathcal{C}^a_i$, the expectation vanishes by symmetry. Note that, in particular, $\langle S_{x_1^a} S_{x_{2}^a} \rangle^a  = P^a_2(x^a_1,x^a_2)$.

The field \eqref{def:spin-field} can also be defined in a domain $D$ that is not the whole plane by declaring closed all vertices outside $a\mathcal{T} \cap D$. In this case, we will denote the expectation defined above by~$\langle \cdot \rangle^a_{D}$.

\begin{theorem} \label{thm:scal-lim-correlation-functions}
	Let $D$ be a simply-connected domain of $\mathbb{C}$ (possibly $\mathbb{C}$ itsef).
	For any $n \geq 2$ and any collection of distinct points $x_1,\ldots,x_n \in D$, let $ x^a_1,\ldots,x^a_n \in a\mathcal{T}$ be chosen in such a way that $x^a_i \to x_i$, as $a \to 0$, for each $i=1,\ldots,n$. Then the following limit exists and is nontrivial:
	\begin{equation} \label{eq:scal-lim-correlation-functions}
	\mathrm{C}_{D,n}(x_1,\ldots,x_n) := \lim_{a \to 0} \pi_a^{-n} \langle S_{x_1^a} \ldots S_{x_{n}^a} \rangle^a_D.
	\end{equation}
	Moreover, if $\phi:D \to D'$ is a conformal map from $D$ to $D'$, then
	\begin{equation} \label{eq:conf-cov-Pn}
	\mathrm{C}_{D',n}(\phi(x_1),\ldots,\phi(x_n)) = \Big( \prod_{i=1}^n \vert \phi'(x_i) \vert^{-5/48} \Big) \mathrm{C}_{D,n}(x_1,\ldots,x_n).
	\end{equation}	
\end{theorem}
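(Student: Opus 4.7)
The plan is to combine the partition expansion \eqref{eq:n-point-function} with Theorems \ref{thm:scal-lim-connection-probabilities} and \ref{thm:scal-lim-bounded-domains} and the adaptation indicated in Remark \ref{rem:4-point-function}. For odd $n$ the statement is trivial, since both sides of \eqref{eq:scal-lim-correlation-functions} vanish by \eqref{eq:n-point-function}. So assume $n$ is even.

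First I would use \eqref{eq:n-point-function} to write
\begin{equation*}
\langle S_{x_1^a}\cdots S_{x_n^a}\rangle^a_D \;=\; \sum_{(Q_1,\ldots,Q_k)\in\mathcal{Q}(x_1,\ldots,x_n)} P^a_D(G(Q_1,\ldots,Q_k)),
\end{equation*}
a finite sum whose index set does not depend on $a$. Thus it suffices to show that for every partition $(Q_1,\ldots,Q_k)$ of $\{x_1,\ldots,x_n\}$ into parts of even size, the rescaled probability $\pi_a^{-n}P^a_D(G(Q_1,\ldots,Q_k))$ converges as $a\to 0$ and transforms under any conformal $\phi:D\to D'$ with factor $\prod_{i=1}^n|\phi'(x_i)|^{-5/48}$. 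The theorem then follows by summing over partitions.

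Next, for each fixed partition, I would establish the convergence by adapting the proof of Theorem \ref{thm:scal-lim-connection-probabilities}. The event $G(Q_1,\ldots,Q_k)$ factors into a monotone ``connection'' part---the intersection over $j$ of the events ``all vertices in $Q_j$ belong to a common open cluster''---and a ``separation'' part requiring the presence of closed interfaces isolating the groups $Q_j$ from one another. Both are encoded in the full scaling limit of critical percolation built in \cite{CN06,CN08}: connection events are handled exactly as in the proof of Theorem \ref{thm:scal-lim-connection-probabilities}, and separation events are captured by the surrounding CLE$_6$ loops in essentially the way spelled out in Remark \ref{rem:4-point-function} for the case $k=2$, $|Q_1|=|Q_2|=2$. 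The conformal covariance of each limit is then inherited from the conformal invariance of the full scaling limit proved in \cite{GMQ21}: each marked point carries a local scaling weight $\pi_a^{-1}\sim a^{-5/48}$ which, under a conformal change of coordinates, translates into the factor $|\phi'(x_i)|^{-5/48}$, exactly as in Theorems \ref{thm:scal-lim-connection-probabilities} and \ref{thm:scal-lim-bounded-domains}. Summing over partitions preserves both the convergence and this transformation rule.

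The main obstacle is the rigorous treatment of the separation events for general $k\geq 2$. One must show that ``each $Q_j$ lies in a single open cluster, and no two $Q_j$'s share a cluster'' is well-approximated by events measurable with respect to finitely many macroscopic interfaces at a mesoscopic cutoff scale $\varepsilon>0$, with the approximation error vanishing as $\varepsilon\to 0$, and that near-degenerate configurations (two interfaces nearly touching, or a macroscopic cluster pinching off near some $x_i$) contribute negligibly after rescaling by $\pi_a^{-n}$. Once this is done in the plane, the passage to a general simply-connected domain $D$ is routine, using Theorem \ref{thm:scal-lim-bounded-domains} in place of Theorem \ref{thm:scal-lim-connection-probabilities}.
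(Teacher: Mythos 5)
Your outline follows the paper's strategy: expand via \eqref{eq:n-point-function}, reduce to a single even partition $(Q_1,\ldots,Q_k)$, and adapt the proof of Theorem \ref{thm:scal-lim-connection-probabilities} to the non-increasing event $G(Q_1,\ldots,Q_k)$. However, the step you defer to ``one must show'' is precisely the new mathematical content of the paper's proof, and without it the argument does not close. The difficulty is this: because $G(Q_1,\ldots,Q_k)$ is not increasing, the FKG sandwich \eqref{eq:sandwich} is unavailable, so one approximates $G$ from above by the event that the groups are connected \emph{disjointly} outside disks of radius $\delta_m$ around the $x_i$ (written $B_{\delta_m}(x_1) \longleftrightarrow B_{\delta_m}(x_2) \circ B_{\delta_m}(x_3) \longleftrightarrow B_{\delta_m}(x_4)$ in the case $k=2$, $|Q_1|=|Q_2|=2$). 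The gap between this outer approximation and the true event is the probability that the disjoint mesoscopic connections exist \emph{and} yet the two groups are in fact joined in a single cluster. You call this ``a macroscopic cluster pinching off near some $x_i$'' and assert it is negligible, but you give no mechanism or estimate. The paper's resolution is the inclusion \eqref{eq:inclusion}: such a configuration forces a \emph{four-arm event} $\mathcal{F}_{\delta_m,\ell}(x_i)$ at one of the marked points, whose probability is of order $\delta_m^{5/4}$; since the conditioning on the four one-arm events contributes only a factor of order $\delta_m^{-4\cdot 5/48}=\delta_m^{-5/12}$, and $5/4>5/12$, the discrepancy vanishes as $\delta_m\to 0$ (equation \eqref{eq:difference}). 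This exponent comparison is the heart of the matter and is absent from your proposal.

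Two smaller points. First, citing Remark \ref{rem:4-point-function} as justification for handling the separation events is circular: that remark itself defers to the proof of this very theorem for the details. Second, the conformal covariance step also needs modification beyond what you describe: the monotonicity bound \eqref{eq:bound} used to compare conditioning on nested disks must be replaced by its analogue \eqref{eq:newbound} for events of the form $x_1 \longleftrightarrow 0,\; x_2 \centernot\longleftrightarrow 0$, which again requires an argument tailored to the non-increasing event rather than a direct appeal to Theorems \ref{thm:scal-lim-connection-probabilities} and \ref{thm:scal-lim-bounded-domains}.
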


The proof of this theorem, presented in Section \ref{sec:proofs}, follows those of Theorems \ref{thm:scal-lim-connection-probabilities} and \ref{thm:scal-lim-bounded-domains}, but it requires some modifications due to the different type of events involved in \eqref{eq:n-point-function}, as mentioned in Remark \ref{rem:4-point-function}.

As observed above, the $n$-point functions of the lattice field $\mathbf{S}$ are identically zero when $n$ is odd, so $\mathrm{C}_{2n+1} \equiv 0$ for all $n$. More generally, the functions $\mathrm{C}_{n}$ are not related to the functions $P_n$ for $n$ odd. This is analogous to what happens in the case of the Ising model between the $n$-point functions of the spin field and the connection probabilities of the Fortuin-Kasteleyn percolation model with $q=2$. A discussion of this phenomenon from the perspective of conformal field theory is contained in Section 5.2 of \cite{PSVD13}.

\begin{remark}
	The connectivity properties of both open and closed clusters together can be captured by a version, $\tilde{\mathbf{S}}$, of the field $\mathbf{S}$ in which random signs are associated to both open and closed clusters. The field $\mathbf{S}$ is a site-diluted version of $\tilde{\mathbf{S}}$. Because of the symmetry between open and closed clusters, the results we present below for $\mathbf{S}$ and the lattice fields derived from it are also valid for $\tilde{\mathbf{S}}$ and the corresponding fields.
\end{remark}

According to Theorem \ref{thm:scal-lim-correlation-functions}, the correlation functions of $\mathbf{S}$ have a well-defined scaling limit; it is therefore natural to ask if the field itself has a well-defined scaling limit. To answer this question, we introduce the lattice field
\begin{equation}\label{def:lattice-field}
\Phi^{a} := a^2 \pi_a^{-1} \sum_{x^a \in a\mathcal{T}} S_{x^a} \delta_{x^a},
\end{equation}
where $\delta_{x^a}$ is a unit Dirac point measure at $x^a$. More precisely, for functions $f$ of bounded support on $\mathbb{R}^2$, we define
\begin{align} \label{eq:lat-field1}
\begin{split}
& \Phi^{a}(f) \equiv \int_{{\mathbb R}^2} f(x) \Phi^{a}(x) dx := \int_{{\mathbb R}^2} f(z) \big[ a^2 \pi_a^{-1} \sum_{x^a \in a\mathcal{T}} S_{x^a} \delta(x-x^a) \big] dx \\
& \qquad \quad = a^2 \pi_a^{-1} \sum_{x^a \in a\mathcal{T}} f(x^a) S_{x^a} = a^2 \pi_a^{-1} \sum_i \sigma_i \sum_{x^a \in \mathcal{C}^a_i} f(x^a),
\end{split}
\end{align}
where the open clusters $\mathcal{C}^a_i$ are seen as subsets of the vertices of $a\mathcal{T}$.

In particular, if $\mathbf{1}_{D}$ denotes the indicator function of a finite domain $D$,
\begin{align}
& \Phi^{a}(\mathbf{1}_{D}) = a^{2} \pi_a^{-1} \sum_{x^a \in a\mathcal{T} \cap D} S_{x^a} = a^{2} \pi_a^{-1} \sum_{i: \mathcal{C}^a_i \in \mathscr{C}_D^a} \sigma_i \, \vert \mathcal{C}^a_i \cap D \vert,
\end{align}
where $\mathscr{C}_D^a$ is the collection of open clusters that intersect $D$ and $\vert \mathcal{C}^a_i \cap D\vert$ is the number of vertices in $\mathcal{C}^a_i \cap D$.

More generally, if we introduce the normalized counting measures
\begin{align} \label{def:counting-measure}
\mu^a_i := a^{2} \pi_a^{-1} \sum_{x^a \in \mathcal{C}^a_i} \delta_{x^a},
\end{align}
we can write
\begin{align} \label{eq:lattice-field}
\Phi^{a}(f) = \sum_i \sigma_i \, \mu^a_i(f).
\end{align}

This way of expressing the field is useful because it was shown in \cite{CCK19} that, as $a \to 0$, the collection of normalized counting measures $\{ \mu_i^a \}_i$ converges in distribution (in an appropriate topology) to a collection of finite measures $\{ \mu_k \}_k$, which is measurable with respect to the continuum scaling limit of percolation in terms of interfaces constructed in \cite{CN06}. 

Using \eqref{eq:n-point-function} and \eqref{eq:2-point-function}, it is easy to see that, for any bounded function $f$ of bounded support, the random variable $\Phi^a(f)$ is tight and therefore has subsequential limits in distribution as $a \to 0$. To obtain stronger results, it is convenient to work with a ``smoother'' version of the field $\Phi^a$. For this reason, we let $D_n \equiv [-n,n]^2$ and introduce the functions
\begin{align} \label{def:lattice-smooth-field-full-plane}
\hat{\Phi}^a(x) := a^2 \pi_a^{-1} \sum_{i} \sigma_i \sum_{x^a \in \mathcal{C}^a_i} \frac{\mathbf{1}_{x^a}(x)}{A_a}
\end{align}
and
\begin{align} \label{def:lattice-smooth-field}
\hat{\Phi}^a_{n}(x) := a^2 \pi_a^{-1} \sum_{i:\mathcal{C}^a_i \in \mathscr{C}^a_{D_n}} \sigma_i \sum_{x^a \in \mathcal{C}^a_i \cap D_n} \frac{\mathbf{1}_{x^a}(x)}{A_a},
\end{align}
where $\mathbf{1}_{x^a}$ is the indicator function of the elementary hexagon $x_a$ of $a\mathcal{H}$ (the hexagon centered at $x_a \in a\mathcal{T}$---recall that, with a slight abuse of notation, we use $x^a$ both for elementary hexagons of $a\mathcal{H}$ and their centers in $a\mathcal{T}$) and $A_a$ denotes the area of an elementary hexagon of $a \mathcal{H}$.

With these definitions we have the following results (the definitions of the Sobolev spaces $H^1$ and $H^{-1}$ and of the norm $\Vert \cdot \Vert_{H^{-1}}$ are given in Section \ref{sec:field-s-lim} below).

\begin{theorem} \label{thm:Sobolev-conv-field}
	The lattice field $\hat\Phi^{a}$ has a unique scaling limit in the following sense. There exists a random element $\Phi$ of the Sobolev space $H^{-1}$ such that, for any $n \in \mathbb{N}$, as $a \to 0$, $\hat\Phi_n^{a}$ converges in distribution to $\Phi \vert_{[-n,n]^2}$, the restriction of $\Phi$ to $[-n,n]^2$ (more precisely, to functions in $H^1([-n,n]^2)$). The convergence is in the topology induced by the norm $\Vert \cdot \Vert_{H^{-1}}$.
	
	Moreover, $\Phi$ can be approximated using the collection of limiting measures $\{ \mu_k \}_k$ and i.i.d.\ symmetric random signs $\sigma_k=\pm 1$ in the sense that, for any smooth function $f$ of bounded support, there is a coupling of $\Phi(f)$ and $\{ \mu_k(f) \}_k$ such that, if $\langle \cdot \rangle$ denotes expectation,
	\begin{align}
	\Big\langle \Big\vert \Phi(f) - \sum_{k:\diam(\supp(\mu_k))>\varepsilon} \sigma_k \, \mu_k(f) \Big\vert^2 \Big\rangle = O(\varepsilon^{43/24}) \;\;\; \text{as } \varepsilon \to 0.
	\end{align}
\end{theorem}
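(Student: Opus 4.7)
The plan is to construct the candidate limit $\Phi$ directly from the limiting collection of cluster measures $\{\mu_k\}_k$ of \cite{CCK19}, paired with i.i.d.\ Rademacher signs $\{\sigma_k\}_k$ independent of everything else, and to prove both the existence of $\Phi \in H^{-1}$ and its identification as the scaling limit of $\hat\Phi^{a}$ via a single crucial moment estimate. For $\varepsilon>0$, set $\Phi^{(\varepsilon)}(f) := \sum_{k:\,\diam(\supp(\mu_k))>\varepsilon}\sigma_k\,\mu_k(f)$; this is almost surely a finite sum since only finitely many limit clusters exceed any given diameter. Because the signs are i.i.d., symmetric, and independent of the measures, for $\varepsilon' \leq \varepsilon$ one has $\langle (\Phi^{(\varepsilon')}(f)-\Phi^{(\varepsilon)}(f))^2 \rangle = \langle \sum_{k:\,\varepsilon'<\diam(\supp(\mu_k))\leq\varepsilon}\mu_k(f)^2 \rangle$, so the whole problem reduces to controlling the second-moment mass contributed by small clusters.

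The central estimate is $\langle \sum_{k:\,\diam(\supp(\mu_k))\leq\varepsilon}\mu_k(f)^2 \rangle = O(\varepsilon^{43/24})$, which is precisely the quantitative approximation bound in the theorem statement. I would prove it by lifting to the lattice via \cite{CCK19} and Fatou, reducing it to a uniform-in-$a$ bound on $a^{4}\pi_{a}^{-2}\sum_{i}\bigl(\sum_{x^a\in\mathcal{C}_{i}^{a}}f(x^{a})\bigr)^{2}\mathbf{1}_{\diam(\mathcal{C}_{i}^{a})\leq\varepsilon}$. Using the one-arm exponent $\pi_a \asymp a^{5/48}$, the matching cluster-volume estimate $\vert \mathcal{C}_i^a \vert \asymp (\diam(\mathcal{C}_i^a)/a)^{91/48}$, quasi-multiplicativity of arm events, and standard RSW technology, a dyadic decomposition over clusters of diameter in $[2^{-j-1},2^{-j}]$ yields roughly $2^{2j}$ such clusters per unit area in any bounded region (combining the two exponents), each contributing at most $\lesssim \Vert f \Vert_\infty^2 \cdot 2^{-91j/24}$ to the squared-mass sum. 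The scale-$j$ contribution is therefore $\lesssim 2^{2j-91j/24}=2^{-43j/24}$; summing over $j \geq \log_2(1/\varepsilon)$ gives a geometric series bounded by $\varepsilon^{43/24}$. The exponent $43/24 = 2 d_f - d$ with $d_f = 91/48$ and $d = 2$ is thus dictated by the fractal dimension of critical clusters.

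Equipped with this bound, $\Phi^{(\varepsilon)}(f)$ is Cauchy in $L^2$, and $\Phi(f) := \lim_{\varepsilon \to 0} \Phi^{(\varepsilon)}(f)$ defines the limit field. Combining \eqref{eq:2-point-function} with the identity $\langle \sum_k \mu_k(dx) \mu_k(dy) \rangle = \mathrm{C}_2(x,y)\,dx\,dy$ (which follows from $\langle S_{x^a}S_{y^a}\rangle^a = P^a_2(x^a, y^a)$ and the convergence in \cite{CCK19}), one obtains $\langle \Phi(f)^2 \rangle = C_2 \iint f(x)f(y) \Vert x-y \Vert^{-5/24} \, dx \, dy$; by Hardy--Littlewood--Sobolev this Riesz bilinear form extends continuously to test functions in $H^{43/48}(D_n)$, so $\Phi\vert_{D_n}$ defines almost surely an element of $H^{-43/48}(D_n) \subset H^{-1}(D_n)$. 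To identify $\Phi$ as the scaling limit of $\hat\Phi^a$, I would couple the discrete and continuum clusters via the joint convergence in \cite{CCK19} and use the same signs for coupled clusters. For any fixed $\varepsilon>0$, the truncated lattice field $\hat\Phi_n^{a,(\varepsilon)}$ (retaining only clusters of diameter exceeding $\varepsilon$) involves finitely many terms and converges to $\Phi^{(\varepsilon)}\vert_{D_n}$ in $H^{-1}(D_n)$ by the continuous mapping theorem. The key estimate (and its lattice analogue, which is uniform in $a$ and derived by the same dyadic argument) then controls the small-cluster error in $H^{-1}$-norm on both sides, whence a standard diagonal argument in $\varepsilon$ and $a$ yields $\hat\Phi_n^a \to \Phi\vert_{D_n}$ in distribution in $H^{-1}(D_n)$.

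The hard part is the small-cluster second-moment estimate, uniformly in the lattice spacing $a$ down to scales comparable to $a$ itself. Although the exponent $43/24$ emerges transparently from the dyadic heuristic sketched above, its rigorous justification requires careful use of arm-event quasi-multiplicativity and Russo--Seymour--Welsh bounds, and its passage to the continuum limit requires upgrading the lattice bound via \cite{CCK19}. Once this estimate is established, the remainder of the argument---the construction of $\Phi$ as an $L^2$ Cauchy limit, the verification of its Sobolev regularity via the two-point function, and the coupling identifying $\Phi$ as the scaling limit of $\hat\Phi^a$---is a largely standard truncation-and-coupling procedure.
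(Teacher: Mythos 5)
Your proposal is essentially correct in its architecture and arrives at the right exponent, but it reaches the central estimate by a genuinely different and more laborious route than the paper, and that route has one soft spot. The paper never performs a dyadic decomposition over cluster scales: it observes that $\sum_i \vert \mathcal{C}^a_i \cap D_n\vert^2 = \sum_{x^a_1,x^a_2 \in D_n} \mathbf{1}[x^a_1 \longleftrightarrow x^a_2]$, and that restricting to clusters of diameter at most $\varepsilon$ restricts the sum to pairs with $\vert x^a_1 - x^a_2\vert \leq \varepsilon$; the small-cluster second moment is then bounded by $a^4\pi_a^{-2}\sum_{\vert x_1^a-x_2^a\vert\le\varepsilon}P^a_2(x^a_1,x^a_2) \lesssim \int_0^{\varepsilon} r^{1-5/24}\,dr = O(\varepsilon^{43/24})$, using only the two-point function \eqref{eq:2-point-function}. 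Your dyadic argument gets the same exponent ($43/24 = 2d_f - d$ versus the paper's $2 - 5/24$ --- these coincide), but as sketched it bounds the expected contribution of scale-$j$ clusters by (expected number of clusters) times (typical mass)$^2$, which conflates $\IE[\mu_k(\R^2)^2]$ with $(\IE[\mu_k(\R^2)])^2$; controlling the second moment of the cluster mass at a fixed diameter is exactly what the two-point-function identity does for free, so your "hard part" is harder than necessary and would in practice be repaired by falling back on the paper's identity. The other organizational difference is direction: you build $\Phi$ in the continuum first (as an $L^2$ Cauchy limit of truncations of $\sum_k\sigma_k\mu_k$) and then couple to the lattice, whereas the paper first proves tightness of $\hat\Phi^a_n$ and $\hat\Phi^a_{n,\varepsilon}$ in $H^{-\alpha}$ via \eqref{eq:bounded-second-moment}, extracts subsequential limits by Rellich compactness and Skorokhod, and only then identifies the limit with $\Phi_{n,\varepsilon}$ through its action on $C^\infty_0$ test functions, invoking Lemma~A.5 of \cite{CGPR21} to conclude that agreement on smooth test functions pins down the law of an $H^{-1}$-valued random element (a point your coupling argument should also address explicitly when upgrading convergence of $\hat\Phi^{a,(\varepsilon)}_n(f)$ for each $f$ to convergence in the $\Vert\cdot\Vert_{H^{-1}}$ topology), and then uses Kolmogorov extension (Lemma \ref{lemma:inf-vol-extension}) to pass from $[-n,n]^2$ to the full plane. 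Your construction buys a more explicit description of $\Phi$ and a sharper regularity statement ($H^{-\alpha}$ for $\alpha$ well below $1$ via the Riesz-form computation); the paper's buys a much shorter proof of the key $\varepsilon^{43/24}$ bound and a cleaner uniqueness argument.
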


\begin{corollary} \label{cor:conf-cov}
	The field $\Phi$ of Theorem \ref{thm:Sobolev-conv-field} is translation and rotation invariant and is scale covariant in the sense that, formally, for any $s>0$, $\Phi(sx)=s^{-5/48}\Phi(x)$. More precisely, the field $\Phi_s$ defined by 
	\begin{align}
	\Phi_s(f) := \int_{{\mathbb R}^2} f\Big(\frac{x}{s}\Big) \Phi(x) dx
	\end{align}
	has the same distribution as $s^{2-5/48}\Phi$.
	In particular, for any $L,s>0$, the distribution of $\Phi(\mathbf{1}_{[-sL,sL]^2})$ is the same as that of $s^{91/48}\Phi(\mathbf{1}_{[-L,L]^2})$.
\end{corollary}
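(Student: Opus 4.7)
The plan is to derive the three symmetry statements---translation invariance, rotation invariance, and scale covariance---by establishing lattice-level identities for $\hat\Phi^a$ and passing to the limit via Theorem \ref{thm:Sobolev-conv-field}.

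For the scale covariance, the starting point is that critical site percolation on $(a/s)\mathcal{T}$ can be coupled with critical site percolation on $a\mathcal{T}$ via the geometric rescaling $z \mapsto z/s$: the two colorings have the same joint law under this identification. Rewriting the definition \eqref{def:lattice-smooth-field-full-plane} of $\hat\Phi^{a/s}$ in terms of the clusters of $a\mathcal{T}$, and using $A_{a/s} = A_a/s^2$, yields the exact identity in distribution
\begin{equation*}
\hat\Phi^{a/s}(x) \stackrel{d}{=} \frac{\pi_a}{\pi_{a/s}}\, \hat\Phi^a(sx),
\end{equation*}
and pairing with a smooth compactly supported $f$, followed by the change of variables $y = sx$, gives
\begin{equation*}
\hat\Phi^{a/s}(f) \stackrel{d}{=} \frac{\pi_a}{\pi_{a/s}}\, s^{-2}\, \hat\Phi^a(f_s), \qquad f_s(y) := f(y/s),
\end{equation*}
jointly in any finite tuple of test functions. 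Both sides converge in distribution as $a \to 0$ by Theorem \ref{thm:Sobolev-conv-field} (applied on boxes $D_n$ large enough to contain the supports of $f$ and $f_s$), and the ratio limit $\pi_a/\pi_{a/s} \to s^{5/48}$---a consequence of the $5/48$ one-arm exponent together with quasi-multiplicativity (see, e.g., \cite{GPS13})---gives
\begin{equation*}
\Phi(f) \stackrel{d}{=} s^{5/48 - 2}\, \Phi_s(f).
\end{equation*}
Rearranging, and noting $2 - 5/48 = 91/48$, produces the scale covariance $\Phi_s \stackrel{d}{=} s^{2-5/48}\Phi$; specializing to $f = \mathbf{1}_{[-L,L]^2}$, after a standard approximation argument to accommodate the non-smoothness of the indicator, yields the final statement of the corollary.

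Translation invariance follows by the same scheme: for any $v \in \mathbb{R}^2$, choose $v^a \in a\mathcal{T}$ with $v^a \to v$; the identity $\hat\Phi^a(f(\cdot - v^a)) \stackrel{d}{=} \hat\Phi^a(f)$ is immediate from translation invariance of the i.i.d.\ coloring and signs on $a\mathcal{T}$, and Theorem \ref{thm:Sobolev-conv-field} yields the corresponding identity in the limit. Rotation invariance requires a different argument since $a\mathcal{T}$ has only sixfold rotational symmetry; the plan here is to use the representation of $\Phi$ in terms of the limiting measures $\{\mu_k\}_k$ from Theorem \ref{thm:Sobolev-conv-field}, combined with the rotational invariance of the full continuum scaling limit of critical site percolation established in \cite{CN06} and exploited in \cite{GMQ21}. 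Since $\{\mu_k\}_k$ is measurable with respect to the scaling limit of the interfaces (by \cite{CCK19}), it inherits the rotational invariance, and this propagates to $\Phi$ through the $L^2$-approximation in Theorem \ref{thm:Sobolev-conv-field}.

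The main obstacle is the interaction between the scaling $x \mapsto x/s$ and the box-truncation appearing in Theorem \ref{thm:Sobolev-conv-field}: to pass to the limit, one must work on a box containing both $\supp(f)$ and $\supp(f_s)$, establish the identity for that fixed box, and then take the box to exhaust $\mathbb{R}^2$. A related but less serious technical point is the justification of the sharp ratio $\pi_a/\pi_{a/s} \to s^{5/48}$, which is classical but must be invoked as a scaling-limit statement rather than merely as an exponent bound.
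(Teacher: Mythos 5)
Your proposal is correct, but for the scale covariance (and translation invariance) it takes a genuinely different route from the paper. The paper works entirely in the continuum: it applies the covariance of the limiting cluster-measure ensemble $\{\mu_k\}_k$ (Theorems 3 and 4 of \cite{CCK19}) to the cutoff field $\tilde\Phi_{\varepsilon}(f)=\sum_{k:\diam(\supp(\mu_k))>\varepsilon}\sigma_k\,\mu_k(f)$, concluding that $\tilde\Phi_{\varepsilon}(f\circ h_{1/s})$ is distributed like $s^{2-5/48}\tilde\Phi_{\varepsilon}(f)$, and then removes the cutoff via the $L^2$ approximation in Theorem \ref{thm:Sobolev-conv-field}; translation and rotation invariance are inherited the same way from the invariance of $M$. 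You instead exploit the exact self-similarity of the lattice, coupling percolation on $(a/s)\mathcal{T}$ with percolation on $a\mathcal{T}$ to get the identity $\hat\Phi^{a/s}(f)\stackrel{d}{=}\frac{\pi_a}{\pi_{a/s}}s^{-2}\hat\Phi^a(f_s)$, and pass to the limit using the ratio limit $\pi_a/\pi_{a/s}\to s^{5/48}$, which is exactly \eqref{eq:one-arm-limit} applied with $\varepsilon=1/s$. Both the lattice identity and the limiting argument check out (including your handling of the box truncation and the fact that the convergence statement concerns $\hat\Phi^a_n$ rather than $\hat\Phi^a$), and for rotations you correctly recognize that the lattice route fails and revert to the paper's continuum argument. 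What your route buys is a more elementary and self-contained derivation of the exponent $2-5/48$ that does not invoke the covariance statement of \cite{CCK19} at all, at the cost of needing the sharp one-arm ratio limit as an extra input and of not extending to rotations; the paper's route is shorter and treats all three symmetries uniformly. Incidentally, your approach also sidesteps a small imprecision in the paper's proof, namely that the diameter cutoff $\diam(\supp(\mu_k))>\varepsilon$ is not preserved by $h_{1/s}$ (it becomes a cutoff at $\varepsilon/s$), which is harmless only because $\varepsilon\to 0$ at the end.
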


The proofs of Theorem \ref{thm:Sobolev-conv-field} and Corollary \ref{cor:conf-cov}, presented in Section \ref{sec:field-s-lim}, use ideas and tools from \cite{Dubedat09,CGN15,vdBCL18,CGPR21,CJN21}, as well as properties of the conformal measure ensemble for critical percolation whose existence was conjectured in \cite{CN09} and proved in \cite{CCK19}.

\medskip

\section{Scaling limit of connection probabilities and correlation functions} \label{sec:proofs}

\subsection{Percolation interfaces and their scaling limit} \label{sec:interfaces}

Before we can present the proofs of Theorems \ref{thm:scal-lim-connection-probabilities} and \ref{thm:scal-lim-correlation-functions}, we need to introduce some additional 
notation and results.

We let $P^a$ denote the probability distribution of critical percolation on $a\mathcal{T}$ (i.e., a Bernoulli product measure corresponding to the assignment of a label---open or closed--- to each vertex of $a\mathcal{T}$ with equal probability) and will use $\Lambda^a$ to denote a percolation configuration in $a\mathcal{T}$ distributed according to $P^a$ or, equivalently, the corresponding configuration of percolation interfaces. The percolation interfaces between open and closed clusters can be given a direction and seen as oriented curves, with the direction depending on whether an interface surrounds an open or a closed cluster.

More precisely, for each fixed $a>0$, the percolation interfaces are polygonal circuits (with probability one) on the edges of the hexagonal lattice $a\mathcal{H}$ dual to the triangular lattice $a\mathcal{T}$. We give these circuits an orientation in such a way that they wind counterclockwise around open clusters and clockwise around closed clusters (in other words, they are oriented in such a way that open hexagons are on the left and closed hexagons on the right). Note that the interfaces form a nested collection of loops with alternating orientation and a natural tree structure.

We let $B_{r}(x)$ denote the disk of radius $r$ centered at $x$ and $\partial B_{r}(x)$ the circle of radius $r$ centered at $x$. We write $A_{r,R}(x) := B_{R}(x) \setminus B_{r}(x)$ for the annulus centered at $x$ with inner radius $r$ and outer radius $R$.
We write $x_1^a \longleftrightarrow x_2^a$ to denote the event that $x^a_1$ and $x^a_2$ are in the same open cluster, which implies that there is an \emph{open path} from $x_1^a$ to $x_2^a$, that is, a sequence of nearest-neighbor open vertices of $a\mathcal{T}$ starting at $x_1^a$ and ending at~$x_2^a$. 

We also let $\mathcal{A}^a_{0,r}(x^a) \equiv x^a \longleftrightarrow \partial B_{r}(x^a)$ denote the event that the open cluster of $x^a$ (as a set of hexagons) intersects $\partial B_{r}(x^a)$, and $\mathcal{A}^a_{r,R}(x^a) \equiv \partial B_{r}(x^a) \longleftrightarrow \partial B_{R}(x^a)$ denote the event that there is an open path that crosses the annulus $A_{r,R}(x^a)$ in the sense that it starts inside $B_{r}(x^a)$ and exits $B_{R}(x^a)$. This implies that there is no interface loop between open and closed hexagons that separates $B_{r}(x^a)$ from the complement of $B_{R}(x^a)$ (i.e., is contained in $A_{r,R}(x^a)$ and surrounds $B_{r}(x^a)$) and that, in addition, one of the two following events occurs (with $P^a$-probability one):
\begin{itemize}
	\item there is a counterclockwise interface intersecting both $B_{r}(x^a)$ and the complement of $B_{R}(x^a)$,
	\item the innermost interface surrounding $B_{r}(x^a)$ is oriented counterclockwise.
\end{itemize}

If $x^a_1,x^a_2 \in a\mathcal{T}$, for any $r_1,r_2>0$, $B_{r_1}(x^a_1) \longleftrightarrow B_{r_2}(x^a_2)$ will denote the event that there is an open path that starts inside $B_{r_1}(x^a_1)$ and ends inside $B_{r_2}(x^a_2)$. In terms of interfaces, this means that there is no interface separating $B_{r_1}(x^a_1)$ from $B_{r_2}(x^a_2)$ and that, in addition, one of the following events occurs (with $P^a$-probability one):
\begin{itemize}
	\item there is a counterclockwise interface intersecting both $B_{r_1}(x^a_1)$ and $B_{r_2}(x^a_2)$,
	\item there is a counterclockwise interface surrounding either $B_{r_1}(x^a_1)$ or $B_{r_2}(x^a_2)$ and intersecting the other,
	\item the innermost interface surrounding both $B_{r_1}(x^a_1)$ and $B_{r_2}(x^a_2)$ is oriented counterclockwise.
\end{itemize}

The probability of the \emph{one-arm event} $\mathcal{A}^a_{0,1} = 0 \longleftrightarrow \partial B_{1}(0)$ will play a special role and, for each $a>0$, will be denoted by $\pi_a \equiv P^a(0 \longleftrightarrow \partial B_{1}(0))$.
It follows from the results of \cite{LSW02,GPS13} (see the discussion in Section~5.1 of \cite{GPS13}, in particular the first limit in the third displayed equation on page 999) that, for any $\varepsilon>0$,
\begin{align}
\begin{split} \label{eq:one-arm-limit}
\lim_{a \to 0} \pi_a^{-1} P^a\big(0 \longleftrightarrow \partial B_{\varepsilon}(0)\big) = \varepsilon^{-5/48}.
\end{split}
\end{align}

In the scaling limit, the collection of interfaces between open and closed clusters converges weakly to a random ensemble of fractal nonsimple loops in the topology generated by the distance function $\text{Dist}$ introduced below.

To define $\text{Dist}$, we first introduce a function $\Delta$ on ${\mathbb R}^2 \times {\mathbb R}^2$ give by 
\begin{equation} \label{new-dist}
\Delta(u,v) := \inf_{\varphi} \int_0^1\frac{\Vert\varphi'(t)\Vert}{1+\Vert\varphi(t)\Vert^2}dt,
\end{equation}
where the infimum is over all differentiable curves $\varphi:[0,1] \to \mathbb{R}^2$ with $\varphi(0)=u$ and $\varphi(1)=v$. The function $\Delta$ induces a metric equivalent to the Euclidean metric in bounded regions, but it has the advantage of making ${\mathbb R}^2$ precompact. Adding a single point at infinity yields the compact space
$\dot{\mathbb R}^2$, which is isometric, via stereographic projection, to the two-dimensional sphere.

We then define a distance between two planar loops, $\gamma_1, \gamma_2:[0,1] \rightarrow \mathbb{R}^2$, seen as oriented curves, as follows:
\begin{equation} \label{def:d}
\text{d}(\gamma_1,\gamma_2) := \inf\sup_{t \in [0,1]} \Delta(\gamma_1(t) - \gamma_2(t)),
\end{equation}
where the infimum is over all choices of parametrizations (with the same orientations) of $\gamma_1$ and $\gamma_2$. 

Finally, we define a distance between two closed sets of loops, $\Gamma_1$ and $\Gamma_2$, as follows:
\begin{align} \label{def:Dist}
& \text{Dist}(\Gamma_1,\Gamma_2) \\
& \qquad := \inf\{\varepsilon>0: \forall \gamma_1\in \Gamma_1~\exists \gamma_2\in \Gamma_2 \text{ s.t. } \text{d}(\gamma_1,\gamma_2)\leq\varepsilon\text{ and vice versa}\}. \nonumber
\end{align}
The space $\Omega$ of collections of loops with this distance is a separable metric space. 

It was shown in \cite{CN06} that, as $a \to 0$, the collection of percolation interfaces has a unique limit in distribution in the topology induced by \eqref{def:Dist}. We call this limit the \emph{full scaling limit} of percolation and let $\mathbb{P}$ denote its distribution. A loop configuration distributed according to $\mathbb{P}$ will be denoted by $\Lambda$. As explained in \cite{CN08}, $\Lambda$ is distributed like the full-plane, nested conformal loop ensemble CLE$_6$. It is invariant, in a distributional sense, under all M\"obius transformations \cite{CN06,GMQ21}.

For the scaling limit, we will use notation similar to that introduced above for discrete percolation. In particular, $\mathcal{A}_{r,R}(x) \equiv \partial B_{r}(x) \longleftrightarrow \partial B_{R}(x)$ and $B_{r_1}(x_1) \longleftrightarrow B_{r_2}(x_2)$ are the events described in the discrete setting in terms of loops. Indeed, the definitions of the corresponding lattice events in terms of loops make sense in the continuum as well as on the lattice, provided we use the following additional definition. We say that an interface $\gamma$ \emph{separates} two domains, $D_1$ and $D_2$, if there are no points $y_1 \in D_1$ and $y_2 \in D_2$ with the same winding number with respect to $\gamma$; otherwise, we say that $\gamma$ does not separate $D_1$ and $D_2$.
An equivalent way to say this is to declare the \emph{outside} of a curve $\gamma$ to be the set of points of $\mathbb{R}^2 \setminus \gamma$ whose winding number with respect to $\gamma$ is zero, and the \emph{inside} of $\gamma$ to be the complement of that set in $\mathbb{R}^2 \setminus \gamma$. Then $y_1,y_2 \notin \gamma$ are not separated by $\gamma$ if they are both in its inside or both in its outside. This is the correct notion to ensure continuity of connectivity events in the scaling limit because, although the lattice interfaces are simple (self-avoiding) curves, critical percolation clusters have deep ``fjords'' and, in the scaling limit, the curves $\gamma$ are non-intersecting but not simple (they are self-touching).
We point out that, alternatively, one could define the above events in terms of the ensemble of continuum clusters constructed in \cite{CCK19}.


For $\Vert x_1-x_2 \Vert>r_1+r_2$, the boundary of the event $B_{r_1}(x_1) \longleftrightarrow B_{r_2}(x_2)$ is the event of $\mathbb{P}$-probability zero that a loop of diameter at least $\min\big(\Vert x_1-x_2 \Vert - (r_1+r_2), r_1, r_2 \big)$ touches either $\partial B_{r_1}(x_1)$ or $\partial B_{r_2}(x_2)$ without crossing it, so that $B_{r_1}(x_1) \longleftrightarrow B_{r_2}(x_2)$ is a continuity event for $\mathbb{P}$. Similar considerations apply to the event $\mathcal{A}_{r,R}(x) \equiv \partial B_{r}(x) \longleftrightarrow \partial B_{R}(x)$.
Therefore, for any $R>r>0$, any $x,x_1,x_2 \in \mathbb{R}^2$, and any sequences $x^a, x_1^a, x_2^a \to x, x_1, x_2$, respectively, as $a \to 0$, the convergence of critical percolation interfaces to their scaling limit in the topology induced by \eqref{def:Dist} (equivalently, the convergence of critical percolation clusters \cite{CCK19}) implies that
\begin{align} \label{eq:conv-arm-prob}
\lim_{a \to 0} P^a(\partial B_{r}(x^a) \longleftrightarrow \partial B_{R}(x^a)) = \mathbb{P}(\partial B_{r}(x) \longleftrightarrow \partial B_{R}(x))
\end{align}
and
\begin{align} \label{eq:conv-disks-prob}
\lim_{a \to 0} P^a(B_{r_1}(x^a_1) \longleftrightarrow B_{r_2}(x^a_2)) = \mathbb{P}(B_{r_1}(x_1) \longleftrightarrow B_{r_2}(x_2)).
\end{align}

\medskip

\subsection{Proofs of Theorems \ref{thm:scal-lim-connection-probabilities}, \ref{thm:scal-lim-bounded-domains} and \ref{thm:scal-lim-correlation-functions}}

We start this section with a lemma that will be used in the proof of Theorem \ref{thm:scal-lim-connection-probabilities}. The lemma concerns discrete percolation for fixed lattice spacing $a>0$, but it plays an important role in the proof of Theorem \ref{thm:scal-lim-connection-probabilities}, so before stating and proving it, we briefly explain what it says and how it is used later on in the paper. Consider three scales, $\varepsilon$, $\delta$ and $\eta$, such that  $\varepsilon >\delta \gg \eta > a$. For fixed $a$, it is reasonable to expect that, outside the disk or radius $\delta$ centered at $0$, the measures $P^a(\cdot \, \vert \, 0 \longleftrightarrow \partial B_{\varepsilon}(0))$ and $P^a(\cdot \, \vert \, \mathcal{A}^a_{\eta,\varepsilon}(0))$ are close to each other, in some sense. The lemma below makes this statement precise in a way that is useful in the scaling limit, $a \to 0$, when the statement above is less intuitively clear, since the event $0 \longleftrightarrow \partial B_{\varepsilon}(0)$ becomes an event of measure zero.

A key observation is that, for any $\delta>\eta$, an open circuit $\gamma^a$ in the annulus $A_{\eta,\delta}(0) = B_{\eta}(0) \setminus B_{\delta}(0)$ acts as a ``stopping set'' (the spatial analog of a stopping time) in the sense that, if one considers events that depend only on what is outside $B_{\varepsilon}(0)$, conditioning on $\gamma^a$ being open, on the configuration inside $\gamma^a$ and on either $0 \longleftrightarrow \partial B_{\varepsilon}(0)$ or $\mathcal{A}^a_{\eta,\varepsilon}(0)$ is equivalent to conditioning on the event $\gamma^a \longleftrightarrow \partial B_{\varepsilon}(0)$ (see \eqref{eq:observation} and \eqref{eq:analog-observation}).

With this is mind, the idea of the proof of the lemma is to construct two configurations, $\tilde\Lambda^a$ and $\hat\Lambda^a$, distributed according to $P^a(\cdot \, \vert \, 0 \longleftrightarrow \partial B_{\varepsilon}(0))$ and $P^a(\cdot \, \vert \, \mathcal{A}^a_{\eta,\varepsilon}(0))$, respectively, using the fact that each of those two measures dominates the unconditional measure $P^a$. If $\tilde\Lambda^a$ and $\hat\Lambda^a$ are constructed starting from the \emph{same} percolation configuration ($\Lambda^a$, distributed according to $P^a$), then they are coupled in such a way that the presence of an open circuit $\gamma^a$ in $\Lambda^a$ implies that the same circuit is open in both $\tilde\Lambda^a$ and $\hat\Lambda^a$.
In order to use \eqref{eq:observation} and  \eqref{eq:analog-observation}, it is important to avoid obtaining information from outside the circuit $\gamma^a$, so to generate the configurations $\tilde\Lambda^a$ and $\hat\Lambda^a$ we fix $\delta$ between $\eta$ and $\varepsilon$ and use an exploration process from $\partial B_{\eta}(0)$ outwards that stops when the innermost open circuit inside $A_{\eta,\delta}(0)$ is found in $\Lambda^a$. (It is standard that such an exploration process exists.) If an open circuit is found, the coupling is ``successful'' and can be easily completed, using \eqref{eq:observation} and \eqref{eq:analog-observation}, to produce two configurations that have the same distribution outside $B_{\delta}(0)$ (as expressed by \eqref{eq:coupling-comparison}).

The lemma is useful in the scaling limit because, due to standard RSW arguments (see \cite{grimmett-book}), $\lim_{\eta \to 0} \liminf_{a \to 0}$ of the probability to find an open circuit inside $A_{\eta,\delta}(0)$ in $\Lambda^a$ (i.e., the probability that the coupling is successful) is 1. In the proof of Theorem \ref{thm:scal-lim-connection-probabilities}, this effectively allows us to replace the conditioning on $0 \longleftrightarrow \partial B_{\varepsilon}(0)$ with one on $\mathcal{A}^a_{\eta,\varepsilon}(0)$, which is well-behaved as $a \to 0$.

\begin{figure}[!ht]
	\begin{center}
		\includegraphics[width=10cm]{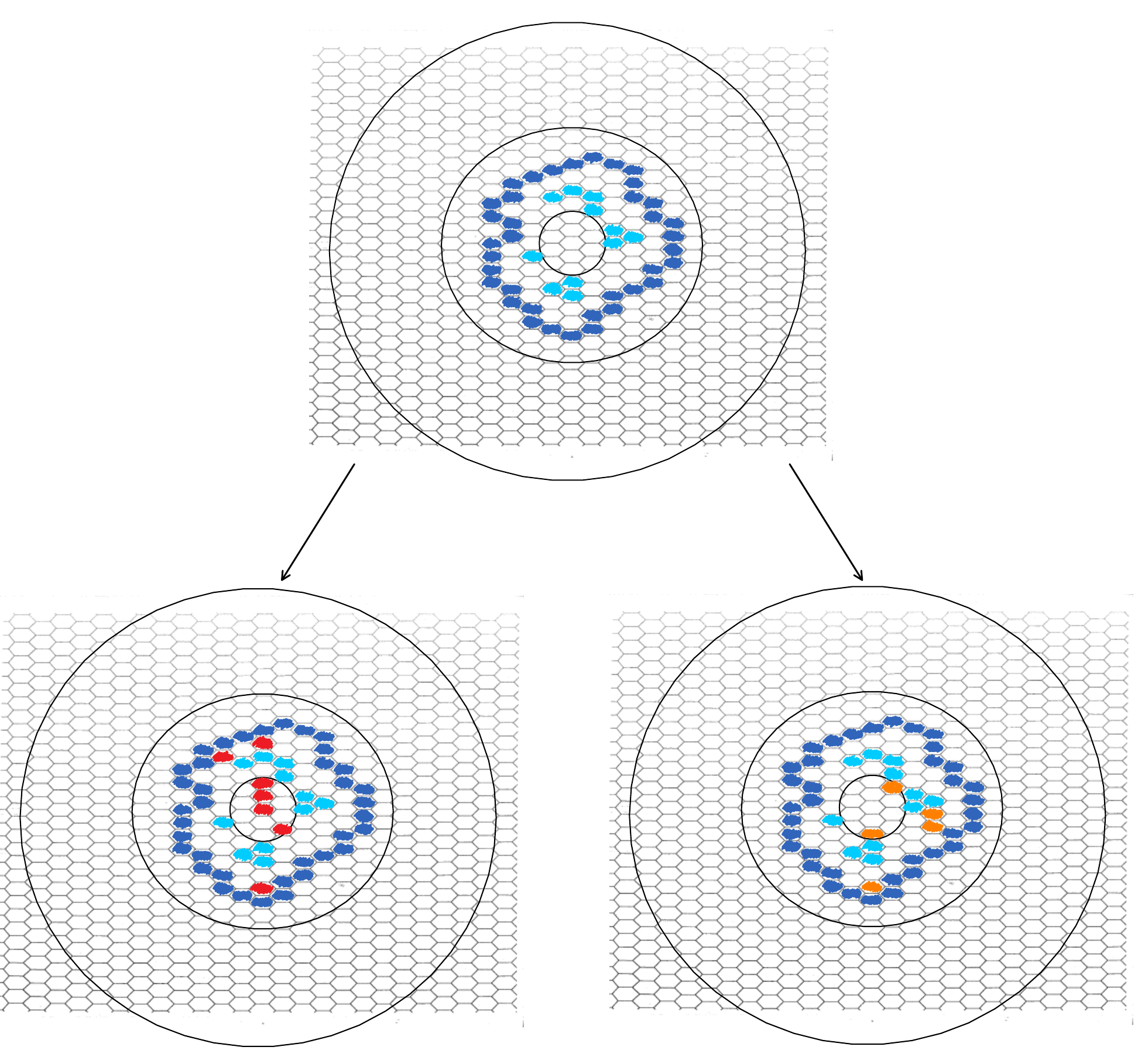}
		\caption{Sketch of one of the steps of the proof of Lemma \ref{lemma:coupling}. In each of the three figures, the three disks represent $B_{\eta}(0) \subset B_{\delta}(0) \subset B_{\varepsilon}(0)$. The top configuration can be generated from $P^a$ by an exploration process inside the annulus $A_{\eta,\delta}(0)$, stopped as soon as the innermost open circuit $\gamma^a$ is found around $B_{\eta}(0)$ (if such a circuit exists). Open hexagons are marked in (light) blue, the hexagons on $\gamma^a$ with a darker shade. The two bottom configurations are obtained from the top one in such a way that hexagons which are open in the top configuration are also open in the bottom ones. Additional open hexagons are marked in red and orange. The left configuration is distributed according to $P^a(\cdot \, \vert \, 0 \longleftrightarrow \partial B_{\varepsilon}(0))$ and the right configuration according to $P^a(\cdot \, \vert \, \mathcal{A}^a_{\eta,\varepsilon}(0))$. The states of hexagons outside $\gamma^a$ have yet to be determined.}
		\label{Fig-Lemma}
	\end{center}
\end{figure}

\begin{lemma} \label{lemma:coupling}
	Consider $\varepsilon>\delta>a$. For any $\delta>\eta>a$, there exists a coupling, ${P}^a_{\eta}$, between $P^a(\cdot \, \vert \, 0 \longleftrightarrow \partial B_{\varepsilon}(0))$ and $P^a(\cdot \, \vert \, \mathcal{A}^a_{\eta,\varepsilon}(0))$, that is, a joint distribution on pairs $(\tilde\Lambda^a,\hat\Lambda^a)$ such that $\tilde\Lambda^a$ and $\hat\Lambda^a$ are distributed according to $P^a(\cdot \, \vert \, 0 \longleftrightarrow \partial B_{\varepsilon}(0))$ and $P^a(\cdot \, \vert \, \mathcal{A}^a_{\eta,\varepsilon}(0))$, respectively, and an event $\mathcal{S}$, such that the following holds: 
	\begin{align}
	P^a_{\eta}(\mathcal{S}) \geq P^a(\exists \text{ open circuit in } A_{\eta,\delta}(0) \text{ surrounding } 0)
	\end{align}
	and, for any event $\mathcal{A}$ that depends only on the states of hexagons of a single percolation configuration outside $B_{\delta}(0)$,
	\begin{align}
	\begin{split} \label{eq:coupling-comparison}
	P^a_{\eta}(\tilde\Lambda^a \in \mathcal{A} \, \vert \, \mathcal{S}) = P^a_{\eta}(\hat\Lambda^a \in \mathcal{A} \, \vert \, \mathcal{S}).
	\end{split}
	\end{align}
\end{lemma}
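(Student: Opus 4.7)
The proof hinges on two ingredients: an FKG-type monotone coupling (both measures $\mu_1 := P^a(\cdot \mid 0 \leftrightarrow \partial B_\varepsilon(0))$ and $\mu_2 := P^a(\cdot \mid \mathcal{A}^a_{\eta,\varepsilon}(0))$ dominate $P^a$, since both conditioning events are increasing) and the spatial Markov property of percolation across an open circuit viewed as a ``stopping set''. The strategy is to drive both $\tilde\Lambda^a$ and $\hat\Lambda^a$ from a single seed configuration $\Lambda^a \sim P^a$, and to use the fact that the two conditioning events become indistinguishable once an open circuit in $A_{\eta,\delta}(0)$ has been found and its inside revealed.

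First, I would sample $\Lambda^a$ via the standard inside-out exploration of $A_{\eta,\delta}(0)$: starting from $\partial B_\eta(0)$ and moving outward, reveal hexagons only until the innermost open circuit $\gamma^a$ around $0$ is found, if any. Such an exploration queries only hexagons on or inside $\gamma^a$ within the annulus, so the hexagons strictly outside $\gamma^a$ remain i.i.d.\ Bernoulli$(1/2)$ conditional on $\gamma^a$. Take $\mathcal{S}$ to be the event that the exploration succeeds; then $P^a_{\eta}(\mathcal{S}) = P^a(\exists\ \text{open circuit in } A_{\eta,\delta}(0)\text{ around }0)$, matching the required lower bound.

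Next, on $\mathcal{S}$ I would build $\tilde\Lambda^a$ and $\hat\Lambda^a$ from $\Lambda^a$ via the FKG monotone couplings, producing $\tilde\Lambda^a \geq \Lambda^a$ with marginal $\mu_1$ and $\hat\Lambda^a \geq \Lambda^a$ with marginal $\mu_2$; in particular, $\gamma^a$ remains an open circuit in both. The key step is to coordinate these two monotone couplings so that their samples outside $\gamma^a$ are \emph{identical}. This is consistent because any open path from $0$, or from $\partial B_\eta(0)$, to $\partial B_\varepsilon(0)$ must cross the open circuit $\gamma^a$; hence both conditioning events factorize across $\gamma^a$ into an inside event ($0$, respectively $\partial B_\eta(0)$, connected to $\gamma^a$) and the common outside event $\{\gamma^a \leftrightarrow \partial B_\varepsilon(0)\}$. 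By the spatial Markov property, the conditional law of the configuration outside $\gamma^a$ under either $\mu_1$ or $\mu_2$, given $\gamma^a$ and a valid inside, is the same measure, namely $P^a(\cdot \mid \gamma^a \leftrightarrow \partial B_\varepsilon(0))$ restricted to the hexagons outside $\gamma^a$. Drawing a single sample from this common law and using it for both configurations outside $\gamma^a$ makes $\tilde\Lambda^a$ and $\hat\Lambda^a$ literally agree there, and since $\gamma^a \subset A_{\eta,\delta}(0) \subset B_\delta(0)$, they agree outside $B_\delta(0)$, giving~\eqref{eq:coupling-comparison}. Off $\mathcal{S}$, couple the two configurations in any way consistent with the correct marginals.

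The main obstacle will be the bookkeeping: verifying that the monotone couplings can be arranged so that the outside-of-$\gamma^a$ configurations of $\tilde\Lambda^a$ and $\hat\Lambda^a$ are genuinely equal (not merely equal in distribution) while the overall laws remain $\mu_1$ and $\mu_2$. This requires exploiting the conditional-independence structure of $\mu_1$ and $\mu_2$ given $(\gamma^a, \Lambda^a_{\mathrm{in}})$, together with a small adjustment of the inside sampling on $\mathcal{S}$ (to guarantee that $0$, respectively $\partial B_\eta(0)$, is connected to $\gamma^a$ inside with the correct conditional probability under each measure). The key mechanism throughout is the factorization of each conditioning event across the open stopping set $\gamma^a$.
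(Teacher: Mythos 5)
Your proposal follows essentially the same route as the paper's proof: a stopping-set exploration of $A_{\eta,\delta}(0)$ that finds the innermost open circuit without revealing its exterior, a monotone (FKG/Strassen) coupling of both conditioned measures over a common seed $\Lambda^a$ so that the circuit survives in both $\tilde\Lambda^a$ and $\hat\Lambda^a$, and the factorization of each conditioning event across the open circuit $\gamma^a$, which lets a single sample from $P^a(\cdot \mid \gamma^a \leftrightarrow \partial B_{\varepsilon}(0))$ serve as the common exterior configuration. The only (harmless) deviation is your choice of $\mathcal{S}$ as the event that the exploration of $\Lambda^a$ itself succeeds, whereas the paper takes $\mathcal{S}$ to be the larger event that a common open circuit is produced in $\tilde\Lambda^a$ and $\hat\Lambda^a$, which is why the lemma asserts an inequality rather than the equality you obtain.
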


\begin{proof}
	We start by generating a critical percolation configuration $\Lambda^a$ and letting $\omega(x^a_i):=\mathbf{1}_{\{x^a_i \text{ is open in } \Lambda^a\}}$. Using $\Lambda^a$, we will construct recursively two new percolation configurations, $\tilde\Lambda^a$ and $\hat\Lambda^a$, that are distributed according to $P^a(\cdot \, \vert \, 0 \longleftrightarrow \partial B_{\varepsilon}(0))$ and $P^a(\cdot \, \vert \, \mathcal{A}^a_{\eta,\varepsilon}(0))$, respectively. The construction will start inside the annulus $A_{\eta,\delta}(0)$ and initially we will generate the states of hexagons in $\tilde\Lambda^a$ and $\hat\Lambda^a$ one hexagon at a time.
	
	It is a standard fact \cite{Harris60, Kesten80, Kesten86} that, given a percolation configuration $\Lambda^a$, among all open (simple) circuits in $A_{\eta,\delta}(0)$ surrounding $0$ there is a unique innermost one, i.e., a circuit with minimal interior. Moreover, the event that a circuit $\gamma^a$ is the innermost open circuit depends only on the hexagons in $\gamma^a$ and in its interior, not on the hexagons outside $\gamma^a$. As a consequence, if $\Lambda^a$ contains an open circuit in $A_{\eta,\delta}(0)$ surrounding $0$, one can perform an exploration of the percolation configuration in $A_{\eta,\delta}(0)$ that finds the innermost open circuit without checking the state of any of the hexagons outside it. One can, for example, proceed as follows.
	
	Consider the innermost circuit $G_0$ of hexagons in $A_{\eta,\delta}(0)$ surrounding $0$ (the innermost ``layer'' of hexagons, regardless of their state). The exploration process we describe below will start from $G_0$ and will have the following properties: it stops if an open circuit around $0$ is found or if it reaches $\partial B_{\delta}(0)$; if a closed circuit around $0$ is found, the exploration restarts with the same rules from the innermost circuit $G_k$ of unexplored hexagons outside the closed circuit.
	
	To start the exploration, pick a hexagon from $G_0$ uniformly at random. If the hexagon is open, move counterclockwise to the next hexagon in the circuit, if it is closed, move clockwise to the next hexagon. Repeat this until all the hexagons in $G_0$ are explored or until an interface is found. If an interface is found, explore it using the standard percolation exploration process for interfaces \cite{Sch00, CN07}. If the interface crosses the annulus $A_{\eta,\delta}(0)$, there can be no open circuit surrounding $0$ and the exploration stops. Otherwise, the interface produces an ``excursion'' off $G_0$ into the annulus. Explore the interface until the exploration produces a circuit surrounding $0$ or until it returns to one of the hexagons in $G_0$. In the first case, the circuit can be either open or closed. If the circuit is open, it is the innermost open circuit and the exploration stops. If the circuit is closed, restart the exploration from the innermost (simple) circuit $G_1$ of unexplored hexagons outside the closed circuit. If the exploration of the interface leads back to a hexagon in $G_0$ without generating a circuit, proceed with the rules described above until the exploration either stops or finds a closed circuit around $0$. In the latter case, the exploration process restarts from the innermost circuit $G_1$ of unexplored hexagons outside the closed circuit, with the rules described above.
	The exploration process continues iteratively until either an open circuit around $0$ or an interface crossing the annulus is found, or until it reaches the boundary of $B_{\delta}(0)$.
	
	The algorithm described above produces an ordered sequence $(x^a_i)_{i=1}^N$ of hexagons. Below, we will use that order to generate recursively two additional percolation configurations, as follows (see Figure \ref{Fig-Lemma} for an illustration).
	
	If $(x^a_i)_{i=1}^k$ are the explored hexagons whose states (for the two new configurations) have been generated up to step $k$ of the construction, we let $\tilde\omega(x^a_i):=\mathbf{1}_{\{x^a_i \text{ is open in } \tilde\Lambda^a\}}$ and $\hat\omega(x^a_i):=\mathbf{1}_{\{x^a_i \text{ is open in } \hat\Lambda^a\}}$ for $i=1\ldots,k$. We want to generate the states for the next hexagon in the exploration process, $x^a_{k+1}$, according to the distributions $P^a(\cdot \, \vert \, 0 \longleftrightarrow \partial B_{\varepsilon}(0), (\tilde\omega(x^a_i))_{i=1}^k)$ and $P^a(\cdot \, \vert \, \mathcal{A}^a_{\eta,\varepsilon}(0), (\hat\omega(x^a_i))_{i=1}^k)$, respectively, that is, conditioning on the states generated up to step $k$, so that the new states have the correct distributions.
	
	Note that $P^a(x^a_{k+1} \text{ is open} \, \vert \, 0 \longleftrightarrow \partial B_{\varepsilon}(0), (\tilde\omega(x^a_i))_{i=1}^k)\geq1/2$. To see this, observe that, to evaluate $P^a(x^a_{k+1} \text{ is open} \, \vert \, 0 \longleftrightarrow \partial B_{\varepsilon}(0), (\tilde\omega(x^a_i))_{i=1}^k)$,
	one can consider a graph generated from $a\mathcal{T}$ by removing the vertices $x^a_i$ with $\tilde\omega(x^a_i)=0$ and the edges incident on them. On this new graph, the conditioning is on an increasing event, so that $\mathbf{1}_{\{x^a_{k+1} \text{ is open}\}}$ stochastically dominates a Bernoulli random variable with parameter $1/2$ by the FKG inequality (see \cite{grimmett-book}).
	Therefore, by Strassen's theorem on stochastic domination (see, e.g., \cite{Lindvall99}), there exists a joint distribution, $\tilde{Q}_{k+1}$, on pairs of random variables $(\tilde{X}_{k+1},\tilde\beta_{k+1})$ distributed according to $P^a(\cdot \, \vert \, 0 \longleftrightarrow \partial B_{\varepsilon}(0), (\tilde\omega(x^a_i))_{i=1}^k)$ and a Bernoulli distribution with parameter $1/2$, respectively, such that, if $\tilde\beta_{k+1}=1$ then $\tilde{X}_{k+1}=1$.
	We take $\tilde\omega(x^a_{k+1})$ to be $\tilde{X}_{k+1}$ conditioned on $\tilde\beta_{k+1}=\omega(x^a_{k+1})$, so that $\tilde\omega(x^a_{k+1})$ is distributed according to $\tilde{Q}_{k+1}(\cdot \, \vert \, \tilde\beta_{k+1}=\omega(x^a_{k+1}))$, and declare $x^a_{k+1}$ open in $\tilde\Lambda^a$ if and only if $\tilde\omega(x^a_{k+1})=1$.
	
	Analogously, $P^a(x^a_{k+1} \text{ is open} \, \vert \, \mathcal{A}^a_{\eta,\varepsilon}(0), (\hat\omega(x^a_i))_{i=1}^k)\geq1/2$ and there is a corresponding distribution $\hat{Q}_{k+1}$. We generate $\hat\omega(x^a_{k+1})$ according to $\hat{Q}_{k+1}(\cdot \, \vert \, \hat\beta_{k+1}=\omega(x^a_{k+1}))$ and declare $x^a_{k+1}$ to be open in $\hat\Lambda^a$ if and only if $\hat\omega(x^a_{k+1})=1$.
	
	We apply the construction described above along the sequence $(x^a_i)_{i=1}^N$ of hexagons generated by the exploration process until either a common open circuit $\gamma^a$ around $0$ is produced in $\tilde\Lambda^a$ and $\hat\Lambda^a$ or all hexagons of the sequence $(x^a_i)_{i=1}^N$ have been used. We call $\mathcal{S}$ the event that a common open circuit around $0$ is produced in $\tilde\Lambda^a$ and $\hat\Lambda^a$. 
	
	Note that, by construction, if $\omega(x^a_i)=1$ then $\tilde\omega(x^a_i)=\hat\omega(x^a_i)=1$, that is, if $x^a_i$ is open in $\Lambda^a$, then it is open in both $\tilde\Lambda^a$ and $\hat\Lambda^a$. This implies that, if the algorithm used to generated the sequence $(x^a_i)_{i=1}^N$ stops because an open circuit surrounding the origin is found in $\Lambda^a$, then $\mathcal{S}$ must occur. Therefore,
	\begin{align}
	P^a_{\eta}(\mathcal{S}) \geq P^a(\exists \text{ open circuit in } A_{\eta,\delta}(0) \text{ surrounding } 0).
	\end{align}
	
	
	
	If $\mathcal{S}$ does not occur, we generate configurations in the rest of $B_{\delta}(0)$ and outside $B_{\delta}(0)$ independently, according to $P^a(\cdot \, \vert \, 0 \longleftrightarrow \partial B_{\varepsilon}(0))$ and $P^a(\cdot \, \vert \, \mathcal{A}^a_{\eta,\varepsilon}(0))$, conditioned on the values $\{\tilde\omega(x^a_i)\}_{i=1}^N$ and $\{\hat\omega(x^a_i)\}_{i=1}^N$ previously generated inside $A_{\eta,\delta}(0)$, respectively.
	
	If $\mathcal{S}$ occurs, to complete $\tilde\Lambda^a$ and $\hat\Lambda^a$ inside $\gamma^a$, we generate independent configurations according to $P^a(\cdot \, \vert \, 0 \longleftrightarrow \partial B_{\varepsilon}(0))$
	and $P^a(\cdot \, \vert \, \mathcal{A}^a_{\eta,\varepsilon}(0))$, conditioned on the values $\{\tilde\omega(x^a_i)\}_{i=1}^{N^*}$ and $\{\hat\omega(x^a_i)\}_{i=1}^{N^*}$ previously generated inside $A_{\eta,\delta}(0)$, respectively, where $N^* \leq N$ is the number of hexagons visited until $\gamma^a$ is produced.
	However, for the region outside $\gamma^a$, we generate a single configuration.
	In order to explain how to complete the construction outside $\gamma^a$ when $\mathcal{S}$ occurs, let $\tilde\omega_{\gamma^a}$ and $\hat\omega_{\gamma^a}$ denote the configurations generated inside $\gamma^a$. Outside $\gamma^a$, we need to generate configurations according to the distributions $P^a(\cdot \, \vert \, \tilde\omega_{\gamma^a}, \{\gamma^a \text{ is open}\}, 0 \longleftrightarrow \partial B_{\varepsilon}(0))$ and $P^a(\cdot \, \vert \, \hat\omega_{\gamma^a}, \{\gamma^a \text{ is open}\}, \mathcal{A}_{\eta,\delta}(0))$, respectively.
	
	Now observe that, by construction, the configurations generated inside $\gamma^a$ contain an open path from $0$ to $\gamma^a$ and from $\partial B_{\eta}(0)$ to $\gamma^a$, respectively. Therefore, for any event $\mathcal{A}$ that depends only on the states of hexagons outside $\gamma^a$,
	\begin{align}
	\begin{split} \label{eq:observation}
	& P^a(\mathcal{A} \, \vert \, \tilde\omega_{\gamma^a}, \{\gamma^a \text{ is open}\}, 0 \longleftrightarrow \partial B_{\varepsilon}(0)) \\
	& \quad = \frac{P^a(\mathcal{A}, 0 \longleftrightarrow \partial B_{\varepsilon}(0) \, \vert \, \tilde\omega_{\gamma^a}, \{\gamma^a \text{ is open}\})}{P^a(0 \longleftrightarrow \partial B_{\varepsilon}(0) \, \vert \, \tilde\omega_{\gamma^a}, \{\gamma^a \text{ is open}\})} \\
	& \quad = \frac{P^a(\mathcal{A}, \gamma^a \longleftrightarrow \partial B_{\varepsilon}(0) \, \vert \, \tilde\omega_{\gamma^a}, \{\gamma^a \text{ is open}\})}{P^a(\gamma^a \longleftrightarrow \partial B_{\varepsilon}(0) \, \vert \, \tilde\omega_{\gamma^a}, \{\gamma^a \text{ is open}\})} \\
	& \quad = P^a(\mathcal{A} \, \vert \, \tilde\omega_{\gamma^a}, \{\gamma^a \text{ is open}\}, \gamma^a \longleftrightarrow \partial B_{\varepsilon}(0)) \\
	& \quad = P^a(\mathcal{A} \, \vert \, \gamma^a \longleftrightarrow \partial B_{\varepsilon}(0)).
	\end{split}
	\end{align}
	Analogously,
	\begin{align}
	\begin{split} \label{eq:analog-observation}
	& P^a(\mathcal{A} \, \vert \, \hat\omega_{\gamma^a}, \{\gamma^a \text{ is open}\}, \mathcal{A}_{\eta,\varepsilon}(0)) \\
	& \quad = P^a(\mathcal{A} \, \vert \, \gamma^a \longleftrightarrow \partial B_{\varepsilon}(0)).
	\end{split}
	\end{align}
	
	Therefore, to complete the construction, outside $\gamma^a$ we can generate a single configuration distributed according to $P^a(\cdot \, \vert \, \gamma^a_i \longleftrightarrow \partial B_{\varepsilon}(0))$ and use it for both $\tilde\Lambda^a$ and $\hat\Lambda^a$. This means that $\tilde\Lambda^a$ and $\hat\Lambda^a$ coincide outside $B_{\delta}(0)$ and concludes the proof.
\end{proof}

\medskip

We are now ready to prove the first group of main results.

\medskip

\begin{proof}[Proof of Theorem \ref{thm:scal-lim-connection-probabilities}.]
	By standard RSW arguments (see, e.g., the proofs of Lemmas 2.1 and 2.2 of \cite{CN09}), there are constants $0<K_1<K_2<\infty$, independent of $a$, such that
	\begin{align}
	K_1 \pi_a^n \leq P^a(x^a_1,\ldots,x^a_n) \leq K_2 \pi_a^n,
	\end{align}
	which shows that $\pi_a^{-n} P^a(x^a_1,\ldots,x^a_n)$ stays bounded away from zero and infinity as $a \to 0$.
	
	We fix $\varepsilon>0$ sufficiently small so that $x_1,\ldots,x_n$ are at distance much larger than $\varepsilon$ from each other, and take sequences $x^a_i \in a\mathcal{T}$ such that $x^a_i \to x_i$ for each $i=1,\ldots,n$, as $a \to 0$.
	
	We define the event $x_j \longleftrightarrow x_k$ in the continuum as a countable intersection of decreasing events: $x_j \longleftrightarrow x_k := \cap_{m=1}^{\infty} B_{\delta_m}(x_j) \longleftrightarrow B_{\delta_m}(x_k)$, where $(\delta_m)_{m \geq 1}$ is a decreasing sequence of positive numbers going to zero as $m \to \infty$. 
	For each $\delta_m$ sufficiently small, we are going to establish the existence of the limit of the conditional probability
	\begin{align}
	\begin{split} \label{eq:conditional-probability}
	P^a\big(B_{\delta_m}(x^a_j) \longleftrightarrow B_{\delta_m}(x^a_k), \forall j,k=1,\ldots,n \, \vert \, x^a_i \longleftrightarrow \partial B_{\varepsilon}(x^a_i), \forall i=1,\ldots,n \big)
	\end{split}
	\end{align}
	as $a \to 0$.
	
	For any $a<\eta<\delta_m$, we let $\mathcal{O}^a_{\eta,\delta_m}(x^a_i)$ denote the event that a critical percolation configuration on $a\mathcal{T}$ contains an open circuit surrounding $x^a_i$ in the annulus $A_{\eta,\delta_m}(x^a_i)$. 
	The proof of Lemma \ref{lemma:coupling} can be easily generalized to produce a coupling, $P^a_{\eta;x^a_1,\ldots,x^a_n}$, between configurations $\tilde\Lambda^a$ and $\hat\Lambda^a$ distributed according to $P^a(\cdot \, \vert \, x^a_i \longleftrightarrow \partial B_{\varepsilon}(x^a_i), \forall i=1,\ldots,n)$ and $P^a(\cdot \, \vert \, \mathcal{A}^a_{\eta,\varepsilon}(x^a_i), \forall i=1,\ldots,n)$, respectively, and an event $\mathcal{S}_a$ such that
	\begin{align}
	\begin{split} \label{eq:large-probability}
	& P^a_{\eta;x^a_1,\ldots,x^a_n}\big(\mathcal{S}_a\big) \geq \prod_{i=1}^{n} P^a\big(\mathcal{O}^a_{\eta,\delta_m}(x^a_i)\big) = \Big(P^a\big(\mathcal{O}^a_{\eta,\delta_m}(0)\big)\Big)^n
	\end{split}
	\end{align}
	and, for any event $\mathcal{A}$ that depends only on the states of the hexagons of a single configuration outside  $\cup_{i=1}^{n} B_{\delta_m}(x^a_i)$,
	\begin{align}
	\begin{split} \label{eq:equality}
	& P^a_{\eta;x^a_1,\ldots,x^a_n}\big(\tilde\Lambda^a \in \mathcal{A} \, \vert \, \mathcal{S}_a\big) = P^a_{\eta;x^a_1,\ldots,x^a_n}\big(\hat\Lambda^a \in \mathcal{A} \, \vert \, \mathcal{S}_a\big).
	\end{split}
	\end{align}
	
	Letting $\mathcal{S}_a^c$ denote the complement of $\mathcal{S}_a$ and using \eqref{eq:equality} and the fact that the event $\{ B_{\delta_m}(x^a_j) \longleftrightarrow B_{\delta_m}(x^a_k), \forall j,k=1,\ldots,n \}$ depends only on hexagons outside the disks $B_{\delta_m}(x^a_i)$, we can write
	\begin{align}
	\begin{split} \label{eq:switching}
	& P^a\big(B_{\delta_m}(x^a_j) \longleftrightarrow B_{\delta_m}(x^a_k), \forall j,k=1,\ldots,n \, \vert \, x^a_i \longleftrightarrow \partial B_{\varepsilon}(x^a_i), \forall i=1,\ldots,n \big) \\
	& = P^a_{\eta;x^a_1,\ldots,x^a_n}(\tilde\Lambda^a \in B_{\delta_m}(x^a_j) \longleftrightarrow B_{\delta_m}(x^a_k), \forall j,k \, \vert \, \mathcal{S}_a) P^a_{\eta;x^a_1,\ldots,x^a_n}\big(\mathcal{S}_a\big) \\
	& \quad + P^a_{\eta;x^a_1,\ldots,x^a_n}(\tilde\Lambda^a \in B_{\delta_m}(x^a_j) \longleftrightarrow B_{\delta_m}(x^a_k), \forall j,k \, \vert \, \mathcal{S}_a^c) P^a_{\eta;x^a_1,\ldots,x^a_n}\big(\mathcal{S}_a^c\big) \\
	& = P^a_{\eta;x^a_1,\ldots,x^a_n}(\hat\Lambda^a \in B_{\delta_m}(x^a_j) \longleftrightarrow B_{\delta_m}(x^a_k), \forall j,k \, \vert \, \mathcal{S}_a) P^a_{\eta;x^a_1,\ldots,x^a_n}\big(\mathcal{S}_a\big) \\
	& \quad + P^a_{\eta;x^a_1,\ldots,x^a_n}(\tilde\Lambda^a \in B_{\delta_m}(x^a_j) \longleftrightarrow B_{\delta_m}(x^a_k), \forall j,k \, \vert \, \mathcal{S}_a^c) P^a_{\eta;x^a_1,\ldots,x^a_n}\big(\mathcal{S}_a^c\big) \\
	& = P^a\big(B_{\delta_m}(x^a_j) \longleftrightarrow B_{\delta_m}(x^a_k), \forall j,k \, \vert \, \mathcal{A}_{\eta,\varepsilon}(x^a_i), \forall i \big) \\
	& \quad + P^a_{\eta;x^a_1,\ldots,x^a_n}\big(\mathcal{S}_a^c\big)
	\Big[ P^a_{\eta;x^a_1,\ldots,x^a_n}(\tilde\Lambda^a \in B_{\delta_m}(x^a_j) \longleftrightarrow B_{\delta_m}(x^a_k), \forall j,k \, \vert \, \mathcal{S}_a^c) \\
	& \qquad - P^a_{\eta;x^a_1,\ldots,x^a_n}(\hat\Lambda^a \in B_{\delta_m}(x^a_j) \longleftrightarrow B_{\delta_m}(x^a_k), \forall j,k \, \vert \, \mathcal{S}_a^c) \Big].
	\end{split}
	\end{align}
	
	For each $m \in \mathbb{N}$ such that $\delta_m<\varepsilon$ and every $0<\eta<\delta_m$, using the convergence of percolation interfaces in the scaling limit \cite{CN06}, \eqref{eq:switching} implies that
	\begin{align}
	\begin{split} \label{eq:limsup}
	& \limsup_{a \to 0} P^a\big(B_{\delta_m}(x^a_j) \longleftrightarrow B_{\delta_m}(x^a_k), \forall j,k=1,\ldots,n \, \vert \, x^a_i \longleftrightarrow \partial B_{\varepsilon}(x^a_i), \forall i \big) \\
	& \quad \leq \mathbb{P}\big(B_{\delta_m}(x_j) \longleftrightarrow B_{\delta_m}(x_k), \forall j,k=1,\ldots,n \, \vert \, \mathcal{A}_{\eta,\varepsilon}(x_i), \forall i=1,\ldots,n \big) \\
	& \qquad + \limsup_{a \to 0} \Big(1 - P^a_{\eta;x^a_1,\ldots,x^a_n}\big(\mathcal{S}_a\big)\Big)
	\end{split}
	\end{align}
	and
	\begin{align}
	\begin{split} \label{eq:liminf}
	& \liminf_{a \to 0} P^a\big(B_{\delta_m}(x^a_j) \longleftrightarrow B_{\delta_m}(x^a_k), \forall j,k=1,\ldots,n \, \vert \, x^a_i \longleftrightarrow \partial B_{\varepsilon}(x^a_i), \forall i \big) \\
	& \quad \geq \mathbb{P}\big(B_{\delta_m}(x_j) \longleftrightarrow B_{\delta_m}(x_k), \forall j,k=1,\ldots,n \, \vert \, \mathcal{A}_{\eta,\varepsilon}(x_i), \forall i=1,\ldots,n \big) \\
	& \qquad - \liminf_{a \to 0} \Big(1 - P^a_{\eta;x^a_1,\ldots,x^a_n}\big(\mathcal{S}_a\big)\Big).
	\end{split}
	\end{align}
	It follows that
	\begin{align}
	\begin{split} \label{eq:liminf-limsup}
	& \liminf_{a \to 0} P^a\big(B_{\delta_m}(x^a_j) \longleftrightarrow B_{\delta_m}(x^a_k), \forall j,k \, \vert \, x^a_i \longleftrightarrow \partial B_{\varepsilon}(x^a_i), \forall i \big) \\
	& \quad \geq \limsup_{a \to 0} P^a\big(B_{\delta_m}(x^a_j) \longleftrightarrow B_{\delta_m}(x^a_k), \forall j,k \, \vert \, x^a_i \longleftrightarrow \partial B_{\varepsilon}(x^a_i), \forall i \big) \\
	& \qquad - \Big(1 - \liminf_{a \to 0} P^a_{\eta;x^a_1,\ldots,x^a_n}\big(\mathcal{S}_a\big)\Big) - \Big(1 - \limsup_{a \to 0} P^a_{\eta;x^a_1,\ldots,x^a_n}\big(\mathcal{S}_a\big)\Big).
	\end{split}
	\end{align}
	
	Standard RSW arguments (see, for example, \cite{grimmett-book}) imply that
	\begin{align}
    \lim_{\eta \to 0} \liminf_{a \to 0} P^a(\mathcal{O}^a_{\eta,\delta_m}(x^a_i)) = 1.
	\end{align}
	Therefore, sending $\eta$ to zero in \eqref{eq:liminf-limsup} and using \eqref{eq:large-probability} shows that the limit as $a \to 0$ of \eqref{eq:conditional-probability} exists. Moreover, \eqref{eq:liminf} and \eqref{eq:limsup} imply that
	\begin{align}
	\begin{split} \label{def:conditional-prob}
	& \lim_{a \to 0} P^a\big(B_{\delta_m}(x^a_j) \longleftrightarrow B_{\delta_m}(x^a_k), \forall j,k \, \vert \, x^a_i \longleftrightarrow \partial B_{\varepsilon}(x^a_i), \forall i \big) \\
	& \quad = \lim_{\eta \to 0} \mathbb{P}\big(B_{\delta_m}(x_j) \longleftrightarrow B_{\delta_m}(x_k), \forall j,k \, \vert \, \mathcal{A}_{\eta,\varepsilon}(x_i), \forall i \big) \\
	& \quad =: \mathbb{P}\big(B_{\delta_m}(x_j) \longleftrightarrow B_{\delta_m}(x_k), \forall j,k \, \vert \, \mathcal{A}_{0,\varepsilon}(x_i), \forall i \big) \\
	& \quad \equiv \mathbb{P}\big(B_{\delta_m}(x_j) \longleftrightarrow B_{\delta_m}(x_k), \forall j,k \, \vert \, x_i \longleftrightarrow \partial B_{\varepsilon}(x_i), \forall i \big).
	\end{split}
	\end{align}
	
	\begin{remark} \label{rem:annuli}
		The choice of disks $B_{\varepsilon}(x^a_i)$ and annuli $A_{\eta,\varepsilon}(x^a_i)$ in the argument above is not essential. Any choice of simply connected sets $\tilde{B}_{\varepsilon}(x^a_i)$ and of ``annuli'' $\tilde{A}_{\eta,\delta_m}(x^a_i) = \tilde{B}_{\delta_m}(x^a_i) \setminus \tilde{B}_{\eta}(x^a_i)$, where $\tilde{B}_{\eta}(x^a_i) \subset \tilde{B}_{\delta_m}(x^a_i)$ are sets centered at $x^a_i$ of diameters $2\eta$ and $2\delta_m$, would work. This follows from the observation that, using standard RSW arguments, one can show that $\lim_{\eta \to 0} \liminf_{a \to 0} P^a(\tilde{\mathcal{O}}^a_{\eta,\delta_m}(x^a_i)) = 1$, where $\tilde{\mathcal{O}}^a_{\eta,\delta_m}(x^a_i)$ is the event that $\tilde{A}_{\eta,\delta_m}(x^a_i)$ contains an open circuit surrounding $x^a_i$. 
	\end{remark}
	
	Using the FKG inequality for increasing events, standard RSW arguments imply that there are constants $c^a_m$ such that
	\begin{align}
	\begin{split} \label{eq:sandwich}
	& P^a\big(B_{\delta_m}(x^a_j) \longleftrightarrow B_{\delta_m}(x^a_k), \forall j,k \, \vert \, x^a_i \longleftrightarrow \partial B_{\varepsilon}(x^a_i), \forall i \big) \\
	& \quad \geq P^a\big(x^a_j \longleftrightarrow x^a_k, \forall j,k \, \vert \, x^a_i \longleftrightarrow \partial B_{\varepsilon}(x^a_i), \forall i \big) \\
	& \qquad \geq c^a_m P^a\big(B_{\delta_m}(x^a_j) \longleftrightarrow B_{\delta_m}(x^a_k), \forall j,k \, \vert \, x^a_i \longleftrightarrow \partial B_{\varepsilon}(x^a_i), \forall i \big),
	\end{split}
	\end{align}
	where $c^a_m$ can be taken to be the probability of an open circuit in $A_{\delta_m,\varepsilon}(0)$ surrounding $0$, so that $c^a_m \leq 1$ and $\lim_{m \to \infty}\liminf_{a \to 0} c^a_m = 1$ (see Figure \ref{fig-RSW+FKG}). Using \eqref{def:conditional-prob}, this implies that
	\begin{align}
	\begin{split} \label{eq:sandwiched-limit}
	& \lim_{a \to 0} P^a\big(x^a_j \longleftrightarrow x^a_k, \forall j,k \, \vert \, x^a_i \longleftrightarrow \partial B_{\varepsilon}(x^a_i), \forall i \big) \\
	& \quad = \lim_{m \to \infty} \mathbb{P}\big(B_{\delta_m}(x_j) \longleftrightarrow B_{\delta_m}(x_k), \forall j,k \, \vert \, x_i \longleftrightarrow \partial B_{\varepsilon}(x_i), \forall i \big) \\
	& \quad = \mathbb{P}\big(x_j \longleftrightarrow x_k, \forall j,k \, \vert \, x_i \longleftrightarrow \partial B_{\varepsilon}(x_i), \forall i \big).
	\end{split}
	\end{align}
	
	\begin{figure}[!ht]
		\begin{center}
			\includegraphics[width=7cm]{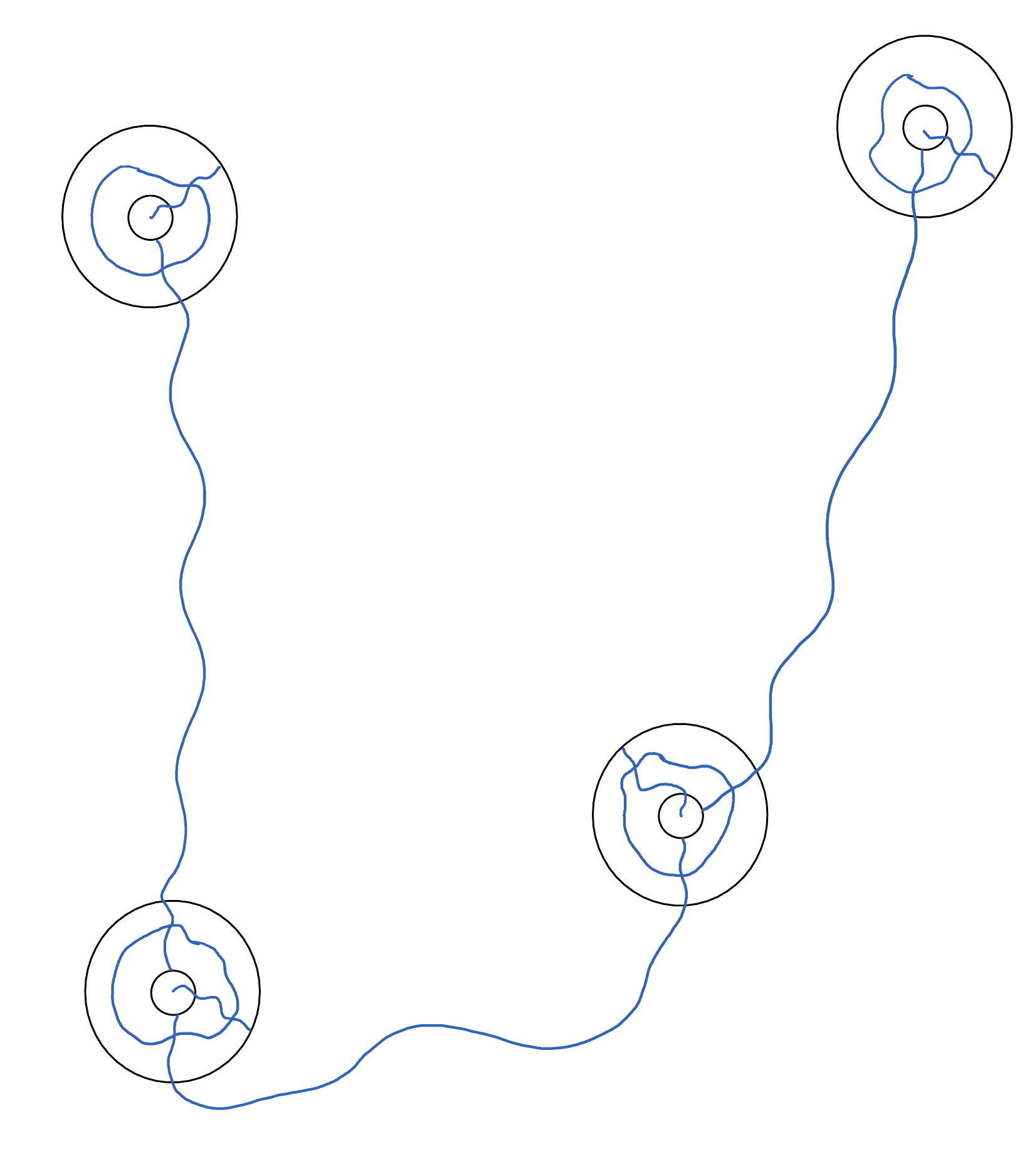}
			\caption{Sketch of the RSW argument used in conjunction with the FKG inequality to obtain \eqref{eq:sandwich}. The four annuli represent the annuli $A_{\delta_m,\varepsilon}(x^a_i)$ and the wiggly lines represent open paths and circuits.}
			\label{fig-RSW+FKG}
		\end{center}
	\end{figure}
	
	The fact that the event $x^a_j \longleftrightarrow x^a_k$ implies the intersection of the independent events $x^a_j \longleftrightarrow \partial B_{\varepsilon}(x^a_j)$ and $x^a_k \longleftrightarrow \partial B_{\varepsilon}(x^a_k)$, combined with \eqref{eq:one-arm-limit}, \eqref{eq:sandwiched-limit} and \eqref{def:conditional-prob}, leads to
	\begin{align}
	\begin{split} \label{eq:lim-pi-P}
	& \lim_{a \to 0} \pi_a^{-n} P^a_{n}(x^a_1,\ldots,x^a_n) \\
	& \quad = \lim_{a \to 0} \pi_a^{-n} P^a\big(x^a_j \longleftrightarrow x^a_k, \forall j,k \, \vert \, x^a_i \longleftrightarrow \partial B_{\varepsilon}(x^a_i), \forall i \big) \\
	& \qquad \prod_{i=1}^{n} P^a\big(x^a_i \longleftrightarrow \partial B_{\varepsilon}(x^a_i)\big) \\
	& \quad = \Big( \prod_{i=1}^{n} \lim_{a \to 0} \pi_a^{-1} P^a\big(x^a_i \longleftrightarrow \partial B_{\varepsilon}(x^a_i)\big) \Big) \\
	& \qquad \lim_{a \to 0} P^a\big(x^a_j \longleftrightarrow x^a_k, \forall j,k \, \vert \, x^a_i \longleftrightarrow \partial B_{\varepsilon}(x^a_i), \forall i \big) \\
	& \quad = \varepsilon^{-5n/48} \, \mathbb{P}\big(x_j \longleftrightarrow x_k, \forall j,k \, \vert \, x_i \longleftrightarrow \partial B_{\varepsilon}(x_i), \forall i \big) \\
	& \quad = \varepsilon^{-5n/48} \, \lim_{m \to \infty} \lim_{\eta \to 0} \mathbb{P}\big(B_{\delta_m}(x_j) \longleftrightarrow B_{\delta_m}(x_k), \forall j,k \, \vert \, \mathcal{A}_{\eta,\varepsilon}(x_i), \forall i \big),
	\end{split}
	\end{align}
	which proves that the limit exists, concluding the first part of the proof.
	Moreover, from the last line of \eqref{eq:lim-pi-P}, we can see that the limit has the same invariance properties as $\mathbb{P}$ (i.e., as the distribution of CLE$_6$) under translations, rotations and reflections, which map disks into disks of the same size.
	
	In order to prove scale covariance, we first note that the discussion above is valid for all $\varepsilon>0$ sufficiently small.
	Now consider a scale transformation $x \mapsto x'=sx$, for some $s>0$, and take $\varepsilon$ so small that \eqref{eq:lim-pi-P} is still valid when $\varepsilon$ is replaced by $\varepsilon/s$. Then, \eqref{eq:lim-pi-P} and the scale invariance of $\mathbb{P}$ imply
	\begin{align}
	\begin{split}
	& P_n(x'_1,\ldots,x'_n) = \varepsilon^{-5n/48} \, \mathbb{P}\big(x'_j \longleftrightarrow x'_k, \forall j,k \, \vert \, x'_i \longleftrightarrow \partial B_{\varepsilon}(x'_i), \forall i \big) \\
	& \quad = s^{-5n/48} (\varepsilon/s)^{-5n/48} \, \mathbb{P}\big(x_j \longleftrightarrow x_k, \forall j,k \, \vert \, x_i \longleftrightarrow \partial B_{\varepsilon/s}(x_i), \forall i \big) \\
	& \quad = s^{-5n/48} P_n(x_1,\ldots,x_n),
	\end{split}
	\end{align}
	as desired.
	
	We now identify $\mathbb{R}^2$ with the complex plane $\mathbb{C}$ and consider a generic M\"obius transformation,
	\begin{align}
	M(x)=\frac{ax+b}{cx+d},
	\end{align}
	with $ad-bc \neq 0$. If $c \neq 0$, $M$ can always be decomposed into the following sequence of transformations (see, for example, \cite{Needham}):
	\begin{enumerate}
		\item $x \mapsto x+\frac{d}{c}$,
		\item $x \mapsto 1/x$,
		\item $x \mapsto \frac{bc-ad}{c^2}x$,
		\item $x \mapsto x+\frac{a}{c}$.
	\end{enumerate}
	In addition, letting $\iota$ denote the imaginary unit and writing $x=\rho e^{\iota\theta}$, the inversion map $x \mapsto 1/x$ can be decomposed into a circle inversion and a reflection in the real axis (i.e., complex conjugation):
	\begin{align} \label{inversion}
	\begin{split}
	& (1) \; x=\rho e^{\iota\theta} \mapsto \text{Inv}(x) := \frac{1}{\rho} e^{\iota\theta} = 1/\bar{x}, \\
	& (2) \; 1/\bar{x} \mapsto \overline{1/\bar{x}}=1/x.
	\end{split}
	\end{align}
	
	Since we have already shown that $P_n$ is invariant under translations, rotations and reflections, and scales covariantly under scale transformations, in order to obtain full covariance under all M\"obius transformations, it suffices to prove covariance under circle inversion. Using the invariance under circle inversion of $\mathbb{P}$ \cite{GMQ21} and \eqref{eq:lim-pi-P}, and assuming that $x_1,\ldots,x_n \neq 0$, we have that, for any $\varepsilon$ sufficiently small,
	\begin{align}
	\begin{split} \label{eq:PInv}
	& P_n(\text{Inv}(x_1),\ldots,\text{Inv}(x_n)) \\
	& \quad = \varepsilon^{-5n/48} \, \mathbb{P}\big(\text{Inv}(x_j) \longleftrightarrow \text{Inv}(x_k), \forall j,k \, \vert \, \text{Inv}(x_i) \longleftrightarrow \partial B_{\varepsilon}(\text{Inv}(x_i)), \forall i \big) \\
	& \quad = \varepsilon^{-5n/48} \, \mathbb{P}\big(x_j \longleftrightarrow x_k, \forall j,k \, \vert \, x_i \longleftrightarrow \text{Inv}\big(\partial B_{\varepsilon}(\text{Inv}(x_i))\big), \forall i \big).
	\end{split}
	\end{align}
	
	Note that the circle inversion, $\text{Inv}$, maps circles to generalized circles (i.e., either a circle or a straight line), but does not necessarily map centers to centers (unless the center of the circle is $0$). This means that, in general, $\text{Inv}\big(\partial B_{\varepsilon}(\text{Inv}(x_i))\big)$ is a circle whose center is not $x_i$. Given a collection of points, $x_1,\ldots,x_n \neq 0$, we can take $\varepsilon$ so small that $0 \notin B_{\varepsilon}(x_i)$ for each $i=1,\ldots,n$. In this case, $\text{Inv}\big(\partial B_{\varepsilon}(\text{Inv}(x_i))\big)$ is a circle whose interior is $\text{Inv}\big(B_{\varepsilon}(\text{Inv}(x_i))\big)$.
	
	We can focus on the case $x_i=\rho_i e^{\iota\theta_i}$ with $\rho_i \geq 1$, sketched in Figure \ref{fig-circ-inv}, since the case $\rho_i \leq 1$ is completely analogous. We take $\varepsilon<1/\rho_i$ for each $i=1,\ldots,n$. Then, the straight line passing through $x_i$ and $\text{Inv}(x_i)$ has two intersections with the circle $\partial B_{\varepsilon}(\text{Inv}(x_i))$:
	\begin{align}
	x_i^1 \equiv (1/\rho_i-\varepsilon) e^{\iota\theta_i} \text{ and } x_i^2 \equiv (1/\rho_i+\varepsilon) e^{\iota\theta_i}.
	\end{align}
	Their circle inversions are
	\begin{align}
	\text{Inv}(x_i^1) = \frac{\rho_i}{1-\rho_i\varepsilon} e^{\iota\theta_i} \text{ and } \text{Inv}(x_i^2) = \frac{\rho_i}{1+\rho_i\varepsilon} e^{\iota\theta_i}
	\end{align}
	and the distances of $\text{Inv}(x_i^1)$ and $\text{Inv}(x_i^2)$ from $x_i$ are
	\begin{align}
	d_i^1 \equiv \rho_i^2 \frac{\varepsilon}{1-\rho_i\varepsilon} \text{ and } d_i^2 \equiv \rho_i^2 \frac{\varepsilon}{1+\rho_i\varepsilon},
	\end{align}
	respectively (see Figure \ref{fig-circ-inv}).
	Therefore, we have that
	\begin{align} \label{eq:inclusions}
	B_{d_i^2}(x_i) \subset \text{Inv}\big(B_{\varepsilon}(\text{Inv}(x_i))\big) \subset B_{d_i^1}(x_i).
	\end{align}
	
	\begin{figure}[!ht]
		\begin{center}
			\includegraphics[width=10cm]{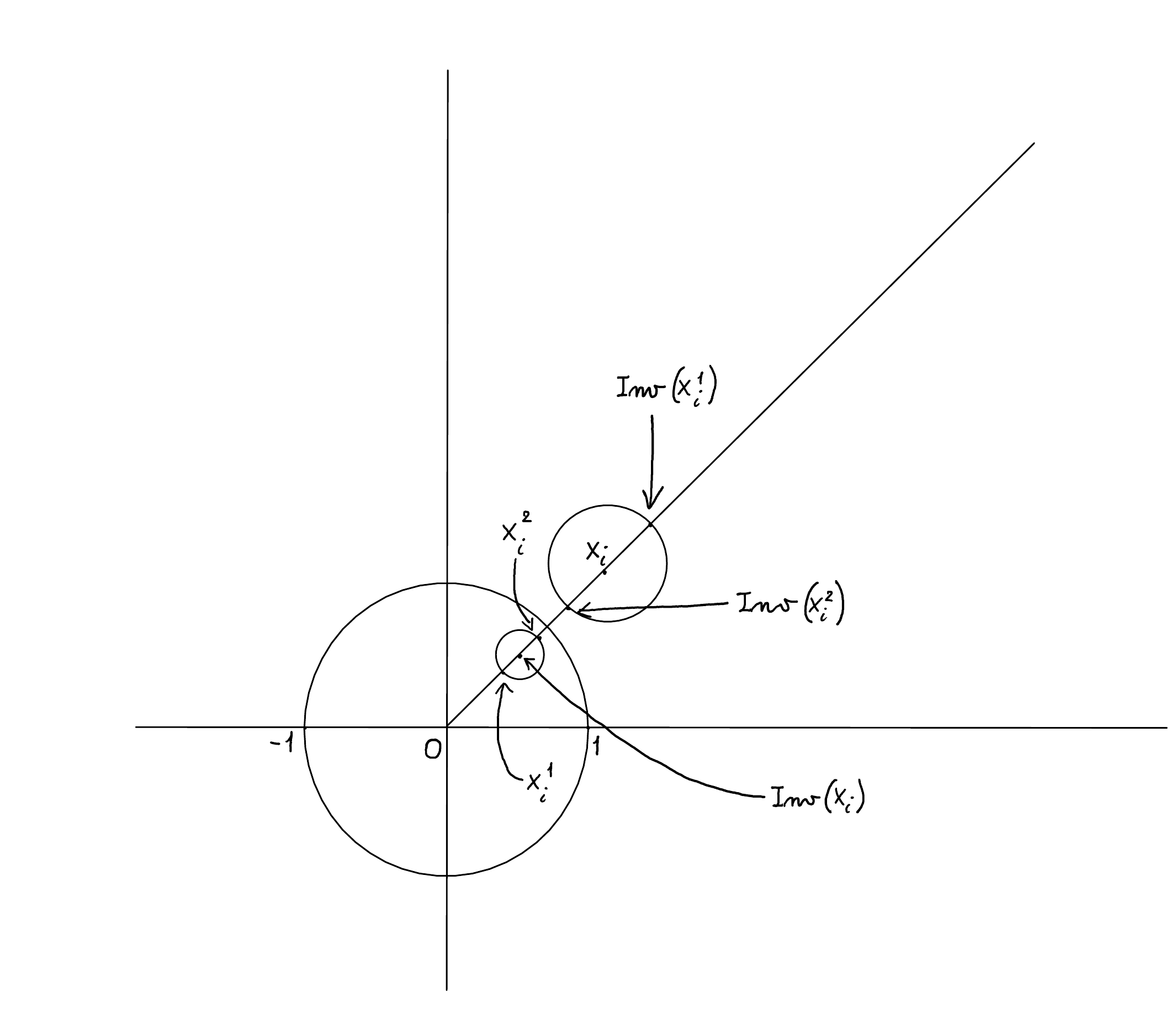}
			\caption{Sketch of the circle inversion of a disk with the notation used in the main text. The small circle around $\text{Inv}(x_i)$ is centered at $\text{Inv}(x_i)$ and has radius $\varepsilon$. The circle centered at 0 has radius 1. The third circle is the inversion of the circle of radius $\varepsilon$ and is not centered at $x_i$.}
			\label{fig-circ-inv}
		\end{center}
	\end{figure}
	
	Now note that, if $x_i \in \Delta^{x_i}_2 \subset \Delta^{x_i}_1$, where $\Delta^{x_i}_1$ and $\Delta^{x_i}_2$ are disks, then
	\begin{align} \label{eq:bound}
	\mathbb{P}\big(x_j \longleftrightarrow x_k, \forall j,k \, \vert \, x_i \longleftrightarrow \partial \Delta^{x_i}_2, \forall i \big) \leq \mathbb{P}\big(x_j \longleftrightarrow x_k, \forall j,k \, \vert \, x^a_i \longleftrightarrow \partial \Delta^{x_i}_1, \forall i \big).
	\end{align}
	To see why this is true, let's first consider two disks, $\Delta_1$ and $\Delta_2$, such that $0 \in \Delta_2 \subset \Delta_1$. For any $x^a \in a\mathcal{T}$ not contained $\Delta_1$, we have that
	\begin{align}
	& P^a\big(x^a \longleftrightarrow 0 \, \vert \, 0 \longleftrightarrow \partial\Delta_2 \big) = \frac{P^a(x^a \longleftrightarrow 0)}{P^a(0 \longleftrightarrow \partial\Delta_2)} \\
	& \qquad \leq \frac{P^a(x^a \longleftrightarrow 0)}{P^a(0 \longleftrightarrow \partial\Delta_1)} = P^a\big(x^a \longleftrightarrow 0 \, \vert \, 0 \longleftrightarrow \partial\Delta_1 \big).
	\end{align}
	The same argument works with $n$ points and pairs of disks, $x_i \in \Delta^{x_i}_2 \subset \Delta^{x_i}_1$ for $i=1,\ldots,n$, assuming that $\Delta^{x_j}_1 \cap \Delta^{x_k}_1=\emptyset$ for $j \neq k$. This leads to the inequality
	\begin{align} \label{eq:conditioning-nested-domains}
	\begin{split}
	& P^a\big(x^a_j \longleftrightarrow x^a_k, \forall j,k \, \vert \, x^a_i \longleftrightarrow \partial \Delta^{x_i}_2, \forall i \big) \\
	& \qquad \leq P^a\big(x^a_j \longleftrightarrow x^a_k, \forall j,k \, \vert \, x^a_i \longleftrightarrow \partial \Delta^{x_i}_1, \forall i \big).
	\end{split}
	\end{align}
	Letting $a \to 0$, and using arguments analogous to those leading to \eqref{eq:sandwiched-limit} to show the existence of the limits (see Remark \ref{rem:annuli}), gives \eqref{eq:bound}.
	
	Combining \eqref{eq:bound} and \eqref{eq:inclusions} with \eqref{eq:PInv} and \eqref{inversion} gives
	\begin{align}
	\begin{split} \label{eq:PInv-prebounds}
	& \varepsilon^{-5n/48} \, \mathbb{P}\big(x_j \longleftrightarrow x_k, \forall j,k \, \vert \, x_i \longleftrightarrow \partial B_{d^2_i}(x_i), \forall i \big) \\
	& \leq P_n(1/x_1,\ldots,1/x_n) \leq \varepsilon^{-5n/48} \, \mathbb{P}\big(x_j \longleftrightarrow x_k, \forall j,k \, \vert \, x_i \longleftrightarrow \partial B_{d^1_i}(x_i), \forall i \big),
	\end{split}
	\end{align}
	which can be written as
	\begin{align}
	\begin{split} \label{eq:PInv-bounds}
	& \Big( \prod_{i=1}^{n} \Big(\frac{1+\rho_i\varepsilon}{\rho_i^{2}} \, d_i^2\Big)^{-5/48} \Big)\mathbb{P}\big(x_j \longleftrightarrow x_k, \forall j,k \, \vert \, x_i \longleftrightarrow \partial B_{d^2_i}(x_i), \forall i \big) \\
	& \leq P_n(1/x_1,\ldots,1/x_n) \\
	& \leq \Big( \prod_{i=1}^{n} \Big(\frac{1-\rho_i\varepsilon}{\rho_i^{2}} \, d_i^1\Big)^{-5/48} \Big) \mathbb{P}\big(x_j \longleftrightarrow x_k, \forall j,k \, \vert \, x_i \longleftrightarrow \partial B_{d^1_i}(x_i), \forall i \big).
	\end{split}
	\end{align}
	
	We now observe that the arguments leading to \eqref{eq:lim-pi-P} do not require that the disks $B_{\varepsilon}(x_i)$ have the same diameter. In other words, we could have chosen a collection of disks of different radii, for example $\{ d_i^{1} \}_{i=1}^n$ or $\{ d_i^{2} \}_{i=1}^n$. This means that \eqref{eq:PInv-bounds} can be rewritten~as
	\begin{align}
	\begin{split}
	& \Big( \prod_{i=1}^{n} (\rho_i^{-2})^{-5/48} (1+\rho_i\varepsilon)^{-5/48} \Big) \, P_n(x_1,\ldots,x_n) \leq P_n(1/x_1,\ldots,1/x_n) \\
	& \qquad \leq \Big( \prod_{i=1}^{n} (\rho_i^{-2})^{-5/48} (1-\rho_i\varepsilon)^{-5/48} \Big) \, P_n(x_1,\ldots,x_n).
	\end{split}
	\end{align}
	Since this is true for all sufficiently small $\varepsilon>0$, we conclude that
	\begin{align}
	\begin{split} \label{eq:PInv-final}
	& P_n(1/x_1,\ldots,1/x_n) = \Big( \prod_{i=1}^{n} (\rho_i^{-2})^{-5/48} \Big) \, P_n(x_1,\ldots,x_n),
	\end{split}
	\end{align}
	which completes the proof of the theorem.
\end{proof}

\medskip

\begin{proof}[Proof of Theorem \ref{thm:scal-lim-bounded-domains}.]
	Letting $\mathbb{P}_D$ denote the distribution of the full scaling limit of percolation in $D$ (i.e., of nested CLE$_6$ in $D$) and following the proof of Theorem \ref{thm:scal-lim-connection-probabilities}, we obtain
	\begin{align}
	\begin{split} \label{eq:lim-pi-P-D}
	& \lim_{a \to 0} \pi_a^{-n} P^a_{D,n}(x^a_1,\ldots,x^a_n) \\
	& \quad = \varepsilon^{-5n/48} \, \mathbb{P}_D\big(x_j \longleftrightarrow x_k, \forall j,k=1,\ldots,n, \, \vert \, x_i \longleftrightarrow \partial B_{\varepsilon}(x_i), \forall i \big)
	\end{split}
	\end{align}
	and, using the conformal invariance properties of $\mathbb{P}_D$ and Remark \ref{rem:annuli},
	\begin{align}
	\begin{split}
	& P_{D',n}(\phi(x_1),\ldots,\phi(x_n)) \\
	& \quad = \varepsilon^{-5n/48} \, \mathbb{P}_{D'}\big(\phi(x_j) \longleftrightarrow \phi(x_k), \forall j,k \, \vert \, \phi(x_i) \longleftrightarrow \partial B_{\varepsilon}(\phi(x_i)), \forall i \big) \\
	& \quad = \varepsilon^{-5n/48} \, \mathbb{P}_D\big(x_j \longleftrightarrow x_k, \forall j,k \, \vert \, x_i \longleftrightarrow \phi^{-1}\big(\partial B_{\varepsilon}(\phi(x_i))\big), \forall i \big).
	\end{split}
	\end{align}
	
	Now let $s_i \equiv \phi'(x_i)$ for each $i=1,\ldots,n$ and let $A_{r_i,R_i}(x_i) = B_{R_i}(x_i) \setminus B_{r_i}(x_i)$ denote the thinnest annulus centered at $x_i$ containing the symmetric difference of $\phi^{-1}(B_{\varepsilon}(\phi(x_i)))$ and $B_{\varepsilon/s_i}(x_i)$. Since $\phi^{-1}$ is analytic and $(\phi^{-1})'(\phi(x_i))=1/s_i$, for every $w \in \partial B_{\varepsilon}(\phi(x_i))$, $|x_i-\phi^{-1}(w)| = \varepsilon/s_i + O(\varepsilon^2)$, which implies that
	\begin{align} \label{eq:limits}
	\lim_{\varepsilon \to 0} \frac{r_i}{\varepsilon} = \lim_{\varepsilon \to 0} \frac{R_i}{\varepsilon} = \frac{1}{s_i}.
	\end{align}
	
	Using \eqref{eq:bound}, we have that
	\begin{align}
	\begin{split} \label{eq:PInv-prebounds-D}
	& \Big(\prod_{i=1}^n \Big(\frac{\varepsilon}{r_i}\Big)^{-5/48} r_i^{-5/48}\Big) \, \mathbb{P}_D\big(x_j \longleftrightarrow x_k, \forall j,k \, \vert \, x_i \longleftrightarrow \partial B_{r_i}(x_i), \forall i \big) \\
	& \quad \leq P_{D',n}(\phi(x_1),\ldots,\phi(x_n)) \\
	& \quad \leq \Big(\prod_{i=1}^n \Big(\frac{\varepsilon}{R_i}\Big)^{-5/48} R_i^{-5/48}\Big) \, \mathbb{P}_D\big(x_j \longleftrightarrow x_k, \forall j,k \, \vert \, x_i \longleftrightarrow \partial B_{R_i}(x_i), \forall i \big).
	\end{split}
	\end{align}
	
	Sending $\varepsilon \to 0$ and using \eqref{eq:limits} and \eqref{eq:lim-pi-P} gives
	\begin{align}
	P_{D',n}(\phi(x_1),\ldots,\phi(x_n)) = \Big(\prod_{i=1}^n s_i^{-5/48}\Big) \, P_{D,n}\big(x_1, \ldots, x_n \big),
	\end{align}
	which concludes the proof.
\end{proof}

The proof of Theorem \ref{thm:scal-lim-correlation-functions} is similar to those of Theorems \ref{thm:scal-lim-connection-probabilities} and \ref{thm:scal-lim-bounded-domains}, but it involves non-increasing events and therefore requires some modifications. We sketch it below, highlighting the differences.

\begin{proof}[Proof of Theorem \ref{thm:scal-lim-correlation-functions}]
	We focus on existence of the limit on the plane and on invariance under M\"obius transformations since the same arguments apply to the case of other simply-connected domains and of local conformal maps. Thanks to \eqref{eq:n-point-function}, it is enough to show that, for any partition $(Q_1,\ldots,Q_k) \in \mathcal{Q}(x_1,\ldots,x_n)$, the desired properties are satisfied by the limit
	\begin{equation}
	\lim_{a \to 0} \pi_a^{-n} P^a(G(Q_1,\ldots,Q_k)).
	\end{equation}
	
	Standard RSW arguments show that $\pi_a^{-n} P^a(G(Q_1,\ldots,Q_k))$ is bounded away from zero and infinity as $a \to 0$. Following the proof of Theorem \ref{thm:scal-lim-connection-probabilities}, we fix $\varepsilon>0$ sufficiently small so that $x_1,\ldots,x_n$ are at distance much larger than $\varepsilon$ from each other, and take sequences $x^a_i \in a\mathcal{T}$ such that $x^a_i \to x_i$ for each $i=1,\ldots,n$ as $a \to 0$. Using the fact that the event $x^a_j \longleftrightarrow x^a_k$ implies the intersection of the independent events $x^a_j \longleftrightarrow \partial B_{\varepsilon}(x^a_j)$ and $x^a_k \longleftrightarrow \partial B_{\varepsilon}(x^a_k)$, combined with \eqref{eq:one-arm-limit}, we can write
	\begin{align}
	\begin{split} \label{eq:lim-pi-PG}
	& \lim_{a \to 0} \pi_a^{-n} P^a(G(Q^a_1,\ldots,Q^a_k)) \\
	& \quad = \lim_{a \to 0} \pi_a^{-n} P^a\big(G(Q^a_1,\ldots,Q^a_k) \vert \, x^a_j \longleftrightarrow \partial B_{\varepsilon}(x^a_j), \forall i=1,\ldots,n \big) \\
	& \qquad \prod_{i=1}^{n} P^a\big(x^a_i \longleftrightarrow \partial B_{\varepsilon}(x^a_i)\big) \\
	& \quad = \varepsilon^{-5n/48} \lim_{a \to 0} P^a\big(G(Q^a_1,\ldots,Q^a_k) \, \vert \, x^a_i \longleftrightarrow \partial B_{\varepsilon}(x^a_i), \forall i=i,\ldots,n \big).
	\end{split}
	\end{align}
	
	The proof that $\lim_{a \to 0} P^a\big(G(Q^a_1,\ldots,Q^a_k) \, \vert \, x^a_i \longleftrightarrow \partial B_{\varepsilon}(x^a_i), \forall i=i,\ldots,n \big)$ exists is similar to that of the existence of the first limit in \eqref{eq:sandwiched-limit}, but the event $G(Q^a_1,\ldots,Q^a_k)$ is not an increasing event, so we cannot use the FKG inequality. 
	We focus for simplicity on the case $n=4$ and $(Q^a_1,Q^a_2)=(\{x^a_1,x^a_2\},\{x^a_3,x^a_4\})$, and observe that the event $x^a_1 \longleftrightarrow x^a_2 \centernot\longleftrightarrow x^a_3 \longleftrightarrow x^a_4$ (i.e., $x^a_1$ and $x^a_2$ are in the same open cluster, $x^a_3$ and $x^a_4$ are in the same open cluster, $x^a_2$ and $x^a_3$ are not in the same cluster) implies that, for all $m \in \mathbb{N}$, if we declare closed all the hexagons inside $B_{\delta_m}(x^a_i)$, $i=1,\ldots,4$, there are two disjoint open clusters connecting $B_{\delta_m}(x^a_1)$ to $B_{\delta_m}(x^a_2)$ and $B_{\delta_m}(x^a_3)$ to $B_{\delta_m}(x^a_4)$, respectively. If we denote by $B_{\delta_m}(x^a_1) \longleftrightarrow B_{\delta_m}(x^a_2) \circ B_{\delta_m}(x^a_3) \longleftrightarrow B_{\delta_m}(x^a_4)$ the latter event,
	we can replace \eqref{eq:sandwich} by
	\begin{align}
	\begin{split} \label{eq:new-sandwich}
	& P^a\big(B_{\delta_m}(x^a_1) \longleftrightarrow B_{\delta_m}(x^a_2) \circ B_{\delta_m}(x^a_3) \longleftrightarrow B_{\delta_m}(x^a_4) \, \vert \, x^a_i \longleftrightarrow \partial B_{\varepsilon}(x^a_i), \forall i \big) \\
	& \geq P^a\big(x^a_1 \longleftrightarrow x^a_2 \centernot\longleftrightarrow x^a_3 \longleftrightarrow x^a_4 \, \vert \, x^a_i \longleftrightarrow \partial B_{\varepsilon}(x^a_i), \forall i \big) \\
	& \geq P^a\big(B_{\delta_m}(x^a_1) \longleftrightarrow B_{\delta_m}(x^a_2), B_{\delta_m}(x^a_3) \longleftrightarrow B_{\delta_m}(x^a_4), B_{\delta_k}(x^a_1) \centernot\longleftrightarrow B_{\delta_k}(x^a_3), \\ 
	& \qquad \quad \mathcal{O}_{\delta_m,\varepsilon}(x^a_i), \forall i \, \vert \, x^a_i \longleftrightarrow \partial B_{\varepsilon}(x^a_i), \forall i \big),
	\end{split}
	\end{align}
	where $x^a_1 \centernot\longleftrightarrow x^a_3$ is the event that $x^a_1$ and $x^a_3$ are not in the same open cluster, $\mathcal{O}^a_{\delta_m,\varepsilon}(x_i)$ denotes the event that there is an open circuit in $A_{\delta_m,\varepsilon}(x^a_i)$ surrounding $x^a_i$, and $k \leq m$ (so that $\delta_k\geq\delta_m$).
	
	The proof that
	\begin{align}
	\begin{split} \label{eq:new-conditional-probability}
	& P^a\big(B_{\delta_m}(x^a_1) \longleftrightarrow B_{\delta_m}(x^a_2), B_{\delta_m}(x^a_3) \longleftrightarrow B_{\delta_m}(x^a_4), B_{\delta_k}(x^a_1) \centernot\longleftrightarrow B_{\delta_k}(x^a_3), \\
	& \qquad \mathcal{O}_{\delta_m,\varepsilon}(x^a_i), \forall i \, \vert \, x^a_i \longleftrightarrow \partial B_{\varepsilon}(x^a_i), \forall i \big)
	\end{split}
	\end{align}
	has a limit as $a \to 0$ is the same as the proof, given earlier, that \eqref{eq:conditional-probability} has a limit as $a \to 0$. The fact that the same proof applies to \eqref{eq:new-conditional-probability} follows from the observation that the conditioning is the same and that the event
	\begin{align}
	\begin{split}
	& \{B_{\delta_m}(x^a_1) \longleftrightarrow B_{\delta_m}(x^a_2) \cap B_{\delta_m}(x^a_3) \longleftrightarrow B_{\delta_m}(x^a_4) \cap B_{\delta_k}(x^a_1) \centernot\longleftrightarrow B_{\delta_k}(x^a_3) \\
	& \quad \cap \mathcal{O}_{\delta_m,\varepsilon}(x^a_i), \forall i\}
	\end{split}
	\end{align}
	in \eqref{eq:new-conditional-probability} depends only on hexagons outside the disks $B_{\delta_m}(x^a_i)$, just like the event $\{B_{\delta_m}(x^a_j) \longleftrightarrow B_{\delta_m}(x^a_k), \forall j,k=1,\ldots,n\}$ in \eqref{eq:conditional-probability}.
	
	Moreover, using the fact that
	\begin{align}
	\begin{split}
	& \lim_{m \to \infty}\liminf_{a \to 0}P^a\big(\mathcal{O}^a_{\delta_m,\varepsilon}(x^a_i) \, \vert \, x^a_i \longleftrightarrow \partial B_{\varepsilon}(x^a_i) \big) \\
	& \quad \geq \lim_{m \to \infty} \liminf_{a \to 0} P^a\big(\mathcal{O}^a_{\delta_m,\varepsilon}(x^a_i)\big) = 1,
	\end{split}
	\end{align}
	we see that
	\begin{align}
	\begin{split} \label{eq:limit-conditional-probability1}
	& \lim_{k \to \infty} \lim_{m \to \infty} \lim_{a \to 0} P^a\big(B_{\delta_m}(x^a_1) \longleftrightarrow B_{\delta_m}(x^a_2), B_{\delta_m}(x^a_3) \longleftrightarrow B_{\delta_m}(x^a_4), \\
	&  \qquad \qquad \qquad \quad B_{\delta_k}(x^a_1) \centernot\longleftrightarrow B_{\delta_k}(x^a_3), \mathcal{O}_{\delta_m,\varepsilon}(x^a_i), \forall i \, \vert \, x^a_i \longleftrightarrow \partial B_{\varepsilon}(x^a_i), \forall i \big) \\
	& \quad = \mathbb{P}\big(x_1 \longleftrightarrow x_2, x_3 \longleftrightarrow x_4, x_1 \centernot\longleftrightarrow x_3 \, \vert \, x_i \longleftrightarrow \partial B_{\varepsilon}(x_i), \forall i \big),
	\end{split}
	\end{align}
	where $x_1 \centernot\longleftrightarrow x_3 := \cup_{k=1}^{\infty} B_{\delta_k}(x_1) \centernot\longleftrightarrow B_{\delta_k}(x_3)$ is a countable union of increasing events.
	
	Since the event $B_{\delta_m}(x^a_1) \longleftrightarrow B_{\delta_m}(x^a_2) \circ B_{\delta_m}(x^a_3) \longleftrightarrow B_{\delta_m}(x^a_4)$ depends only on hexagons outside the disks $B_{\delta_m}(x^a_i)$, the proof that
	\begin{align}
	\begin{split}
	P^a\big(B_{\delta_m}(x^a_1) \longleftrightarrow B_{\delta_m}(x^a_2) \circ B_{\delta_m}(x^a_3) \longleftrightarrow B_{\delta_m}(x^a_4) \, \vert \, x^a_i \longleftrightarrow \partial B_{\varepsilon}(x^a_i), \forall i \big)
	\end{split}
	\end{align}
	has a limit as $a \to 0$ is also the same as the proof that \eqref{eq:conditional-probability} has a limit as $a \to 0$. 
	
	Moreover, following the proof that \eqref{eq:conditional-probability} has a limit shows that
	\begin{align}
	\begin{split} \label{eq:limit-conditional-probability2}
	& \lim_{a \to 0} P^a\big(B_{\delta_m}(x^a_1) \longleftrightarrow B_{\delta_m}(x^a_2) \circ B_{\delta_m}(x^a_3) \longleftrightarrow B_{\delta_m}(x^a_4) \, \vert \, x^a_i \longleftrightarrow \partial B_{\varepsilon}(x^a_i), \forall i \big) \\
	& = \lim_{\eta \to 0} \mathbb{P}\big(B_{\delta_m}(x_1) \longleftrightarrow B_{\delta_m}(x_2) \circ B_{\delta_m}(x_3) \longleftrightarrow B_{\delta_m}(x_4) \, \vert \, \mathcal{A}_{\eta,\varepsilon}(x_i), \forall i \big) \\
	& =: \mathbb{P}\big(B_{\delta_m}(x_1) \longleftrightarrow B_{\delta_m}(x_2) \circ B_{\delta_m}(x_3) \longleftrightarrow B_{\delta_m}(x_4) \, \vert \, \mathcal{A}_{0,\varepsilon}(x_i), \forall i \big),
	\end{split}
	\end{align}
	where, after the limit $a \to 0$ is taken, $B_{\delta_m}(x_1) \longleftrightarrow B_{\delta_m}(x_2) \circ B_{\delta_m}(x_3) \longleftrightarrow B_{\delta_m}(x_4)$ should be interpreted as the event that the disks $B_{\delta_m}(x_1)$, $B_{\delta_m}(x_2)$ and the disks $B_{\delta_m}(x_3)$, $B_{\delta_m}(x_4)$ are connected disjointly outside $B_{\delta_m}(x_i), i=1,\ldots,4$.
	
	Now observe that, for any $m$ such that $\delta_m<\ell:=\min_{i\neq j}\frac{\Vert x^a_i-x^a_j \Vert}{2}$ and all $k \geq m$ (so that $\delta_k\leq\delta_m$),
	\begin{align}
	\begin{split} \label{eq:inclusion}
	& \{ B_{\delta_m}(x_1) \longleftrightarrow B_{\delta_m}(x_2) \circ B_{\delta_m}(x_3) \longleftrightarrow B_{\delta_m}(x_4) \} \cap \{ B_{\delta_k}(x_1) \longleftrightarrow B_{\delta_k}(x_3) \} \\
	& \subset \cup_{i=1}^4 \mathcal{F}_{\delta_m,\ell}(x_i),
	\end{split}
	\end{align}
	where $\mathcal{F}_{\delta_m,\ell}(x)$ is a \emph{four-arm event}, corresponding to 
	the presence of four disjoint interfaces crossing the annulus $A_{\delta_m,\ell}(x)$, alternating in direction. This can be understood thinking in terms of (continuum) clusters: if one wants to connect $B_{\delta_m}(x_1)$ to $B_{\delta_m}(x_2)$ and $B_{\delta_m}(x_3)$ to $B_{\delta_m}(x_4)$ disjointly outside $B_{\delta_m}(x_i), i=1,\ldots,4$, while at the same time connecting $B_{\delta_k}(x_1)$ to $B_{\delta_k}(x_3)$, outside the disks $B_{\delta_m}(x_i)$ one needs at least two separate pieces of one or more open clusters getting close to one of the points $x_i$. Those two (or more) pieces must be separated by closed clusters, which implies the presence of four disjoint interfaces crossing $A_{\delta_m,\ell}(x_i)$ (see Fig. \ref{fig-four-arm-event}). Therefore, using \eqref{eq:limit-conditional-probability2}, \eqref{eq:lim-pi-P}, \eqref{eq:inclusion} and the second limit in the third displayed equation on p.~999 of \cite{GPS13}, we can write
	\begin{align}
	\begin{split}
	& \mathbb{P}\big(\{B_{\delta_m}(x_1) \longleftrightarrow B_{\delta_m}(x_2) \circ B_{\delta_m}(x_3) \longleftrightarrow B_{\delta_m}(x_4) \} \\
	& \qquad \cap \{ B_{\delta_k}(x_1) \longleftrightarrow B_{\delta_k}(x_3) \} \, \vert \, x_i \longleftrightarrow \partial B_{\varepsilon}(x_i), \forall i=1,\ldots,4 \big) \\
	& \quad = \lim_{\eta \to 0} \mathbb{P}\big(\{B_{\delta_m}(x_1) \longleftrightarrow B_{\delta_m}(x_2) \circ B_{\delta_m}(x_3) \longleftrightarrow B_{\delta_m}(x_4) \} \\
	& \qquad \qquad \quad \cap \{ B_{\delta_k}(x_1) \longleftrightarrow B_{\delta_k}(x_3) \} \, \vert \, \mathcal{A}_{\eta,\varepsilon}(x_i), \forall i=1,\ldots,4 \big) \\
	& \quad \leq \lim_{\eta \to 0} \frac{\mathbb{P}\big(\{\cup_{i=1}^4 \mathcal{F}_{\delta_m,l}(x_i)\} \cap \{ \mathcal{A}_{\eta,\delta_m}(x_i), \forall i=1,\ldots,4\} \big)}{\Big(\mathbb{P}\big(\mathcal{A}_{\eta,\varepsilon}(0)\big)\Big)^4} \\
	& \quad = 4\mathbb{P}\big(\mathcal{F}_{\delta_m,l}(0)\big) \left(\lim_{\eta \to 0}\frac{\mathbb{P}\big(\mathcal{A}_{\eta,\delta_m}(0)\big)}{\mathbb{P}\big(\mathcal{A}_{\eta,\varepsilon}(0)\big)}\right)^4 = 4\mathbb{P}\big(\mathcal{F}_{\delta_m,l}(0)\big) \left(\frac{\delta_m}{\varepsilon}\right)^{-5/12}.
	\end{split}
	\end{align}
	
	\begin{figure}[!ht]
		\begin{center}
			\includegraphics[width=5cm]{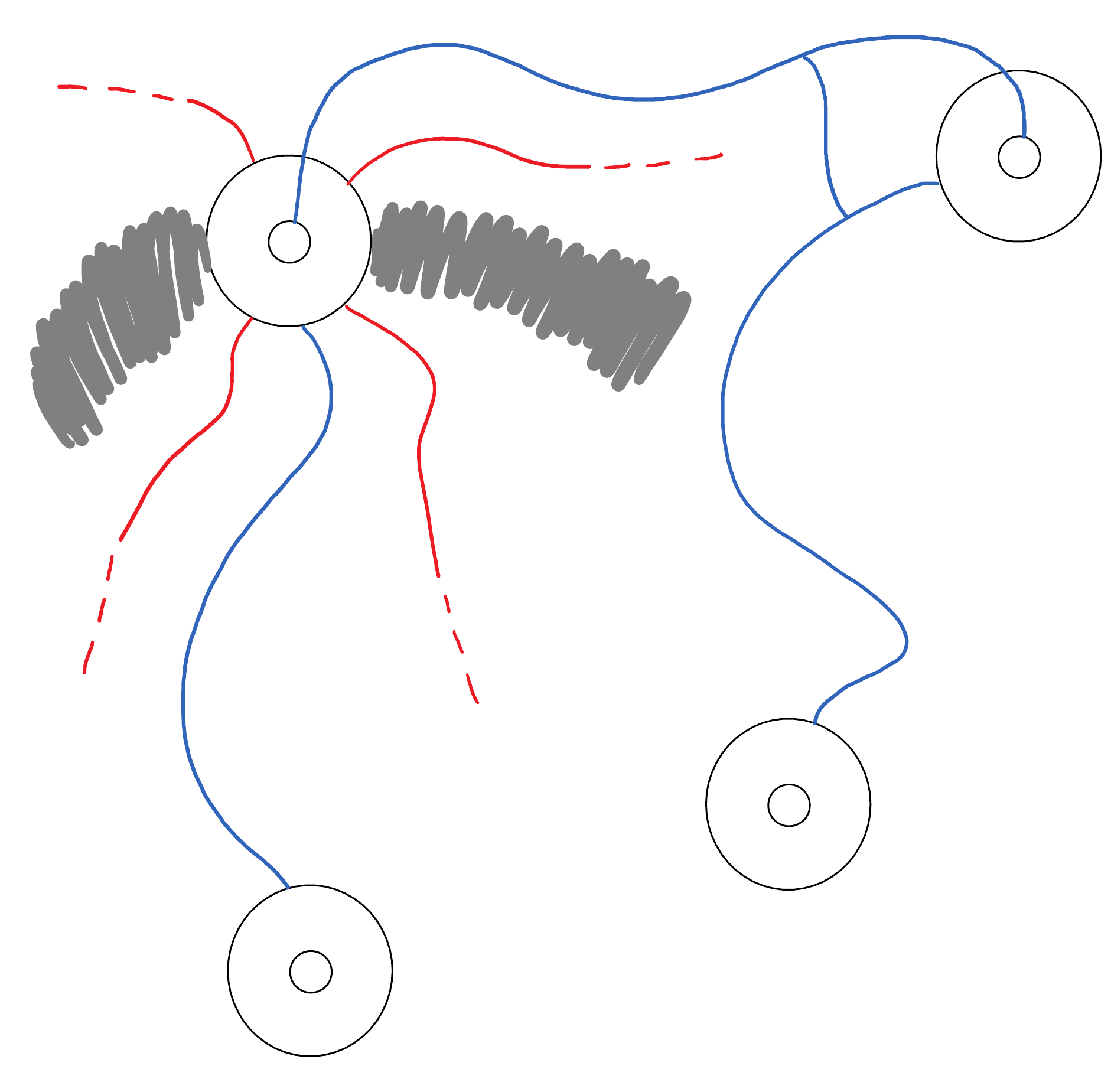}
			\caption{Sketch of the event $\{ B_{\delta_m}(x_1) \longleftrightarrow B_{\delta_m}(x_2) \circ B_{\delta_m}(x_3) \longleftrightarrow B_{\delta_m}(x_4) \} \cap \{ B_{\delta_k}(x_1) \longleftrightarrow B_{\delta_k}(x_3) \}$, with the four-arm event $\mathcal{F}_{\delta_m,l}(x_1)$. The wiggly continuous lines connecting disks represent open paths (possibly belonging to the same open cluster), the shaded regions represent closed clusters, and the partly dashed lines represent interfaces between open and closed clusters.}
			\label{fig-four-arm-event}
		\end{center}
	\end{figure}
	
	
	
	Using the fact that the four-arm event $\mathcal{F}_{\delta_m,\ell}(x_i)$ has probability of order $\big( \frac{\delta_m}{\ell} \big)^{5/4}$ (see, e.g., Remark 4.10 of \cite{GPS13}), the last equation implies that
	\begin{align}
	\begin{split} \label{eq:difference}
	& \lim_{m \to \infty} \mathbb{P}\big(\{B_{\delta_m}(x_1) \longleftrightarrow B_{\delta_m}(x_2) \circ B_{\delta_m}(x_3) \longleftrightarrow B_{\delta_m}(x_4)\} \cap \{ x_1 \longleftrightarrow x_3 \} \, \vert \\
	& \qquad \qquad x_i \longleftrightarrow \partial B_{\varepsilon}(x_i), \forall i=1\ldots,4 \big) \\
	& = \lim_{m \to \infty} \lim_{k \to \infty} \mathbb{P}\big(\{B_{\delta_m}(x_1) \longleftrightarrow B_{\delta_m}(x_2) \circ B_{\delta_m}(x_3) \longleftrightarrow B_{\delta_m}(x_4)\} \\
	& \qquad \qquad \cap \{ B_{\delta_k}(x_1) \longleftrightarrow B_{\delta_k}(x_3) \} \, \vert x_i \longleftrightarrow \partial B_{\varepsilon}(x_i), \forall i=1,\ldots,4 \big) = 0.
	\end{split}
	\end{align}
	Writing $x_1 \longleftrightarrow x_2 \centernot\longleftrightarrow x_3 \longleftrightarrow x_4$ for the countable intersection of events $\cap_{m=1}^{\infty} \{ B_{\delta_m}(x_1) \longleftrightarrow B_{\delta_m}(x_2) \circ B_{\delta_m}(x_1) \longleftrightarrow B_{\delta_m}(x_2) \} \cap \{x_1 \centernot\longleftrightarrow x_3\}$ and observing that the right-hand side is a countable intersection of decreasing events, \eqref{eq:difference} implies that
	\begin{align}
	\begin{split} \label{eq:m-limit}
	& \lim_{m \to \infty} \mathbb{P}\big(B_{\delta_m}(x_1) \longleftrightarrow B_{\delta_m}(x_2) \circ B_{\delta_m}(x_3) \longleftrightarrow B_{\delta_m}(x_4) \, \vert \, \mathcal{A}_{0,\varepsilon}(x_i), \forall i \big) \\
	& \quad = \mathbb{P}\big(x_1 \longleftrightarrow x_2, x_3 \longleftrightarrow x_4, x_1 \centernot\longleftrightarrow x_3 \, \vert \, x_i \longleftrightarrow \partial B_{\varepsilon}(x_i), \forall i \big).
	\end{split}
	\end{align}
	
	Combining \eqref{eq:new-sandwich} with \eqref{eq:limit-conditional-probability1}, 
	\eqref{eq:limit-conditional-probability2} and \eqref{eq:m-limit} gives
	\begin{align}
	\begin{split} \label{eq:limit-exists}
	& \lim_{a \to 0} P^a\big(x^a_1 \longleftrightarrow x^a_2 \centernot\longleftrightarrow x^a_3 \longleftrightarrow x^a_4 \, \vert \, x^a_i \longleftrightarrow \partial B_{\varepsilon}(x^a_i), \forall i=1,\ldots,4 \big) \\
	& \quad = \mathbb{P}\big(x_1 \longleftrightarrow x_2, x_3 \longleftrightarrow x_4, x_1 \centernot\longleftrightarrow x_3 \, \vert \, x_i \longleftrightarrow \partial B_{\varepsilon}(x_i), \forall i=1,\ldots,4 \big) \\
	& \quad = \lim_{m \to \infty} \lim_{\eta \to 0} \mathbb{P}\big(B_{\delta_m}(x_1) \longleftrightarrow B_{\delta_m}(x_2), B_{\delta_m}(x_3) \longleftrightarrow B_{\delta_m}(x_4), x_1 \centernot\longleftrightarrow x_3 \, \vert \\
	& \qquad \qquad \qquad \quad \mathcal{A}_{\eta,\varepsilon}(x_i), \forall i=1,\ldots,4 \big),
	\end{split}
	\end{align}
	which proves that the limit exists. Moreover, \eqref{eq:limit-exists} shows that the limit has the same invariance properties as $\mathbb{P}$ (i.e., as the distribution of CLE$_6$) under translations, rotations and reflections, which map disks into disks of the same size.
	
	In order to prove scale covariance, we first note that the discussion above is valid for any $\varepsilon>0$ sufficiently small.
	Now consider a scale transformation $x \to x'=sx$, for some $s>0$, and take $\varepsilon$ so small that \eqref{eq:lim-pi-PG} is still valid for $\varepsilon/s$.
	Then, using \eqref{eq:lim-pi-PG}, \eqref{eq:limit-exists} and the scale invariance of $\mathbb{P}$, we can write
	\begin{align}
	\begin{split}
	& \varepsilon^{-5/12} \, \mathbb{P}\big(x'_1 \longleftrightarrow x'_2 \centernot\longleftrightarrow x'_3 \longleftrightarrow x'_4 \, \vert \, x'_i \longleftrightarrow \partial B_{\varepsilon}(x'_i), \forall i=i,\ldots,4 \big) \\
	& \quad = s^{-5/12} (\varepsilon/s)^{-5/12} \, \mathbb{P}\big(x_1 \longleftrightarrow x_2 \centernot\longleftrightarrow x_3 \longleftrightarrow x_4 \, \vert \\& \qquad \qquad \qquad \qquad \qquad \quad x_i \longleftrightarrow \partial B_{\varepsilon/s}(x_i), \forall i=i,\ldots,4 \big) \\
	& \quad = s^{-5/12} \lim_{a \to 0} \pi_a^{-4} P^a(x^a_1 \longleftrightarrow x^a_2 \centernot\longleftrightarrow x^a_3 \longleftrightarrow x^a_4),
	\end{split}
	\end{align}
	as desired.
	
	The rest of the proof is also analogous to the proof of Theorem \ref{thm:scal-lim-connection-probabilities}, but with \eqref{eq:bound} replaced by
	\begin{align}
	\begin{split} \label{eq:newbound}
	& \mathbb{P}\big(x_1 \longleftrightarrow 0, x_2 \centernot\longleftrightarrow 0 \, \vert \, 0 \longleftrightarrow \partial \Delta_2, x_1 \longleftrightarrow \partial \Delta^{x_1}_2 \big) \\
	& \qquad \leq \mathbb{P}\big(x_1 \longleftrightarrow 0, x_2 \centernot\longleftrightarrow 0 \, \vert \, 0 \longleftrightarrow \partial \Delta_1, x_1 \longleftrightarrow \partial\Delta^{x_1}_1 \big),
	\end{split}
	\end{align}
	where $0 \in \Delta_2 \subset \Delta_1$, $x_1 \in \Delta^{x_1}_2 \subset \Delta^{x_1}_1$. To see why \eqref{eq:newbound} holds, observe that, if $0,x^a_1,x^a_2$ are sufficiently far from each other, compared to the diameters of $\Delta_1$ and $\Delta^{x_1}_1$, then
	\begin{align}
	\begin{split}
	& P^a\big(x^a_1 \longleftrightarrow 0, x^a_2 \centernot\longleftrightarrow 0 \, \vert \, 0 \longleftrightarrow \partial \Delta_2, x^a_1 \longleftrightarrow \partial \Delta^{x_1}_2 \big) \\
	& \quad = \frac{P^a\big(x^a_1 \longleftrightarrow 0, x^a_2 \centernot\longleftrightarrow 0 \big)}{P^a\big(0 \longleftrightarrow \partial \Delta_2, x^a_1 \longleftrightarrow \partial \Delta^{x_1}_2 \big)} \leq \frac{P^a\big(x^a_1 \longleftrightarrow 0, x^a_2 \centernot\longleftrightarrow 0 \big)}{P^a\big(0 \longleftrightarrow \partial \Delta_1, x^a_1 \longleftrightarrow \partial \Delta^{x_1}_1 \big)} \\
	& \quad = P\big(x^a_1 \longleftrightarrow 0, x^a_2 \centernot\longleftrightarrow 0 \, \vert \, 0 \longleftrightarrow \partial \Delta_1, x^a_1 \longleftrightarrow \partial\Delta^{x_1}_1 \big);
	\end{split}
	\end{align}
	letting $a \to 0$ gives \eqref{eq:newbound}.
	
	Focusing again on the case $n=4$ and $(Q_1,Q_2)=(\{x_1,x_2\},\{x_3,x_4\})$ for simplicity, we can use \eqref{eq:newbound} and \eqref{eq:inclusions} to write
	\begin{align}
	\begin{split} \label{eq:replacement-bounds}
	& \varepsilon^{-5/12} \mathbb{P}\big(x_1 \longleftrightarrow x_2 \centernot\longleftrightarrow x_3 \longleftrightarrow x_4 \, \vert \, x_i \longleftrightarrow \partial B_{d_i^2}(x_i), \forall i \big) \\
	& \leq \varepsilon^{-5/12} \mathbb{P}\big(\text{Inv}(x_1) \longleftrightarrow \text{Inv}(x_2) \centernot\longleftrightarrow \text{Inv}(x_3) \longleftrightarrow \text{Inv}(x_4) \, \vert \\
	& \qquad \qquad \quad \text{Inv}(x_i) \longleftrightarrow \partial B_{\varepsilon}(\text{Inv}(x_i)), \forall i \big) \\
	& \leq \varepsilon^{-5/12} \mathbb{P}\big(x_1 \longleftrightarrow x_2 \centernot\longleftrightarrow x_3 \longleftrightarrow x^a_4 \, \vert x_i \longleftrightarrow \partial B_{d^1_i}(x_i), \forall i \big).
	\end{split}
	\end{align}
	The result for the case $n=4$ and $(Q_1,Q_2)=(\{x_1,x_2\},\{x_3,x_4\})$ now follows from the same arguments as in the proof of Theorem \ref{thm:scal-lim-connection-probabilities}, but with \eqref{eq:PInv-prebounds} is replaced by \eqref{eq:replacement-bounds}. The general case, $n>4$, can be treated in a similar way.
\end{proof}

\subsection{Analysis of the four-point function} \label{sec:4-point-function}

We conclude this section with a brief heuristic discussion of the four-point function 
\begin{align} \label{eq:scal-lim-correlation-functions}
\begin{split}
& \mathrm{C}_{4}(x_1,\ldots,x_4) := \lim_{a \to 0} \pi_a^{-4} \langle S_{x_1^a} \ldots S_{x_{4}^a} \rangle_a \\
& \quad = P_4(x_1,\ldots,x_4) + P_4(x_1,x_2 \vert x_3,x_4) + P_4(x_1,x_3 \vert x_2,x_4) + P_4(x_1,x_4 \vert x_2,x_3),
\end{split}
\end{align}
where we have extended to the scaling limit the notation introduced in Remark \ref{rem:4-point-function}.
Four-point functions are very important in CFT because they contain a wealth of information on the ``operator content'' of the theory \cite{DFMS}, that is, on the primary fields. 

Using \eqref{eq:lim-pi-P}, \eqref{eq:lim-pi-PG} and \eqref{eq:limit-exists}, for any $\varepsilon$ sufficiently small, we have
\begin{align}
\begin{split}
& \mathrm{C}_{4}(x_1,\ldots,x_4) = \varepsilon^{-5/12} \, \big[ \mathbb{P}\big(x_j \longleftrightarrow x_k, \forall j,k=1,\ldots,4, \, \vert \\
& \qquad \qquad \qquad \qquad \qquad \quad x_i \longleftrightarrow \partial B_{\varepsilon}(x_i), \forall i=1,\ldots,4 \big) \\
& \qquad + \mathbb{P}\big(x_1 \longleftrightarrow x_2 \centernot{\longleftrightarrow} x_3 \longleftrightarrow x_4 \, \vert \, x_i \longleftrightarrow \partial B_{\varepsilon}(x_i), \forall i=1,\ldots,4 \big) \\
& \qquad + \mathbb{P}\big(x_1 \longleftrightarrow x_3 \centernot{\longleftrightarrow} x_2 \longleftrightarrow x_4 \, \vert \, x_i \longleftrightarrow \partial B_{\varepsilon}(x_i), \forall i=1,\ldots,4 \big) \\
& \qquad + \mathbb{P}\big(x_1 \longleftrightarrow x_4 \centernot{\longleftrightarrow} x_2 \longleftrightarrow x_3 \, \vert \, x_i \longleftrightarrow \partial B_{\varepsilon}(x_i), \forall i=1,\ldots,4 \big)\big] \\
& = \varepsilon^{-5/12} \, \big[ \mathbb{P}\big(x_3 \longleftrightarrow x_4 \longleftrightarrow x_1 \, \vert \, x_1 \longleftrightarrow x_2; x_i \longleftrightarrow \partial B_{\varepsilon}(x_i), \forall i \big) \\
& \quad + \mathbb{P}\big(x_3 \longleftrightarrow x_4 \centernot{\longleftrightarrow} x_1 \, \vert \, x_1 \longleftrightarrow x_2; x_i \longleftrightarrow \partial B_{\varepsilon}(x_i), \forall i \big) \big] \\
& \qquad \qquad \mathbb{P}\big( x_1 \longleftrightarrow x_2 \, \vert \, x_i \longleftrightarrow \partial B_{\varepsilon}(x_i), \forall i \big) \\
& \quad + \varepsilon^{-5/12} \big[ \mathbb{P}\big(x_1 \longleftrightarrow x_3 \centernot{\longleftrightarrow} x_2 \longleftrightarrow x_4 \, \vert \, x_i \longleftrightarrow \partial B_{\varepsilon}(x_i), \forall i \big) \\
& \quad + \mathbb{P}\big(x_1 \longleftrightarrow x_4 \centernot{\longleftrightarrow} x_2 \longleftrightarrow x_3 \, \vert \, x_i \longleftrightarrow \partial B_{\varepsilon}(x_i), \forall i \big)\big] \\
& = \varepsilon^{-5/24} \, \mathbb{P}\big(x_3 \longleftrightarrow x_4 \, \vert \, x_1 \longleftrightarrow x_2; x_i \longleftrightarrow \partial B_{\varepsilon}(x_i), \forall i \big) \\
& \qquad \varepsilon^{-5/24} \mathbb{P}\big( x_1 \longleftrightarrow x_2 \, \vert \, x_i \longleftrightarrow \partial B_{\varepsilon}(x_i), \forall i \big) \\
& \quad + \varepsilon^{-5/12} \big[ \mathbb{P}\big(x_1 \longleftrightarrow x_3 \centernot{\longleftrightarrow} x_2 \longleftrightarrow x_4 \, \vert \, x_i \longleftrightarrow \partial B_{\varepsilon}(x_i), \forall i \big) \\
& \quad + \mathbb{P}\big(x_1 \longleftrightarrow x_4 \centernot{\longleftrightarrow} x_2 \longleftrightarrow x_3 \, \vert \, x_i \longleftrightarrow \partial B_{\varepsilon}(x_i), \forall i \big)\big].
\end{split}
\end{align}

Now imagine a situation in which $x_1$ and $x_2$ are very close to each other and far from $x_3$ and $x_4$. Then, the first term after the last equal sign is close~to
\begin{align}
\begin{split}
& P_2(x_1,x_2) P_2(x_3,x_4) = \mathrm{C}_{2}(x_1,x_2) \mathrm{C}_{2}(x_3,x_4) \\
& \qquad \qquad = C_2^2 \Vert x_1-x_2 \Vert^{-5/24} \Vert x_3-x_4 \Vert^{-5/24}.
\end{split}
\end{align}
The two remaining terms involve a four-arm event within the annulus $A_{\frac{\Vert x_1-x_2 \Vert}{2},1}(\frac{x_1+x_2}{2})$, that is, the event that four disjoint portions of loops (or possibly of the same loop) cross the annulus $A_{\frac{\Vert x_1-x_2 \Vert}{2},1}(\frac{x_1+x_2}{2})$ from the circle $\partial B_{\frac{\Vert x_1-x_2 \Vert}{2}}(\frac{x_1+x_2}{2})$ to $\partial B_1(\frac{x_1+x_2}{2})$, with alternating orientations (see Figure \ref{fig-pivotal}). This event has probability of order $\Vert x_1-x_2 \Vert^{5/4}$ (see, e.g., Remark 4.10 of \cite{GPS13}) and it needs to happen if $x_1$ and $x_2$ do not belong to the same continuum cluster.

\begin{figure}[!ht]
	\begin{center}
		\includegraphics[width=5cm]{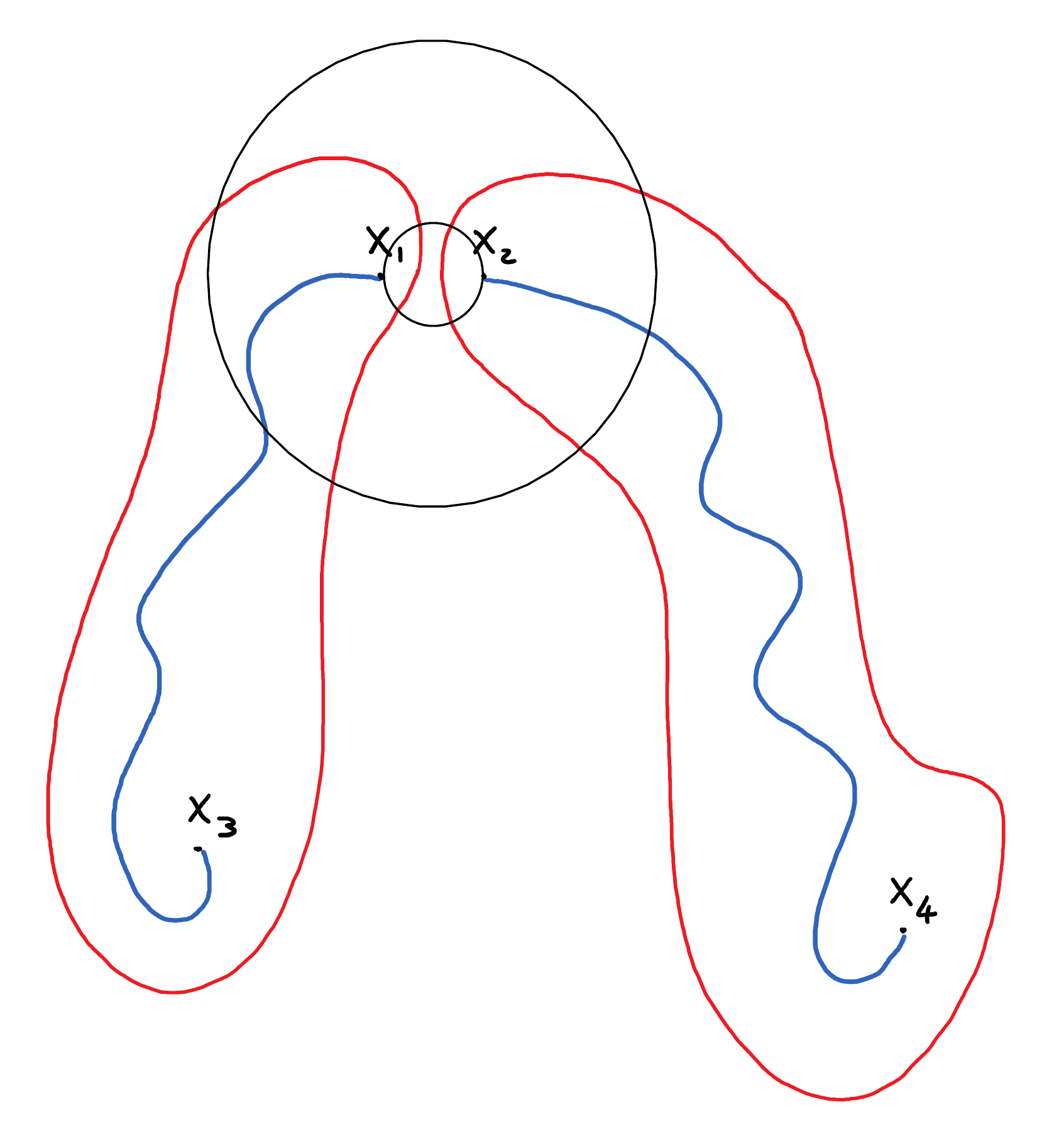}
		\caption{Sketch of one of the configurations contributing to the four-point function $\mathrm{C}_4(x_1,\ldots,x_4)$. The configuration depicted involves the four-arm event $\mathcal{F}_{\frac{\Vert x_1-x_2 \Vert}{2},1}(\frac{x_1+x_2}{2})$. The two loops wind around two (continuum) open clusters which support open paths (the wiggly lines) between $x_1$ and $x_3$ and between $x_2$ and $x_4$, respectively.}
		\label{fig-pivotal}
	\end{center}
\end{figure}

If we let $\mathcal{F}_{\frac{\Vert x_1-x_2 \Vert}{2},1}(\frac{x_1+x_2}{2})$ denote the four-arm event describe above, we have
\begin{align}
\begin{split}
& \varepsilon^{-5/12} \mathbb{P}\big(x_1 \longleftrightarrow x_3 \centernot{\longleftrightarrow} x_2 \longleftrightarrow x_4 \, \vert \, x_i \longleftrightarrow \partial B_{\varepsilon}(x_i), \forall i=1,\ldots,4 \big) \\
& = \varepsilon^{-5/24} \mathbb{P}\Big(x_1 \longleftrightarrow x_3 \centernot{\longleftrightarrow} x_2 \longleftrightarrow x_4 \, \vert \\
& \qquad \qquad \quad \mathcal{F}_{\frac{\Vert x_1-x_2 \Vert}{2},1}\Big(\frac{x_1+x_2}{2}\Big); x_i \longleftrightarrow \partial B_{\varepsilon}(x_i), \forall i=1,\ldots,4 \Big) \\
& \qquad \varepsilon^{-5/24} \mathbb{P}\Big( \mathcal{F}_{\frac{\Vert x_1-x_2 \Vert}{2},1}\Big(\frac{x_1+x_2}{2}\Big) \, \vert \, x_i \longleftrightarrow \partial B_{\varepsilon}(x_i), \forall i=1,2 \Big)
\end{split}
\end{align}
and a similar expression when $x_1 \longleftrightarrow x_4 \centernot\longleftrightarrow x_2 \longleftrightarrow x_3$.

We now take $\varepsilon$ of order $\Vert x_1-x_2 \Vert$. Considering that $\mathcal{F}_{\frac{\Vert x_1-x_2 \Vert}{2},1}(\frac{x_1+x_2}{2})$ has probability of order $\Vert x_1-x_2 \Vert^{5/4}$, we see that
\begin{align}
\begin{split}
& \varepsilon^{-5/24} \mathbb{P}\Big( \mathcal{F}_{\frac{\Vert x_1-x_2 \Vert}{2},1}\Big(\frac{x_1+x_2}{2}\Big) \, \vert \, x_i \longleftrightarrow \partial B_{\varepsilon}(x_i), \forall i=1,2 \Big) \\
& \qquad \sim \Vert x_1-x_2 \Vert^{-5/24} \Vert x_1-x_2 \Vert^{5/4}.
\end{split}
\end{align}

Putting all these observations together, we arrive at
\begin{align}
\begin{split} \label{eq:proto-OPE}
& \mathrm{C}_{4}(x_1,\ldots,x_4) \\
& \qquad \sim \Vert x_1-x_2 \Vert^{-5/24} \Big( \Vert x_3-x_4 \Vert^{-5/24} + F(x;x_3,x_4) \Vert x_1-x_2 \Vert^{5/4} \Big),
\end{split}
\end{align}
for $x=\frac{x_1+x_2}{2}$ and some function $F(x;x_3,x_4)$, which can be expressed in terms of conditional probabilities, where the notation reflects the fact that, for $x_1$ and $x_2$ very close to each other, these conditional probabilities are essentially a function of $x$ rather than $x_1$ and $x_2$ separately.

If we now introduce the suggestive notation (standard in the physics literature)
\begin{align}
\begin{split}
\mathrm{C}_{4}(x_1,\ldots,x_4) = \langle \Phi(x_1) \ldots \Phi(x_4) \rangle,
\end{split}
\end{align}
assuming that the field $\Phi$ is \emph{canonically normalized}, so that $\langle \Phi(x_1)\Phi(x_2) \rangle = \Vert x_1-x_2 \vert^{-5/24}$, we can write \eqref{eq:proto-OPE} as
\begin{align}
\begin{split} \label{eq:proto-OPE-bis}
& \langle \Phi(x_1) \ldots \Phi(x_4) \rangle \\
& \qquad \sim \Vert x_1-x_2 \Vert^{-5/24} \Big( \langle \Phi(x_3)\Phi(x_4) \rangle + F(x;x_3,x_4) \Vert x_1-x_2 \Vert^{5/4} \Big).
\end{split}
\end{align}
This equation suggests the presence of a new primary field of scaling dimension $5/4$, which can be identified with the so-called ``four-leg'' operator (see \cite{VJS12} and reference therein). An operator with scaling dimension $5/4$ in the percolation CFT was identified by Dotsenko \cite{Dotsenko16} using Coulomb gas techniques and taking the $q \to 1$ limit of the $q$-state Potts model, but without providing an interpretation in terms of a percolation event.

Equation \eqref{eq:proto-OPE-bis} is suggestive of the operator product expansion (OPE)
\begin{align} \label{eq:OPE}
\Phi(x_1)\Phi(x_2) = \Vert x_1-x_2 \Vert^{-5/24} \Big( \text{Id} + C_{\Phi,\Phi}^{\mathcal{F}} \mathcal{F}(x) \Vert x_1-x_2 \Vert^{5/4} + \ldots \Big),
\end{align}
where Id denotes the identity operator, $\mathcal{F}$ is a field of scaling dimension $5/4$ and $C_{\Phi,\Phi}^{\mathcal{F}}$ is the \emph{structure constant} appearing in the three-point function $\langle \Phi(x_1)\Phi(x_2)\mathcal{F}(x_3) \rangle$.

An OPE is meaningful only when an expectation is taken on both sides of the equation, and the equal sign means that one can multiply each side of the equation by any combination of conformal fields before taking the expectation (see, e.g., \cite{DFMS}). While the derivation of \eqref{eq:proto-OPE-bis} does not prove the validity of \eqref{eq:OPE}, it provides strong support for it and a direct point of contact with the physics literature. 

\medskip

\section{Scaling limit of the lattice field} \label{sec:field-s-lim}

In this section we discuss the scaling limit of the lattice field \eqref{def:lattice-field} and prove Theorem \ref{thm:Sobolev-conv-field} and Corollary \ref{cor:conf-cov}. Before we can present the proofs, we need to introduce some more terminology and present some auxiliary results.

Let $M^a=\{\mu^a_i\}_i$ denote the collection of normalized counting measures \eqref{def:counting-measure} of the open clusters of critical percolation on $a\mathcal{T}$ introduced in Section \ref{sec:main-results}. It follows from \cite{CCK19} (see Theorems 2 and 3) that $M^a$ converges in distribution, as $a \to 0$, to a random, countable collection, $M=\{ \mu_k \}_k$, of measures of bounded support, in the topology induced by the distance function
\begin{align}\label{eq:metric}
\begin{split}
& \text{Dist}_{\text{P}}(M_1, M_{2}) \\
& \quad :=\inf\{\epsilon>0: \forall \mu_1 \in M_1~\exists \mu_2 \in M_2 \text{ s.t. }\text{d}_{\text{P}}(\mu_1,\mu_2)\leq\epsilon \text{ and vice versa}\},
\end{split}
\end{align}
where $M_1$ and $M_2$ are collections of measures and $\text{d}_{\text{P}}$ is the Prokhorov distance between measures.

The limiting collection $M$ is invariant in distribution under translations and rotations and transforms covariantly under scale transformations (see Theorem 4 of \cite{CCK19}) in the sense that, formally, for any $s>0$, $\mu_k(sx)=s^{-5/48}\mu_k(x)$. More precisely, the collection $M_s=\{(\mu_s)_k\}_k$ of measures defined by
\begin{align}
(\mu_s)_k(f) := \int_{{\mathbb R}^2} f\Big(\frac{x}{s}\Big) d\mu_k(x)
\end{align}
has the same law as the collection $\{s^{2-5/48}\mu_k\}_k$. In particular, for any $L,s>0$, the distribution of $\{\mu_k(\mathbf{1}_{[-sL,sL]^2})\}_k$ is the same as the distribution of $\{s^{91/48}\mu_k(\mathbf{1}_{[-L,L]^2})\}_k$.

Moreover, combining Theorem 13, Lemma 9 and Theorem 3 of \cite{CCK19} implies that $M$ is measurable with respect to the full scaling limit $\Lambda$ of the collection of critical percolation interfaces on $a\mathcal{H}$ constructed in \cite{CN06}. For this reason, with a slight abuse of notation, we can use $\mathbb{P}$ to denote the distribution of $M$. 

Combining these results with \eqref{eq:lattice-field}, it is tempting to try to define a field
\begin{equation} \label{eq:cont-field}
`` \, \Phi(f) = \sum_{k} \sigma_k \mu_k(f), "
\end{equation}
where $\{\sigma_k\}_k$ is a collection of independent, symmetric, $(\pm 1)$-valued random variables assigned to the measures $\{\mu_k\}_k$. However, due to scale invariance, even for functions $f$ of bounded support, the sum above contains infinitely many terms, and the scaling properties of the $\mu_k$'s suggest that the collection $M$ may in general not be absolutely summable.

In order to make sense of \eqref{eq:cont-field}, let $M_{n,\varepsilon}$ denote the collection of measures $\mu_k \in M$ such that $\supp(\mu_k) \cap [-n,n]^2 \neq \emptyset$ and $\diam(\supp(\mu_k))>\varepsilon$, where $\supp(\mu_k)$ denotes the support of $\mu_k$ and $\text{diam}$ denotes the Euclidean diameter. 
One can show, for example by applying the proof of Proposition 2.2 of \cite{CN09} to the easier case of Bernoulli percolation, that for any $n \in \mathbb{N}$ and any $\varepsilon>0$, the cardinality of $M_{n,\varepsilon}$ is finite with probability one. Thanks to this observation, we can define the $(n,\varepsilon)$-cutoff field
\begin{equation} \label{eq:cutoff-cont-field}
\Phi_{n,\varepsilon} := \sum_{k: \mu_k \in M_{n,\varepsilon}} \sigma_k \, \mu_k \vert_{[-n,n]^2},
\end{equation}
where $\mu_k \vert_{[-n,n]^2}$ denotes the restriction of $\mu_k$ to $[-n,n]^2$.

Below we will show how one can remove the cutoffs $\varepsilon$ and $n$,
but before doing that, we need some more preliminaries. Let $\{ u_{i,j} \}_{i,j=1}^{\infty}$ with
\begin{align} \label{def:eigenfunctions}
& u_{i,j}(\xi,\zeta) := \frac{1}{n} \sin\left(\frac{\pi i (\xi+n)}{2n}\right) \sin\left(\frac{\pi j (\zeta+n)}{2n}\right),
\end{align}
for $-n \leq \xi \leq n, -n \leq \zeta \leq n$, denote the eigenfunctions of the negative Laplacian (i.e., $-\Delta$) on $[-n,n]^2$ with Dirichlet boundary condition, with eigenvalues 
\begin{align} \label{eq:eigenvalues}
\lambda_{i,j} = \frac{\pi^2}{4n^2}(i^2+j^2), \; i,j=1,2,\ldots
\end{align}
and $L^2$ norm $\Vert u_{i,j} \Vert_{L^2} = 1$.

The functions $\{ u_{i,j} \}_{i,j=1}^{\infty}$ form an orthonormal basis of $L^2([-n,n]^2)$ and of the Sobolev space $H_0^1([-n,n]^2)$, which is the closure of the space $C^{\infty}_0([-n,n]^2)$ of infinitely differentiable functions of compact support on $[-n,n]^2$ with respect to the norm
\begin{align}
\Vert f \Vert_{H^1_0}^2 := \int_{-\infty}^{\infty} \int_{-\infty}^{\infty} \vert \Delta f(\xi,\zeta) \vert^2 d\xi d\zeta,
\end{align}
and they satisfy $\Vert u_{i,j} \Vert_{H^1_0} = \lambda_{i,j}$. As a consequence, each $f \in H^1_0([-n,n]^2)$ has a unique orthogonal decomposition $f=\sum_{i,j} a_{i,j} u_{i,j}$ such that $\|f\|^2_{H^1_0} = \sum_{i,j} |a_{i,j}|^2 \lambda^2_{i,j}$.
Since $C^{\infty}_0([-n,n]^2) \subset H^1_0([-n,n]^2)$, the same holds for each $f \in C^{\infty}_0([-n,n]^2)$.
Moreover, if $f\in C_0^\infty([-n,n]^2)$, then
\begin{equation}\label{e:ww}
\sum_{i,j} |a_{i,j}|^2 \lambda_{i,j}^{2\alpha} <\infty, \quad \forall \alpha>0.
\end{equation}

To see why \eqref{e:ww} holds, one can assume without loss of generality that $\alpha\geq 1$ is an integer. For such an $\alpha$ and for every $f\in C_0^\infty([-n,n]^2)$, one has that $\Delta^{2\alpha} f\in C_0^\infty([-n,n]^2)$, and consequently $\Delta^{2\alpha} f = \sum_{i,j} \langle \Delta^{2\alpha} f , u_{i,j}\rangle_{L^2} u_{i,j}$, where the series converges in $L^2([-n,n]^2)$. Integration by parts yields moreover that $\langle \Delta^{2\alpha} f , u_{i,j}\rangle_{L^2} = \langle  f , \Delta^{2\alpha} u_{i,j}\rangle_{L^2}= \lambda_{i,j}^{2\alpha} \langle f , u_{i,j}\rangle_{L^2}$, from which we deduce that
\begin{equation}
\sum_{i,j} |a_{i,j}|^2 \lambda_{i,j}^{2\alpha} = \langle \Delta^{2\alpha} f, f\rangle_{L^2} \leq \|\Delta^{2\alpha} f\|_{L^2} \,\cdot  \|f\|_{L^2}<\infty,
\end{equation}
as claimed.

Given \eqref{e:ww}, $H_0^\alpha([-n,n]^2)$ is defined as the closure of $C_0^\infty([-n,n]^2)$ with respect to the norm $\|f\|^2_{H_0^{\alpha}} := \sum_{i,j} |a_{i,j}|^2 \lambda^{2\alpha}_{i,j}$. The Sobolev space $H^{-\alpha}([-n,n]^2)$ is then defined as the Hilbert dual of $H_0^\alpha([-n,n]^2)$, that is, the space of continuous linear functionals on $H_0^\alpha([-n,n]^2)$, endowed with the operator norm $\|h\|_{H^{-\alpha}} := \sup_{f \in H_0^\alpha([-n,n]^2): \|f\|_{H^{\alpha}_0} \leq 1} |h(f)|$. One has that $L^2([-n,n]^2) \subset H^{-\alpha}([-n,n]^2)$. Moreover, the action of $h=\sum_{i,j} a_{i,j} u_{i,j}\in L^2([-n,n]^2)$ on $f\in H_0^\alpha([-n,n]^2)$ is given by $h(f) = \int_{[-n,n]^2} h(x) f(x)\, dx$ and
\begin{align} \label{eq:norm}
\|h\|^2_{H^{-\alpha}} = \sum_{i,j} \lambda^{-2\alpha}_{i,j} |a_{i,j}|^2.
\end{align}

Lastly, let $D_n\equiv[-n,n]^2$ and consider the functions
\begin{align}
\hat{\Phi}^a_{n}(x) = a^2 \pi_a^{-1} \sum_{i:\mathcal{C}^a_i \in \mathscr{C}^a_{D_n}} \sigma_i \sum_{x^a \in \mathcal{C}^a_i \cap D_n} \frac{\mathbf{1}_{x^a}(x)}{A_a}, 
\end{align}
introduced earlier, and
\begin{align} \label{def:lattice-smooth-field-cutoff}
\hat{\Phi}^a_{n,\varepsilon}(x) := a^2 \pi_a^{-1} \sum_{i:\mathcal{C}^a_i \in \mathscr{C}^a_{D_n}, \diam(\mathcal{C}^a_i)>\varepsilon} \sigma_i \sum_{x^a \in \mathcal{C}^a_i \cap D_n} \frac{\mathbf{1}_{x^a}(x)}{A_a}, 
\end{align}
where $\mathbf{1}_{x^a}$ is the indicator function of the elementary hexagon $x_a$ of $a\mathcal{H}$ (the hexagon centered at $x^a \in a\mathcal{T}$---recall that, with a slight abuse of notation, we use $x^a$ both for elementary hexagons of $a\mathcal{H}$ and their centers in $a\mathcal{T}$) and $A_a$ denotes its area.

In the next theorem and in the rest of the paper, we will use $E^a$ to denote expectation with respect to $P^a$ and $\langle \cdot \rangle$ to denote expectation with respect to the distribution of $M = \{ \mu_k \}_k$ and of the random signs $\{\sigma_k\}_k$ assigned to the measures $\mu_k \in M$.

\begin{theorem} \label{theorem:Sobolev_convergence}
	For every $n \in \mathbb{N}$ and $\varepsilon>0$, as $a \to 0$, $\hat\Phi^a_n$ and $\hat\Phi^a_{n,\varepsilon}$ converge in distribution to two random elements of the Sobolev space $H^{-1}([-n,n]^2)$, $\hat\Phi_n$ and $\hat\Phi_{n,\varepsilon}$, respectively. The convergence is in the topology induced by $\Vert \cdot \Vert_{H^{-1}}$ and the limits are such that $\big\langle\Vert \hat\Phi_{n} \Vert^2_{H^{-1}}\big\rangle < \infty$ and $\big\langle\Vert \hat\Phi_{n,\varepsilon} \Vert^2_{H^{-1}}\big\rangle < \infty$. Moreover, $\hat\Phi_{n,\varepsilon}$ coincides with $\Phi_{n,\varepsilon}$ in distribution on $C^{\infty}_0([-n,n]^2)$, and 
	\begin{align}
	\lim_{\varepsilon \to 0} \big\langle\| \hat\Phi_{n} - \hat\Phi_{n,\varepsilon} \|^2_{H^{-1}}\big\rangle = 0.
	\end{align}
\end{theorem}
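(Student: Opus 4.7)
\emph{Plan.} My plan is to prove the theorem via a truncation argument in cluster diameter, combined with a Sobolev-tail estimate driven by the two-point asymptotics of Theorem \ref{thm:scal-lim-connection-probabilities}. The workflow is: (i) use the Dirichlet eigenbasis of $-\Delta$ on $D_n$ to control $\Vert \hat\Phi^a_n \Vert^2_{H^{-1}}$ in terms of the covariance kernel $K^a(x,y) \leq C\Vert x-y\Vert^{-5/24}$; (ii) observe that the covariance of $\hat\Phi^a_n - \hat\Phi^a_{n,\varepsilon}$ is essentially supported on $\Vert x-y\Vert \leq \varepsilon$, yielding a uniform-in-$a$ bound of order $\varepsilon^{43/24}$; (iii) identify the limit $\hat\Phi_{n,\varepsilon}$ using the convergence of cluster measures from \cite{CCK19}; (iv) define $\hat\Phi_n$ as the $\varepsilon \to 0$ limit of $\hat\Phi_{n,\varepsilon}$ and combine via triangle inequality.

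\emph{Covariance estimate.} Averaging over the i.i.d.\ signs $\sigma_i$ and using \eqref{eq:n-point-function} gives
\begin{align*}
K^a(x,y) := \big\langle \hat\Phi^a_n(x)\hat\Phi^a_n(y)\big\rangle = \pi_a^{-2} A_a^{-2}\, P^a_2(x^a,y^a)\, \mathbf{1}_{x^a}(x)\mathbf{1}_{y^a}(y)\mathbf{1}_{\{x^a,y^a \in D_n\}},
\end{align*}
and Theorem \ref{thm:scal-lim-connection-probabilities} combined with the RSW bounds used in Lemmas 2.1--2.2 of \cite{CN09} yields $K^a(x,y) \leq C\Vert x-y\Vert^{-5/24}$, uniformly in small $a$. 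Expanding $\hat\Phi^a_n = \sum_{i,j} \alpha^a_{i,j} u_{i,j}$ and applying \eqref{eq:norm},
\begin{align*}
\big\langle \Vert \hat\Phi^a_n \Vert^2_{H^{-1}}\big\rangle = \iint_{D_n\times D_n} G(x,y)\, K^a(x,y)\, dx\, dy,
\end{align*}
where $G(x,y) := \sum_{i,j} \lambda_{i,j}^{-2} u_{i,j}(x) u_{i,j}(y)$ is the (continuous) kernel of $(-\Delta)^{-2}$ on $D_n$ with Dirichlet boundary. Integrability of $\Vert x-y\Vert^{-5/24}$ then gives a uniform bound on the $H^{-1}$-second moment. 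Applied to $\hat\Phi^a_n - \hat\Phi^a_{n,\varepsilon}$, whose covariance is (up to negligible boundary terms) supported on $\{\Vert x-y\Vert\leq\varepsilon\}$ with the same pointwise bound, the same computation gives
\begin{align*}
\big\langle \Vert \hat\Phi^a_n - \hat\Phi^a_{n,\varepsilon} \Vert^2_{H^{-1}}\big\rangle \leq C'\iint_{\Vert x-y\Vert\leq\varepsilon,\,(x,y)\in D_n^2} \Vert x-y\Vert^{-5/24}\, dx\, dy \leq C''\,\varepsilon^{2-5/24} = C''\,\varepsilon^{43/24},
\end{align*}
uniformly in $a$, producing the rate claimed in the theorem.

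\emph{Convergence of the cutoff field and combination.} For fixed $\varepsilon>0$, the cardinality $|M^a_{n,\varepsilon}|$ is tight as $a\to 0$ by standard RSW estimates, and by Theorems 2--4 of \cite{CCK19}, $M^a \to M$ in distribution in the topology \eqref{eq:metric}. Passing to a Skorohod representation and attaching common signs to paired clusters, $\hat\Phi^a_{n,\varepsilon}(f) \to \hat\Phi_{n,\varepsilon}(f)$ almost surely for every $f \in C^\infty_0(D_n)$, which identifies $\hat\Phi_{n,\varepsilon}$ with $\Phi_{n,\varepsilon}$ of \eqref{eq:cutoff-cont-field} on $C^\infty_0(D_n)$. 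Since Theorem \ref{thm:scal-lim-connection-probabilities} controls the Riesz $(5/24)$-energy of each $\mu_k$, every sign-weighted measure lies a.s.\ in $H^{-\alpha}(D_n)$ for $\alpha = 43/48 < 1$, and the compact embedding $H^{-\alpha}(D_n) \hookrightarrow H^{-1}(D_n)$ promotes $C^\infty_0$-pairing convergence to strong convergence in $H^{-1}$. Defining $\hat\Phi_n := \lim_{\varepsilon\to 0}\hat\Phi_{n,\varepsilon}$, which exists in the $L^2$-space of $H^{-1}(D_n)$-valued random variables by the Cauchy property inherited (through Fatou) from the uniform $\varepsilon^{43/24}$-bound, a triangle inequality between $\hat\Phi^a_n$, $\hat\Phi^a_{n,\varepsilon}$, $\hat\Phi_{n,\varepsilon}$ and $\hat\Phi_n$ then gives $\hat\Phi^a_n \to \hat\Phi_n$ in distribution in $H^{-1}(D_n)$.

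\emph{Main obstacle.} The subtlest step is the upgrade from the Prokhorov-type convergence of cluster measures to strong $H^{-1}$-convergence of the sign-weighted sums. Since $H^1_0(D_n)$ does not embed in $C(D_n)$ in two dimensions, one cannot pair a measure with a general $H^1_0$ test function by continuity alone. The Riesz-energy bound afforded by Theorem \ref{thm:scal-lim-connection-probabilities} is what bridges this gap, by locating the measures in the strictly better space $H^{-43/48}$, from which compact embedding recovers the $H^{-1}$-topology.
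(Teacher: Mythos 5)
Your proposal is correct and follows essentially the same route as the paper: the Dirichlet-eigenbasis second-moment bound driven by the two-point function, the $\varepsilon^{43/24}$ estimate for the small-diameter-cluster remainder, identification of the cutoff limit via the cluster-measure convergence of \cite{CCK19} tested against $C^{\infty}_0$ functions, and Rellich compactness to upgrade to convergence in $H^{-1}$. The only (harmless) difference is organizational: you construct $\hat\Phi_n$ as the $L^2$-Cauchy limit of the cutoff fields $\hat\Phi_{n,\varepsilon}$ and conclude by a uniform-in-$a$ triangle inequality, whereas the paper extracts $\hat\Phi_n$ as a tight subsequential limit of $\hat\Phi^a_n$ (coupled with $\hat\Phi^a_{n,\varepsilon}$ via Skorokhod) and then deduces uniqueness from the same $\varepsilon\to 0$ approximation.
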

\begin{proof}
	Since $\hat{\Phi}^a_{n}, \hat{\Phi}^a_{n,\varepsilon} \in L^2(D_n)$, we can think of them as elements of $H^{-\alpha}(D_n)$ and apply \eqref{eq:norm}.
	Given $\alpha>1/2$, let $\epsilon=\alpha-1/2>0$ and $\alpha'=1/2+\epsilon/2<\alpha$; then \eqref{def:eigenfunctions}, Theorem \ref{thm:scal-lim-connection-probabilities} and \eqref{eq:2-point-function} imply that
	\begin{align}
	\begin{split} \label{eq:bounded-second-moment}
	& \limsup_{a \to 0} \big\langle \Vert \hat{\Phi}^a_{n,\varepsilon} \Vert^2_{H^{-\alpha'}} \big\rangle^a = \limsup_{a \to 0} \sum_{i,j} \frac{1}{\lambda_{i,j}^{2\alpha'}} \Big\langle \big( \hat{\Phi}^a_{n,\varepsilon}(u_{i,j}) \big)^2 \Big\rangle^a \\
	& \leq \left(\sum_{i,j} \frac{\| u_{i,j} \|^2_{L^{\infty}}}{\lambda_{i,j}^{2\alpha'}}\right) \limsup_{a \to 0} a^4 \pi_a^{-2} \left\langle\Big( \sum_{i:\mathcal{C}^a_i \in \mathscr{C}^a_{D_n}, \diam(\mathcal{C}^a_i)>\varepsilon} \sigma_i \vert \mathcal{C}^a_i \cap D_n \vert\Big)^2\right\rangle^a \\
	& \leq \frac{1}{n^2}\left(\sum_{i,j} \frac{1}{\lambda_{i,j}^{1+\epsilon}}\right) \limsup_{a \to 0} a^4 \pi_a^{-2} E^a\Big(\sum_{i:\mathcal{C}^a_i \in \mathscr{C}^a_{D_n}, \diam(\mathcal{C}^a_i)>\varepsilon} \vert \mathcal{C}^a_i \cap D_n \vert^2\Big) \\
	& \leq \frac{1}{n^2} \left(\sum_{i,j} \frac{1}{\lambda_{i,j}^{1+\epsilon}}\right) \limsup_{a \to 0} a^4 \pi_a^{-2} E^a\Big(\sum_{i:\mathcal{C}^a_i \in \mathscr{C}^a_{D_n}} \vert \mathcal{C}^a_i \cap D_n \vert^2\Big) \\
	& = \frac{1}{n^2} \left(\sum_{i,j} \frac{1}{\lambda_{i,j}^{1+\epsilon}}\right) \limsup_{a \to 0} a^4 \sum_{x^a_1,x^a_2 \in D^a_n} \pi_a^{-2} P^a_2(x^a_1,x^a_2) \\
	& = \frac{C_2}{n^2} \left(\sum_{i,j} \frac{1}{\lambda_{i,j}^{1+\epsilon}}\right) \int_{D_n}\int_{D_n} \vert x_1-x_2 \vert^{-5/24} dx_1 dx_2 \\
	& = C_2 \Big(\frac{2}{\pi}\Big)^{2(1+\epsilon)} n^{2\epsilon} \left(\sum_{i,j=1}^{\infty} \frac{1}{(i^2+j^2)^{1+\epsilon}}\right) \int_{D_n}\int_{D_n} \frac{dx_1 dx_2}{\vert x_1-x_2 \vert^{5/24}} < \infty,
	\end{split}
	\end{align}
	where, in the last equality, we have used \eqref{eq:eigenvalues}.
	
	On the fourth line of the above calculation, we dropped the condition $\diam(\mathcal{C}^a_i)>\varepsilon$, which distinguishes $\Phi^a_{n,\varepsilon}$ from $\Phi^a_{n}$, so the final upper bound applies also to $\Phi^a_{n}$:
	\begin{align}
	\begin{split} \label{eq:finite-limsup}
	& \limsup_{a \to 0} \big\langle \Vert \hat{\Phi}^a_{n} \Vert^2_{H^{-\alpha'}} \big\rangle^a = \limsup_{a \to 0} \sum_{i,j} \frac{1}{\lambda_{i,j}^{2\alpha'}} \Big\langle \big( \hat{\Phi}^a_{n}(u_{i,j}) \big)^2 \Big\rangle^a \\
	& \quad \leq C_2 \Big(\frac{2}{\pi}\Big)^{2(1+\epsilon)} n^{2\epsilon} \left(\sum_{i,j=1}^{\infty} \frac{1}{(i^2+j^2)^{1+\epsilon}}\right) \int_{D_n}\int_{D_n} \frac{dx_1 dx_2}{\vert x_1-x_2 \vert^{5/24}} < \infty.
	\end{split}
	\end{align}
	
	This, combined with Chebyshev's inequality, implies that $\hat\Phi^a_{n}$ and $\hat\Phi^a_{n,\varepsilon}$ are tight, as $a \to 0$, in $H^{-\alpha}(D_n)$ for $\alpha>1/2$. Moreover, Rellich's theorem implies that $H^{-\alpha_1}(D_n)$ is compactly embedded in $H^{-\alpha_2}(D_n)$ for any $\alpha_1<\alpha_2$ and thus, in particular, that the closure of a ball of finite radius in $H^{-\alpha'}(D_n)$ is compact in $H^{-\alpha}(D_n)$. Therefore, $\hat\Phi^a_{n}$ and $\hat\Phi^a_{n,\varepsilon}$ have subsequential limits in distribution in $H^{-\alpha}(D_n)$ for any $\alpha>1/2$. Furthermore, since $\hat\Phi^a_{n}$ and $\hat\Phi^a_{n,\varepsilon}$ are naturally coupled via the percolation model, one has joint convergence in distribution of $(\hat\Phi^{a_k}_{n},\hat\Phi^{a_k}_{n,\varepsilon})$ along some sequence $a_k \to 0$. We denote the limit by $(\hat\Phi_{n},\hat\Phi_{n,\varepsilon})$.
	
	We now fix $\alpha=1$ and note that the Sobolev space $H^{-1}(D_n)$ endowed with the norm $\Vert \cdot \Vert_{H^{-1}}$ is a complete separable metric space. Therefore, using Skorokhod's representation theorem, we can find coupled versions of $(\hat\Phi^{a_k}_{n},\hat\Phi^{a_k}_{n,\varepsilon})$ and $(\hat\Phi_{n},\hat\Phi_{n,\varepsilon})$ such that $\lim_{k \to \infty}\Vert \hat\Phi_{n}-\hat\Phi^{a_k}_{n} \Vert_{H^{-1}}=0$ and $\lim_{k \to \infty}\Vert \hat\Phi_{n,\varepsilon}-\hat\Phi^{a_k}_{n,\varepsilon} \Vert_{H^{-1}}=0$ almost surely.
	
	This implies that $\Vert \Phi^{a_k}_n \Vert^2_{H^{-1}}$ converges to $\Vert \hat\Phi_n \Vert^2_{H^{-1}}$ almost surely. In addition, Fatou's lemma and \eqref{eq:finite-limsup} imply that
	\begin{align}
	\big\langle \Vert \hat\Phi_n \Vert^2_{H^{-1}} \big\rangle \leq \limsup_{k \to \infty} \big\langle \Vert \hat{\Phi}^{a_k}_{n} \Vert^2_{H^{-1}} \big\rangle^{a_k} \leq \hat{C}_n
	\end{align}
	for some $\hat{C}_n<\infty$. Similar considerations, using equation \eqref{eq:bounded-second-moment}, give $\big\langle \Vert \hat\Phi_{n,\varepsilon} \Vert^2_{H^{-1}} \big\rangle \leq \hat{C}_n$.
	
	Now let $Y^{\varepsilon}_k \equiv \Vert \hat\Phi^{a_k}_{n} - \hat\Phi^{a_k}_{n,\varepsilon} \Vert^2_{H^{-1}}$ and $Y^{\varepsilon} \equiv \Vert \hat\Phi_{n} - \hat\Phi_{n,\varepsilon} \Vert^2_{H^{-1}}$ and observe that $Y^{\varepsilon}_k$ converges to $Y^{\varepsilon}$ almost surely due to the almost sure convergence of $(\hat\Phi^{a_k}_{n},\hat\Phi^{a_k}_{n,\varepsilon})$ to $(\hat\Phi_{n},\hat\Phi_{n,\varepsilon})$ with the norm $\Vert \cdot \Vert_{H^{-1}}$. Therefore, applying Fatou's lemma again, we have that $\langle Y^{\varepsilon} \rangle \leq \limsup_{k \to \infty} \langle Y^{\varepsilon}_k \rangle$, which means that
	\begin{align}
	\begin{split} \label{eq:L2-bound}
	& \big\langle\| \hat\Phi_{n} - \hat\Phi_{n,\varepsilon} \|^2_{H^{-1}}\big\rangle \leq \limsup_{k \to \infty} \big\langle\| \hat\Phi^{a_k}_{n} - \hat\Phi^{a_k}_{n,\varepsilon} \|^2_{H^{-1}}\big\rangle^{a_k}.
	\end{split}
	\end{align}
	
	Using again the fact that $\hat\Phi^a_{n}, \hat\Phi^a_{n,\varepsilon} \in L^2(D_n)$, a calculation similar to \eqref{eq:bounded-second-moment} shows that, for any $\varepsilon>\varepsilon'>0$ and some constants $C,C'<\infty$,
	\begin{align}
	\begin{split} \label{eq:limsup-dist}
	& \limsup_{k \to \infty} \big\langle \Vert \hat{\Phi}^{a_k}_{n}  - \hat{\Phi}^{a_k}_{n,\varepsilon}\Vert^2_{H^{-1}} \big\rangle^{a_k} = \limsup_{k \to \infty} \sum_{i,j} \frac{1}{\lambda_{i,j}^{2}} \Big\langle \Big[ \big( \hat{\Phi}^{a_k}_{n} - \hat{\Phi}^{a_k}_{n,\varepsilon} \big) (u_{i,j}) \big]^2 \Big\rangle^{a_k} \\
	& \leq \left(\sum_{i,j} \frac{\| u_{i,j} \|^2_{L^{\infty}}}{\lambda_{i,j}^{2}}\right) \limsup_{k \to \infty} a_k^4 \pi_{a_k}^{-2} \left\langle\Big( \sum_{i:\mathcal{C}^{a_k}_i \in \mathscr{C}^{a_k}_{D_n}, \diam(\mathcal{C}^{a_k}_i)\leq\varepsilon} \sigma_i \vert \mathcal{C}^{a_k}_i \cap D_n \vert\Big)^2\right\rangle^{a_k} \\
	& \leq \frac{1}{n^2}\left(\sum_{i,j} \frac{1}{\lambda_{i,j}^{2\alpha}}\right) \limsup_{k \to \infty} a_k^4 \pi_{a_k}^{-2} E^{a_k}\Big(\sum_{i:\mathcal{C}^{a_k}_i \in \mathscr{C}^{a_k}_{D_n}, \diam(\mathcal{C}^{a_k}_i)\leq\varepsilon} \vert \mathcal{C}^{a_k}_i \cap D_n \vert^2\Big) \\
	& \leq \frac{1}{n^2}\left(\sum_{i,j} \frac{1}{\lambda_{i,j}^{2}}\right) \limsup_{k \to \infty} a_k^4 \sum_{x_1^{a_k},x_2^{a_k}\in{a_k\mathcal{T} \cap D_n: |x^{a_k}_1-x^{a_k}_2|\leq\epsilon}} \pi_{a_k}^{-2} P^{a_k}_2(x^{a_k}_1,x^{a_k}_2) \\
	& \leq \frac{C}{n^2}\left(\sum_{i,j} \frac{1}{\lambda_{i,j}^{2}}\right) \int_0^{\varepsilon} r^{1-5/24} dr \\
	& \leq C' n^2 \left(\sum_{i,j=1}^{\infty} \frac{1}{(i^2+j^2)^{2}}\right) \varepsilon^{43/24},
	\end{split}
	\end{align}
	where, in the last inequality, we have used \eqref{eq:eigenvalues}.
	
	Combined with \eqref{eq:L2-bound}, \eqref{eq:limsup-dist} implies that
	\begin{align}
	\begin{split} \label{eq:mean-square-limit}
	\lim_{\varepsilon \to 0} \big\langle\| \hat\Phi_{n} - \hat\Phi_{n,\varepsilon} \|^2_{H^{-1}}\big\rangle = 0,
	\end{split}
	\end{align}
	showing that $\hat\Phi_{n,\varepsilon}$ converges in mean square to $\hat\Phi_{n}$, as $\varepsilon \to 0$, with the norm $\Vert \cdot \Vert_{H^{-1}}$.
	
	We will show next that $\hat\Phi^a_{n,\varepsilon}$ has a unique limit in $H^{-1}(D_n)$, in the topology induced by $\Vert \cdot \Vert_{H^{-1}}$, as $a \to 0$. 
	If $f \in C^{\infty}_0(D_n)$,
	\begin{align}
	\begin{split} \label{eq:tildephi=phi}
	& \hat\Phi^a_{n,\varepsilon}(f) = \int_{D_n} \hat\Phi^a_{n,\varepsilon}(x) f(x) dx \\
	& = \qquad a^2 \pi_a^{-1} \sum_{i:\mathcal{C}^a_i \in \mathscr{C}^a_{D_n}, \diam(\mathcal{C}^a_i)>\varepsilon} \sigma_i \sum_{x^a \in \mathcal{C}^a_i} \int_{D_n} \frac{\mathbf{1}_{x^a}(x)}{A_a} \, \big[f(x_a) + O(a)\big] dx \\
	& = \qquad \sum_{i:\mathcal{C}^a_i \in \mathscr{C}^a_{D_n},  \diam(\mathcal{C}^a_i)>\varepsilon} \sigma_i \Big[ a^2 \pi_a^{-1} \sum_{x^a \in \mathcal{C}^a_i \cap D_n} f(x_a) \Big] + a^2 \pi_a^{-1} R_f(a) \\
	& = \qquad \sum_{i:\mathcal{C}^a_i \in \mathscr{C}^a_{D_n}, \diam(\mathcal{C}^a_i)>\varepsilon} \sigma_i \, \mu^a_i(f) + a^2 \pi_a^{-1} R_f(a),
	\end{split}
	\end{align}
	where $R_f(a)$ is the sum of at most $K_n/a^2$ bounded terms, where $K_n$ ($\sim n^2$ as $n \to \infty$) is a constant, depending on $n$ but not on $a$, such that $K_n/a^2$ gives an upper bound for the number of vertices of $a\mathcal{T}$ in $a\mathcal{T} \cap D_n$. Each of the terms in $R_f(a)$ is of order $O(a)$, so that $\lim_{a \to 0} a^2 \pi_a^{-1} R_f(a) = 0$ because $\pi_a=a^{5/48+o(1)}$ as $a \to 0$ \cite{LSW02}.
	
	Therefore, by an application of Theorem~3 of \cite{CCK19}, as $a \to 0$, $\hat\Phi^a_{n,\varepsilon}(f)$ converges in distribution to $\Phi_{n,\varepsilon}(f)$. Since this is true for every $f \in C^{\infty}_0(D_n)$, all subsequential limits of $\hat\Phi^a_{n,\varepsilon}$ in $H^{-1}(D_n)$ in the topology induced by $\Vert \cdot \Vert_{H^{-1}}$ must coincide with $\Phi_{n,\varepsilon}$, in distribution, on $C^{\infty}_0(D_n)$.
	
	According to Lemma~A.5 of \cite{CGPR21}, the restriction of an element $F$ of $H^{-1}(D_n)$ to $C^{\infty}_0(D_n)$ determines the distribution of $F$ uniquely, so $\hat\Phi^a_{n,\varepsilon}$ has a unique limit $\hat\Phi_{n,\varepsilon}$ in $H^{-1}(D_n)$ in the topology induced by $\Vert \cdot \Vert_{H^{-1}}$. Moreover, $\hat\Phi_{n,\varepsilon}$ coincides with $\Phi_{n,\varepsilon}$, in distribution, on $C^{\infty}_0(D_n)$.
	
	The fact that $\hat\Phi_{n,\varepsilon}$ is unique, combined with the convergence of $\hat\Phi_{n,\varepsilon}$ to $\hat\Phi_{n}$ in mean square, \eqref{eq:mean-square-limit}, implies that
	$\hat\Phi^a_n$ has a unique limit in distribution in the topology induced by $\Vert \cdot \Vert_{H^{-1}}$ and concludes the proof.
\end{proof}

Let $\hat{\mathbb{P}}_n$ denote the distribution of the field $\hat\Phi_n$ from the previous theorem seen as an element of $H^{-1}([-n,n]^2)$. $\hat{\mathbb{P}}_n$ is a probability measure on the measurable space $(H^{-1}([-n,n]^2),\mathcal{B}_n)$, where $\mathcal{B}_n$ denotes the Borel sigma-algebra induced by the norm $\Vert \cdot \Vert_{H^{-1}}$.

\begin{lemma} \label{lemma:inf-vol-extension}
	The probability distributions $\{\hat{\mathbb{P}}_n\}_n$ have a unique extension to a distribution $\hat{\mathbb{P}}$ on $(H^{-1}(\mathbb{R}^2),\mathcal{B})$ where $\mathcal{B}$ denotes the Borel sigma-algebra induced by the norm $\Vert \cdot \Vert_{H^{-1}}$. In other words, there exists a field $\Phi$ defined on  $H_0^1(\mathbb{R}^2)$ such that $\Phi_n$ has the same distribution as $\Phi$ restricted to $[-n,n]^2$ (i.e., restricted to functions $f \in H_0^1([-n,n]^2)$).
\end{lemma}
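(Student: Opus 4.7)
The plan is to prove Lemma \ref{lemma:inf-vol-extension} by a Kolmogorov-type extension argument applied to the family $\{\hat{\mathbb{P}}_n\}_n$, once consistency of the restrictions has been established.

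For integers $m<n$, I first show that the restriction of $\hat\Phi_n$ to $H_0^1(D_m)$ has the same law as $\hat\Phi_m$. At the lattice level, $\hat\Phi^a_n$ and $\hat\Phi^a_m$ are constructed from the same percolation configuration on $a\mathcal{T}$ and the same family $\{\sigma_i\}$ of signs attached to the clusters. For any $f\in C_0^\infty(D_m)$, a cluster $\mathcal{C}^a_i\in\mathscr{C}^a_{D_n}\setminus\mathscr{C}^a_{D_m}$ does not meet $D_m$ and contributes nothing to the integral against $f$, while for $\mathcal{C}^a_i\in\mathscr{C}^a_{D_m}\subset\mathscr{C}^a_{D_n}$ the inner sums over $x^a\in\mathcal{C}^a_i\cap D_n$ and over $x^a\in\mathcal{C}^a_i\cap D_m$ give the same integral against $f$, up to the $O(a)$ discretization error already controlled in \eqref{eq:tildephi=phi}. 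Hence $\hat\Phi^a_n(f)-\hat\Phi^a_m(f)\to 0$ in probability as $a\to 0$. Taking a joint subsequential limit of $(\hat\Phi^a_n,\hat\Phi^a_m)$ in $H^{-1}(D_n)\times H^{-1}(D_m)$ via the Skorokhod coupling already used in the proof of Theorem \ref{theorem:Sobolev_convergence} shows that $\hat\Phi_n(f)=\hat\Phi_m(f)$ almost surely for every $f\in C_0^\infty(D_m)$. Since by Lemma~A.5 of \cite{CGPR21} the law of an element of $H^{-1}(D_m)$ is determined by its action on $C_0^\infty(D_m)$, this yields that the restriction of $\hat\Phi_n$ to $H_0^1(D_m)$ has the same distribution as $\hat\Phi_m$.

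With this consistency in hand, I construct $\Phi$ by Kolmogorov extension. The restriction maps $H^{-1}(D_{n+1})\to H^{-1}(D_n)$, defined as the adjoints of the inclusions $H_0^1(D_n)\hookrightarrow H_0^1(D_{n+1})$, are continuous, and by the preceding step the push-forward of $\hat{\mathbb{P}}_{n+1}$ under this map is $\hat{\mathbb{P}}_n$. The projective family $\{(\hat{\mathbb{P}}_n, H^{-1}(D_n))\}_n$ thus consists of Borel probability measures on separable Hilbert spaces linked by continuous linear projections, so Kolmogorov's extension theorem produces a unique probability measure $\hat{\mathbb{P}}$ on the projective limit of the spaces $H^{-1}(D_n)$ whose marginal on $H^{-1}(D_n)$ is $\hat{\mathbb{P}}_n$ for every $n$. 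This projective limit is canonically identified with the space of linear functionals on $H_0^1(\mathbb{R}^2):=\bigcup_n H_0^1(D_n)$ (i.e., $H^1$ functions with compact support) whose restriction to each $H_0^1(D_n)$ lies in $H^{-1}(D_n)$; a sample $\Phi$ from $\hat{\mathbb{P}}$ is the required random field, and uniqueness of $\hat{\mathbb{P}}$ is built into the Kolmogorov construction.

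The main obstacle is the first step: upgrading the lattice-level agreement $\hat\Phi^a_n(f)=\hat\Phi^a_m(f)+O(a)$ on test functions to an identification of the scaling-limit restrictions as $H^{-1}(D_m)$-valued random elements. As indicated above, this is handled by a joint Skorokhod coupling together with the $C_0^\infty$-density argument already developed in the proof of Theorem \ref{theorem:Sobolev_convergence}.
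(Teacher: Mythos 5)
Your proof is correct and follows the same overall skeleton as the paper's: establish consistency of the restrictions on $C_0^\infty(D_m)$, upgrade this to an identification of the $H^{-1}(D_m)$-valued restrictions via Lemma~A.5 of \cite{CGPR21}, and then invoke Kolmogorov's extension theorem for the projective system of Polish spaces. The one genuine difference is where the consistency is verified. The paper works in the continuum: it observes from \eqref{eq:cutoff-cont-field} that the cutoff fields satisfy $\Phi_{n,\varepsilon}(f)=\Phi_{m,\varepsilon}(f)$ identically for $f$ supported in $D_m$ (measures not meeting $D_m$ contribute nothing), transfers this to $\hat\Phi_{n,\varepsilon}$ via the identification in Theorem~\ref{theorem:Sobolev_convergence}, and then removes the cutoff using the mean-square convergence \eqref{eq:mean-square-limit}. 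You instead verify the analogous cancellation directly at the lattice level, where $\hat\Phi^a_n(f)=\hat\Phi^a_m(f)$ exactly for $a$ small (clusters and hexagons outside $D_m$ miss $\supp(f)$), and push the identity through a joint Skorokhod coupling of $(\hat\Phi^a_n,\hat\Phi^a_m)$; the uniqueness of the marginal limits from Theorem~\ref{theorem:Sobolev_convergence} guarantees the coupled limit has the right marginals. Your route avoids the $\varepsilon$-cutoff and the $L^2$ limit for this step, at the cost of re-running the tightness/Skorokhod argument for the pair; the paper's route reuses machinery it has already built. Both are sound. One further remark, which applies equally to the paper's own proof: the projective limit you construct is a priori the space of functionals on $\bigcup_n H_0^1(D_n)$ whose restrictions lie in each $H^{-1}(D_n)$, which is larger than $H^{-1}(\mathbb{R}^2)$ with a finite global norm; your explicit identification of the limit space is, if anything, slightly more careful on this point than the paper's statement.
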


\begin{proof}
	Let $D_n=[-n,n]^2$. It is clear from \eqref{eq:cutoff-cont-field} that, for any $\varepsilon>0$, for each $f \in C^{\infty}_0(D_n)$, $\Phi_{m,\varepsilon}(f)$ and $\Phi_{n,\varepsilon}(f)$ have the same distribution for all $m>n$. The same is true for $\hat\Phi_{m,\varepsilon}(f)$ and $\hat\Phi_{n,\varepsilon}(f)$ because $\hat\Phi_{k,\varepsilon}$ coincides in distribution with $\Phi_{k,\varepsilon}$ on $C^{\infty}_0(D_k)$ for every $k$ (see Theorem \ref{theorem:Sobolev_convergence}).
	
	Moreover, according to Theorem \ref{theorem:Sobolev_convergence}, as $\varepsilon \to 0$, $\hat\Phi_{k,\varepsilon}$ converges in mean square to $\hat\Phi_{k}$ in the topology induced by $\Vert \cdot \Vert_{H^{-1}}$. This, combined with the observation that $\vert \hat\Phi_{k,\varepsilon}(f) - \hat\Phi_{k}(f) \vert \leq \Vert f \Vert_{H^1_0} \Vert \big(\hat\Phi_{k,\varepsilon}-\hat\Phi_{k}\big) \Vert_{H^{-1}}$, implies that $\hat\Phi_{k,\varepsilon}(f)$ converges in mean square to $\hat\Phi_{k}(f)$,  as $\varepsilon \to 0$.
	Therefore, we can conclude that $\hat\Phi_{m}(f)$ and $\hat\Phi_{n}(f)$ have the same distribution for all $m>n$ and all $f \in C^{\infty}_0(D_n)$.
	
	According to Lemma~A.5 of \cite{CGPR21}, the restriction of an element $F$ of $H^{-1}(D_n)$ to $C^{\infty}_0(D_n)$ determines the distribution of $F$ uniquely. As a consequence, for every $m>n$, $\hat\Phi_{n}$ and the restriction $\hat\Phi_{m} \vert_{D_n}$ of $\hat\Phi_{m}$ to $D_n$ (more precisely, to functions in $H^1_0(D_n)$) have the same distribution.
	
	Since the spaces $H^{-1}(D_n), H^{-1}(\mathbb{R}^2)$ endowed with the norm $\Vert \cdot \Vert_{H^{-1}}$ are complete separable metric spaces, 
	$(H^{-1}(D_n),\mathcal{B}_n)$ and $(H^{-1}(\mathbb{R}^2),\mathcal{B})$ are standard Borel spaces.
	Therefore, we can apply Kolmogorov's extension theorem and conclude that there is a unique probability measure $\hat{\mathbb{P}}$ on $(H^{-1}(\mathbb{R}^2),\mathcal{B})$ such that $\hat{\mathbb{P}} \vert_{D_n} = \hat{\mathbb{P}}_n$ for all $n \in \mathbb{N}$.
\end{proof}

We are now ready to prove Theorem \ref{thm:Sobolev-conv-field} and Corollary \ref{cor:conf-cov}.

\begin{proof}[Proof of Theorem \ref{thm:Sobolev-conv-field}]
	Let $\Phi$ be a random element of $H^{-1}(\mathbb{R}^2)$ distributed according to $\hat{\mathbb{P}}$ from Lemma \ref{lemma:inf-vol-extension}. Then, by Theorem \ref{theorem:Sobolev_convergence}, for any $n \in \mathbb{N}$, as $a \to 0$, $\hat\Phi_n^{a}$ converges in distribution to $\Phi \vert_{[-n,n]^2}$, the restriction of $\Phi$ to $[-n,n]^2$, in the topology induced by the norm $\Vert \cdot \Vert_{H^{-1}}$.
	
	Given a function $f \in C^{\infty}_0(\mathbb{R}^2)$, there exists $n \in \mathbb{N}$ such that $\supp(f) \subset [-n,n]^2$. Therefore, using Lemma \ref{lemma:inf-vol-extension} and Theorem \ref{theorem:Sobolev_convergence}, we have that $\hat\Phi_n(f)$ is equal in distribution to $\Phi \vert_{[-n,n]^2}(f)=\Phi(f)$ and that $\hat{\Phi}_{n,\varepsilon}(f)$ is equal in distribution to
	\begin{align}
	\Phi_{n,\varepsilon}(f) = \sum_{k:\diam(\supp(\mu_k))>\varepsilon} \sigma_k \, \mu_k(f).
	\end{align}
	Moreover, \eqref{eq:L2-bound} and \eqref{eq:limsup-dist} imply that
	\begin{align}
	\begin{split}
	& \Big\langle \Big\vert \hat\Phi_n(f) - \hat\Phi_{n,\varepsilon}(f) \Big\vert^2 \Big\rangle \leq \Vert f \Vert_{H^1}^2 \langle \Vert \hat\Phi_n - \hat\Phi_{n,\varepsilon} \Vert_{H^{-1}}^2 \rangle \\
	& \qquad \leq \Vert f \Vert_{H^1}^2 C' n^2 \left(\sum_{i,j=1}^{\infty} \frac{1}{(i^2+j^2)^{2}}\right) \varepsilon^{43/24}.
	\end{split}
	\end{align}
	Combining these observations, we obtain
	\begin{align}
	\begin{split}
	& \Big\langle \Big\vert \Phi(f) - \sum_{k:\diam(\supp(\mu_k))>\varepsilon} \sigma_k \, \mu_k(f) \Big\vert^2 \Big\rangle \\
	& \qquad \leq \Vert f \Vert_{H^1}^2 C' n^2 \left(\sum_{i,j=1}^{\infty} \frac{1}{(i^2+j^2)^{2}}\right) \varepsilon^{43/24},
	\end{split}
	\end{align}
	as desired. \end{proof}

\begin{proof}[Proof of Corollary \ref{cor:conf-cov}]
	Assume first that $f \in C^{\infty}_0(\mathbb{R}^2)$ and let
	\begin{align}
	\tilde\Phi_{\varepsilon}(f):=\sum_{k:\diam(\supp(\mu_k))>\varepsilon} \sigma_k \, \mu_k(f).
	\end{align}
	Then, if $h_s(x)=sx$ denotes a scale transformation, Theorems 3 and 4 of \cite{CCK19} imply that
	\begin{align}
	\begin{split}
	& \tilde\Phi_{\varepsilon,s}(f):=\int_{{\mathbb R}^2} f\Big(\frac{x}{s}\Big) \tilde\Phi_{\varepsilon}(x) dx = \tilde\Phi_{\varepsilon}(f \circ h_{1/s}) \\
	& \qquad = \sum_{k:\diam(\supp(\mu_k))>\varepsilon} \sigma_k \, \mu_k(f \circ h_{1/s})
	\end{split}
	\end{align}
	has the same distribution as $s^{2-5/48}\tilde\Phi_{\varepsilon}(f)$. By Theorem \ref{thm:Sobolev-conv-field}, $\Phi(f)$ is the $L^2$ limit of $\tilde\Phi_{\varepsilon}(f)$, as $\varepsilon \to 0$, therefore $\Phi(f \circ h_{1/s})=\Phi_s(f)$ is distributed like $s^{2-5/48}\Phi(f)$. Formally, we can write
	\begin{align}
	\Phi_s(f) = \int_{{\mathbb R}^2} f\Big(\frac{x}{s}\Big) \Phi(x) dx = s^2 \int_{{\mathbb R}^2} f(x) \Phi(sx) dx,
	\end{align}
	which implies that $\Phi(sx)$ is equal in distribution to $s^{-5/48}\Phi(x)$.
	
	If $f \in H^1_0(\mathbb{R}^2)$ is not in $C^{\infty}_0(\mathbb{R}^2)$, take a sequence of functions $f_n \in C^{\infty}_0(\mathbb{R}^2)$ converging to $f$ in the topology induced by $\Vert \cdot \Vert_{H^1_0}$. This can always be done because $H^1_0(\mathbb{R}^2)$ is the closure of $C^{\infty}_0(\mathbb{R}^2)$ with respect to $\Vert \cdot \Vert_{H^1_0}$. The continuity of $\Phi$ implies the desired result. \end{proof}

\medskip

\noindent{\bf Acknowledgments.}
The author thanks Rob van den Berg, Omar El Dakkak, Jianping Jiang and Chuck Newman for useful discussions, Gesualdo Delfino for an interesting correspondence, and an anonymous referee for a careful reading of the manuscript and for useful comments and suggestions.
The author is especially grateful to Rob van den Berg for a conversation that revealed a gap in a previous version of the paper.

\medskip


\begin{thebibliography}{99}
	
	\bibitem{Aizenman98A} Aizenman M. Continuum Limits for Critical Percolation
	and Other Stochastic Geometric Models. Available as arXiv:9806004 (1998).
	
	\bibitem{Aizenman98B} Aizenman M. Scaling Limit for the Incipient Spanning Clusters. \emph{Mathematics of Multiscale Materials}, 1--24, The IMA Volumes in Mathematics and its Applications. 99. Springer, New York, NY, 1998. 
	
	\bibitem{AS21} Ang M.; Sun X. Integrability of the conformal loop ensemble. Preprint, 2021. 
	
	\bibitem{Balint-thesis} B\'alint A. Divide and colour models. Doctoral dissertation, Vrije Universiteit Amsterdam, 2009.
	
	\bibitem{Balint10} B\'alint A. Gibbsianness and non-Gibbsianness in divide and color models. \textit{Ann. Probab.} \textbf{38} (2010) 1609--1638.
	
	\bibitem{BBT13} B\'alint A.; Beffara V.; Tassion V. On the critical value function in the divide and color model. \emph{ALEA Lat. Am. J. Probab. Math. Stat.} {\bf 10} (2013) 653–-666.
	
	\bibitem{BCM09} B\'alint A.; Camia F.; Meester R. Sharp phase transition and critical behaviour in 2D divide and colour models. \textit{Stochastic Process. Appl.} \textbf{119} (2009) 937--965.
	
	\bibitem{BPZ84a}
	Belavin A. A.; Polyakov A. M.; Zamolodchikov A. B.
	Infinite conformal symmetry of critical fluctuations in two dimensions. \textit{J. Stat. Phys.} \textbf{34} (1984) 763--774.
	
	\bibitem{BPZ84b}
	Belavin A. A.; Polyakov A. M.; Zamolodchikov A. B.
	Infinite conformal symmetry in two-dimensional quantum field theory. \textit{Nuclear Phys. B}  \textbf{241} (1984) 333--380.
	
	\bibitem{BI12} Beliaev D.; Izyurov K. A proof of factorization formula for critical percolation. \textit{Commun. Math. Phys.} \textbf{310} (2012) 1286--1304.
	
	\bibitem{br-book} Bollob\'as B.; Riordan O. \textit{Percolation}.
	Cambridge University Press, New York, N.Y., 2006.
	
	\bibitem{bh}
	Broadbent S. R.; Hammersley J. M. Percolation processes. I.
	Crystals and mazes. \textit{Proc. Cambridge Philos. Soc.} \textbf{53} (1957) 629--641.
	
	\bibitem{vdBCL18} van de Brug T.; Camia F.; Lis M. Spin systems from loop soups. \textit{Electron. J. Probab.} \textbf{23} (2019) 1--17.
	
	\bibitem{CCK19}
	Camia F.; Conijn R.; Kiss D.
	Conformal measure ensembles for percolation and the FK-Ising model. \emph{Sojourns in Probability Theory and Statistical Physics - II}, 44--89, Springer Proceedings in Mathematics and Statistics, 229. Springer Nature, Singapore, 2019.
	
	\bibitem{CGK16}
	Camia F.; Gandolfi A.; Kleban M.
	Conformal correlation functions in the Brownian loop soup.
	\textit{Nucl. Phys. B} \textbf{902} (2016) 483--507.
	
	\bibitem{CGPR21}
	Camia F.; Gandolfi A.; Peccati G.; Reddy T.
	Brownian loops, layering fields and imaginary Gaussian multiplicative chaos.
	\textit{Comm. Math. Phys.} \textbf{381} (2021) 889--945.
	
	\bibitem{CGN15}
	Camia F.; Garban C.; Newman C. M.
	Planar Ising magnetization field I. Uniqueness of the critical scaling limit.
	\textit{Ann. Probab.} \textbf{43} (2015) 528--571.
	
	\bibitem{CJN21}
	Camia F.; Jiang J.; Newman C. M.
	Conformal Measure Ensembles and Planar Ising Magnetization: A Review.
	\textit{Markov Process. Relat. Fields} \textbf{27} (2021) 631--663.
	
	\bibitem{CN04}
	Camia F.; Newman C. M.
	Continuum nonsimple loops and 2D critical percolation. \textit{J. Stat. Phys.} \textbf{116} (2004) 157--173.
	
	\bibitem{CN06}
	Camia F.; Newman C. M.
	Two-dimensional critical percolation: The full scaling limit. \textit{Comm. Math. Phys.} \textbf{268} (2006) 1--38.
	
	\bibitem{CN07}
	Camia F.; Newman C. M.
	Critical percolation exploration path and SLE$_6$: a proof of convergence. \textit{Probab. Theory Relat. Fields} \textbf{139} (2007) 473--519.
	
	\bibitem{CN08} Camia F.; Newman C. M. SLE$_6$ and CLE$_6$ from critical percolation. \emph{Probability, geometry and integrable systems}, 103--130, MSRI Publications, 55. Cambridge University Press, New York, NY, 2008.
	
	\bibitem{CN09}
	Camia F.; Newman C. M.
	Ising (conformal) fields and cluster area measures.
	\textit{Proc. Natl. Acad. Sci. USA} \textbf{106} (2009) 5547--5463.
	
	\bibitem{Cardy92} Cardy J. L. Critical percolation in finite geometries. \textit{J.~Phys.~A} \textbf{25} (1992) L201-206.
	
	\bibitem{Conijn15} Conijn R. P. Factorization formulas for 2D critical percolation, revisited. \textit{Stochastic Process. Appl.} \textbf{125} (2015) 4102--4116.
	
	\bibitem{CR13} Creutzig T.; Ridout D. Logarithmic conformal field theory: beyond an introduction. \textit{J. Phys. A: Math. Theor.} \textbf{46} (2013) 494006.
	
	\bibitem{DV11} Delfino G.; Viti J. On three-point connectivity in two-dimensional percolation. \textit{J. Phys. A: Math. Theor.} \textbf{44} (2011) 032001.
	
	\bibitem{DV11bis} Delfino G.; Viti J. Potts $q$-color field theory and scaling random cluster model. \textit{Nucl. Phys. B} \textbf{852} (2011) 149--173.
	
	\bibitem{DFMS} Di Francesco P.; Mathieu P.; S\'en\'echal D. \textit{Conformal Field Theory}. Springer, New York, NY, 1997.
	
	\bibitem{Dotsenko16} Dotsenko, V. S. Correlation finctions of four spins in the percolation model. \textit{Nucl. Phys. B} \textbf{911} (2016) 712--743.
	
	\bibitem{Dubedat09} Dubedat J. SLE and the free field: Partition functions and couplings. \textit{J. Amer. Math. Soc.} \textbf{22} (2009) 995--1054.
	
	\bibitem{GPS13} Garban C.; Pete G.; Schramm O. Pivotal, cluster, and interface measures for critical planar percolation. \textit{J. Amer. Math. Soc.} \textbf{26} (2013) 939--1024.
	
	\bibitem{grimmett-book} Grimmett G. R. \emph{Percolation}.
	Second Edition. Springer, Berlin, 1999.
	
	\bibitem{GMQ21}
	Gwynne E.; Miller J.; Qian W.
	Conformal invariance of CLE$_{\kappa}$ on the Riemann sphere for $\kappa \in (4,8)$.
	\textit{Int. Math. Res. Not.} \textbf{23} (2021) 17971--18036.
	
	\bibitem{Haggstrom01} H\"aggstr\"om O. Coloring percolation clusters at random.
	\textit{Stochastic Process. Appl.} \textbf{96} (2001) 213--242.
	
	\bibitem{Harris60} Harris T. E. A lower bound for the critical probability in a certain percolation process. \textit{Proc.~Cambr.~Phil.~Soc.} \textbf{56} (1960) 13--20.
	
	\bibitem{Henkel} Henkel M. \emph{Conformal Invariance and Critical Phenomena} Springer, Berlin, Heidelberg, 1999.
	
	\bibitem{JS19} Jacobsen J. L.; Saleur H. Bootstrap approach to geometrical four-point functions in the two-dimensional critical Q-state Potts model: a study of the s-channel spectra. \textit{J. High Energy Phys.} \textbf{2019} (2019) 84.
	
	\bibitem{Kesten80} Kesten H. The critical probability of bond percolation on the square lattice is $\frac{1}{2}$. \textit{Commun. Math. Phys.} \textbf{74} (1980) 41--59.
	
	\bibitem{kesten-book} Kesten H. \emph{Percolation Theory for Mathematicians}. Birkh\"auser, Boston, 1982.
	
	\bibitem{Kesten86} Kesten H. The Incipient Infinite Cluster in Two-Dimensional Percolation. \textit{Probab.~Theory~Relat.~Fields} \textbf{73} (1986) 369--394.
	
	\bibitem{KSZ06} Kleban P.; Simmons J. J. H.; Ziff R. M. Anchored critical percolation clusters and 2D electrostatics. \textit{Phys. Rev. Lett.} \textbf{97} (2006) 115702.
	
	\bibitem{LPS-A94} Langlands R.; Pouliot P.; Saint-Aubin Y. Conformal invariance in two-dimensional percolation. \emph{Bull. Amer. Math. Soc.} \textbf{30} (1994) 1--61.
	
	\bibitem{LSW02}
	Lawler G.; Schramm O.; Werner W.
	One arm exponent for critical 2D percolation.
	\textit{Electron.~J.~Probab.} \textbf{7} (2002) paper no. 2.
	
	\bibitem{LSW04}
	Lawler G. F.; Schramm O.; Werner W.
	Conformal invariance of planar loop-erased random walks and uniform spaning trees. \textit{Ann. Probab.} \textbf{32} (2004) 939--995.
	
	\bibitem{Lindvall99} Lindvall T. On Strassen's theorem on stochastic domination.
	\textit{Electron. Comm. Probab.} \textbf{4} (1999) 51--59.
	
	\bibitem{LPW21} Liu M.; Peltola E.; Wu H. Uniform Spanning Tree in Topological Polygons, Partition Functions for SLE(8), and Correlations in $c = -2$ Logarithmic CFT. Preprint, 2021. 
	
	\bibitem{Needham} T.~Needham, \emph{Visual Comples Analysis}, Oxford University Press (1997). Reprinted (with corrections) in 2012.
	
	\bibitem{PRS16} Picco M.; Ribault S.; Santachiara R. A conformal bootstrap approach to critical percolation in two dimensions. \textit{SciPost Phys.} \textbf{1} (2016) 009.
	
	\bibitem{PSVD13} Picco M.; Santachiara R.; Viti J.; Delfino G. Connectivities of Potts Fortuin–Kasteleyn clusters and time-like Liouville correlator. \textit{Nuclear Physics B} {\textbf 875} (2013) 719--737.
	
	\bibitem{Polyakov70} Polyakov A. M. Conformal symmetry of critical fluctuations.
	\textit{JETP Letters} \textbf{12} (1970) 381--383.
	
	\bibitem{Sch00}
	Schramm O.
	Scaling limits of loop-erased random walks and uniform spanning trees. \textit{Israel J. Math.} \textbf{118} (2000) 221--288.
	
	\bibitem{SS11}
	Schramm O.; Smirnov S. On the scaling limits of planar percolation. \textit{Ann. Probab.} \textbf{39} (2011) 1768--1814.
	
	\bibitem{She09}
	Sheffield S. Exploration trees and conformal loop ensembles. \textit{Duke Math. J.} \textbf{147} (2009) 79--129.
	
	\bibitem{SW12}
	Sheffield S.; Werner W.
	Conformal loop ensembles: the Markovian characterization and the loop-soup construction. \textit{Ann. Math.} \textbf{176} (2012) 1827--1917.
	
	\bibitem{SKZ07} Simmons J. J. H.; Kleban P.; Ziff R. M. Exact factorization of correlation functions in two-dimensional critical percolation. \textit{Phys. Rev. E} {\textbf 76} (2007) 41106.
	
	\bibitem{SKZ09} Simmons J. J. H.; Kleban P.; Ziff R. M. Factorization of percolation density correlation functions for clusters touching the sides of a rectangle. \textit{J. Stat. Mech. Theor. Exp.} \textbf{2009} (2009) P02067.
	
	\bibitem{Smi01}
	Smirnov S.
	Critical percolation in the plane: conformal invariance, Cardy's formula, scaling limits. \textit{C. R. Acad. Sci. Paris S\'{e}r. I Math.} \textbf{333} (2001) 239--244.
	
	\bibitem{sa-book} Stauffer D.; Aharony A. \textit{Introduction To Percolation Theory}. Second Edition. Taylor \& Francis, London, 1992.
	
	\bibitem{ST19} Steiff J. E.; Tykesson J. Generalized Divide and Color Models. \textit{ALEA Lat. Am. J. Probab. Math. Stat.} \textbf{16} (2019) 899--955.
	
	\bibitem{VJS12} Vasseur R.; Jacobsen J. L.; Saleur H. Logarithmic observables in critical percolation. \textit{J. Stat. Mech.} (2012) L07001.
	
	\bibitem{ZSK11} Ziff R. M.; Simmons J. J. H.; Kleban P. Factorization of correlations in two-dimensional percolation on the plane and torus. \textit{J. Phys. A: Math. Theor.} \textbf{44} (2011) 065002.
	
\end{thebibliography}
\end{document}